    \newcommand{\href}[2]{#2}
\newif\ifabstract
\newif\iffull
\newtoks\magicAppendix
\newtoks\magictoks
\newif\iflater
\long\def\later#1{\magictoks={#1}%
  \edef\magictodo{\noexpand\magicAppendix={\the\magicAppendix \par
    \the\magictoks%
  }}
  \magictodo}
\long\def\both#1{\magictoks={#1}%
  \edef\magictodo{\noexpand\magicAppendix={\the\magicAppendix \par
    \noexpand\setcounter{theorem-preserve}{\noexpand\arabic{theorem}}%
    \noexpand\setcounter{theorem}{\arabic{theorem}}%
    \noexpand\setcounter{section-preserve}{\noexpand\arabic{section}}%
    \noexpand\setcounter{section}{\arabic{section}}%
	\noexpand\let\noexpand\oldsection=\noexpand\thesection
	\noexpand\def\noexpand\thesection{\thesection}
	\noexpand\let\noexpand\oldlabel=\noexpand\label
	\noexpand\let\noexpand\label=\noexpand\blank
    \the\magictoks%
    \noexpand\setcounter{theorem}{\noexpand\arabic{theorem-preserve}}%
    \noexpand\setcounter{section}{\noexpand\arabic{section-preserve}}%
	\noexpand\let\noexpand\thesection=\noexpand\oldsection
	\noexpand\let\noexpand\label=\noexpand\oldlabel
  }}
  \magictodo
  \the\magictoks}
\long\def\later#1{#1}
\long\def\both#1{#1}
\long\def\magicappendix{
	\latertrue%
	\the\magicAppendix%
}
\theoremstyle{definition}
\newtheorem{theorem}{Theorem}
\newtheorem{lemma}{Lemma}
\newtheorem*{definition}{Definition}
\newcommand{\ta}{\tilde{\alpha}}
\newcommand{\tb}{\tilde{\beta}}
\newcommand{\tg}{\tilde{\gamma}}
\newcommand{\setr}[2]{\left\{\ #1 \ \left|\ #2 \right. \ \right\}}
\newcommand{\pred}{\mathrm{pred}}
\newcommand{\Z}{\mathbb{Z}}
\newcommand{\N}{\mathbb{N}}
\newcommand{\pfunc}[3]{#1 : #2 \dashrightarrow #3 }
\newcommand{\strength}{{\rm str}}
\newcommand{\dom}{{\rm dom} \;}
\newcommand{\res}[1]{\textrm{res}(#1)}
\newcommand{\termasm}[1]{\mathcal{A}_{\Box}[\mathcal{#1}]}
\newcommand{\prodasm}[1]{\mathcal{A}[\mathcal{#1}]}
\newcommand{\calT}{\mathcal{T}}
\newcommand{\lab}{{\rm label}}
\newcommand{\REPL}{\mathsf{REPR}}
\newcommand{\frakC}{\mathfrak{C}}
\newcommand{\tlatent}{\texttt{latent} }
\newcommand{\ton}{\texttt{on} }
\begin{document}

\title{Signal Transmission Across Tile Assemblies: 3D Static Tiles Simulate Active Self-Assembly by 2D Signal-Passing Tiles}

\author{
  Tyler Fochtman%
    \thanks{Department of Computer Science and Computer Engineering, University of Arkansas,
        \protect\url{tfochtma@email.uark.edu}.
        Supported in part by National Science Foundation Grant CCF-1117672.}
\and
  Jacob Hendricks%
    \thanks{Department of Computer Science and Computer Engineering, University of Arkansas,
        \protect\url{jhendric@uark.edu}.
        Supported in part by National Science Foundation Grant CCF-1117672.}
\and
  Jennifer E. Padilla
    \thanks{Dept of Chem, New York U,
    \protect\url{jp164@nyu.edu} This author's research was supported by National Science Foundation Grant CCF-1117210.}
\and
  Matthew J. Patitz%
    \thanks{Department of Computer Science and Computer Engineering, University of Arkansas,
      \protect\url{patitz@uark.edu}.
      Supported in part by National Science Foundation Grant CCF-1117672.}
\and
  Trent A. Rogers%
    \thanks{Department of Mathematical Sciences, University of Arkansas,
        \protect\url{tar003@email.uark.edu}.
        Supported in part by National Science Foundation Grant CCF-1117672.}
}
\date{}

\maketitle

\begin{abstract}
The 2-Handed Assembly Model (2HAM) is a tile-based self-assembly model in which, typically beginning from single tiles, arbitrarily large aggregations of static tiles combine in pairs to form structures.  The Signal-passing Tile Assembly Model (STAM) is an extension of the 2HAM in which the tiles are dynamically changing components which are able to alter their binding domains as they bind together.
For our first result, we demonstrate useful techniques and transformations for converting an arbitrarily complex STAM$^+$ tile set into an STAM$^+$ tile set where every tile has a constant, low amount of complexity, in terms of the number and types of ``signals'' they can send, with a trade off in scale factor.

  Using these simplifications, we prove that for each temperature $\tau>1$ there exists a 3D tile set in the 2HAM which is intrinsically universal for the class of all 2D STAM$^+$ systems at temperature $\tau$  (where the STAM$^+$ does not make use of the STAM's power of glue deactivation and assembly breaking, as the tile components of the 2HAM are static and unable to change or break bonds). This means that there is a single tile set $U$ in the 3D 2HAM which can, for an arbitrarily complex STAM$^+$ system $S$, be configured with a single input configuration which causes $U$ to exactly simulate $S$ at a scale factor dependent upon $S$.  Furthermore, this simulation uses only two planes of the third dimension. This implies that there exists a 3D tile set at temperature $2$ in the 2HAM which is intrinsically universal for the class of all 2D STAM$^+$ systems at temperature $1$. Moreover, we show that for each temperature $\tau>1$ there exists an STAM$^+$ tile set which is intrinsically universal for the class of all 2D STAM$^+$ systems at temperature $\tau$, including the case where $\tau = 1$.

While the simulation results are of more theoretical interest, showing the power of static tiles to simulate dynamic tiles when given one extra plane in 3D, the simplification results are of more practical interest for the experimental implementation of STAM tiles, since it provides potentially useful strategies for developing powerful STAM systems while keeping the complexity of individual tiles low, thus making them easier to physically implement.
\end{abstract}

%
\section{Introduction}\label{sec:intro}

Self-assembling systems are those in which large, disorganized collections of relatively simple components autonomously, without external guidance, combine to form organized structures. Self assembly drives the formation of a vast multitude of naturally forming structures, across a wide range of sizes and complexities (from the crystalline structure of snowflakes to complex biological structures such as viruses).  Recognizing the immense power and potential of self-assembly to manufacture structures with molecular precision, researchers have been pursuing the creation and study of artificial self-assembling systems.  This research has led to the steadily increasing sophistication of both the theoretical models (from the Tile Assembly Model (TAM) \cite{Winf98}, to the 2-Handed Assembly Model (2HAM) \cite{AGKS05g,DDFIRSS07}, and many others \cite{DDFIRSS07,SRTSARE,BeckerRR06,ChandranGR09,GeoTiles}) as well as experimentally produced building blocks and systems (a mere few of which include \cite{han2013dna,ke2012three,pinheiro2011challenges,C2CC37227D,rothemund2004algorithmic,NBlocks}).  While a number of models exist for passive self-assembly, as can be seen above, research into modeling active self-assembly is just beginning \cite{Nubots, Signals}.  Unlike passive self-assembly where structures bind and remain in one state, active self-assembly allows for structures to bind and then change state.

A newly developed model, the Signal-passing Tile Assembly Model (STAM) \cite{Signals}, is based upon the 2HAM but with a powerful and important difference.  Tiles in the aTAM and 2HAM are static, unchanging building blocks which can be thought of as analogous to write-once memory, where a location can change from empty to a particular value once and then never change again.  Instead, the tiles of the STAM each have the ability to undergo some bounded number of transformations as they bind to an assembly and while they are connected.  Each transformation is initiated by the binding event of a tile's glue, and consists of some other glue on that tile being turned either ``on'' or ``off''.  By chaining together sequences of such events which propagate across the tiles of an assembly, it is possible to send ``signals'' which allow the assembly to adapt during growth.  Since the number of transitions that any glue can make is bounded, this doesn't provide for ``fully reusable'' memory, but even with the limited reuse it has been shown that the STAM is more powerful than static models such as the aTAM and 2HAM (in 2D), for instance being able to strictly self-assemble the Sierpinski triangle \cite{Signals}.  A very important feature of the STAM is its asynchronous nature, meaning that there is no timeframe during which signals are guaranteed to fully propagate, and no guaranteed ordering to the arrival of multiple signals.  Besides providing a useful theoretical framework of asynchronous behavior, the design of the STAM was carefully aligned to the physical reality of implementation by DNA tiles using cascades of strand-displacement.  Capabilities in this area are improving, and now include the linear transmission of signals, where one glue binding event can activate one other glue on a DNA tile \cite{Jennifer}.

Although the STAM is intended to provide both a powerful theoretical framework and a solid basis for representing possible physical implementations, often those two goals are at odds.  In fact, in the STAM it is possible to define tiles which have arbitrary \emph{signal complexity} in terms of the numbers of glues that a tile may have on any given side and the number of signals that each tile can initiate.  Clearly, as the signal complexity of tiles increase, the ease of making these tiles in the laboratory diminishes.  Therefore, in this paper our first set of results provide a variety of methods for simplifying the tiles in STAM systems.  Besides reducing just the general signal complexity of tiles, we also seek to reduce and/or remove certain patterns of signals which may be more difficult to build into DNA-based tiles, namely \emph{fan-out} (which occurs when a single signal must split into multiple paths and have multiple destinations), \emph{fan-in} (which occurs when multiple signals must converge and join into one path to arrive at a single glue), and \emph{mutual activation} (which occurs when both of the glues participating in a particular binding event initiate their own signals).  By trading signal complexity for tile complexity and scale factor, we show how to use some simple primitive substitutions to reduce STAM tile sets to those with much simpler tiles.  Note that while in the general STAM it is possible for signals to turn glues both ``on'' and ``off'', our results pertain only to systems which turn glues ``on'' (which we call \emph{STAM$^+$} systems).

In particular, we show that the tile set for any temperature 1 STAM$^+$ system, with tiles of arbitrary complexity, can be converted into a temperature 1 STAM$^+$ system with a tile set where no tile has greater than 2 signals and either fan-out or mutual activation are completely eliminated.   We show that any temperature 2 STAM$^+$ system can be converted into a temperature 2 STAM$^+$ system where no tile has greater than 1 signal and both fan-out and mutual activation are eliminated.  Importantly, while both conversions have a worst case scale factor of $|T^2|$, where $T$ is the tile set of the original system, and worst case tile complexity of $|T^2|$, those bounds are required for the extremely unrealistic case where \emph{every} glue is on \emph{every} edge of some tile and also sends signals to \emph{every} glue on \emph{every} side of that tile.  Converting from a more realistic tile set yields factors which are on the order of the square of the maximum signal complexity for each side of a tile, which is typically much smaller.  Further, the techniques used to reduce signal complexity and remove fan-out and mutual activation are likely to be useful in the original design of tile sets rather than just as brute force conversions of completed tile sets.

We next consider the topic of intrinsic universality, which was initially developed to aid in the study of cellular automata \cite{DelormeMOT11,DelormeMOT11a}.  The notion of intrinsic universality was designed to capture a strong notion of simulation, in which one particular automaton is capable of simulating the \emph{behavior} of any automaton within a class of automata.
Furthermore, to simulate the behavior of another automaton, the simulating automaton must evolve in such a way that a translated rescaling (rescaled not only with respect to rectangular blocks of cells, but also with respect to time) of the simulator can be mapped to a configuration of the simulated automaton. The specific rescaling depends on the simulated automaton and gives rise to a global rule such that each step of the simulated automaton's evolution is mirrored by the simulating automaton, and vice versa via the inverse of the rule.

In this way, it is said that the simulator captures the dynamics of the simulated system, acting exactly like it, modulo scaling.  This is in contrast to a computational simulation, for example when a general purpose digital computer runs a program to simulate a cellular automata while the processor's components don't actually arrange themselves as, and behave like, a grid of cellular automata.  In \cite{IUSA}, it was shown that the aTAM is intrinsically universal, which means that there is a single tile set $U$ such that, for any aTAM tile assembly system $\mathcal{T}$ (of any temperature), the tiles of $U$ can be arranged into a seed structure dependent upon $\mathcal{T}$ so that the resulting system (at temperature $2$), using only the tiles from $U$, will faithfully simulate the behaviors of $\mathcal{T}$.  In contrast, in \cite{2HAMIU} it was shown that no such tile set exists for the 2HAM since, for every temperature, there is a 2HAM system which cannot be simulated by any system operating at a lower temperature.  Thus no tile set is sufficient to simulate 2HAM systems of arbitrary temperature.

For our main result, we show that there is a 3D 2HAM tile set $U$ which is intrinsically universal (IU) for the class $\frakC$ of all STAM$^+$ systems at temperature 1 and 2.  For every $\mathcal{T} \in \frakC$, a single input supertile can be created, and using just copies of that input supertile and the tiles from $U$, at temperature 2 the resulting system will faithfully simulate $\mathcal{T}$.  Furthermore, the simulating system will use only 2 planes of the third dimension.  (The signal tile set simplification results are integral in the construction for this result, especially in allowing it to use only 2 planes.)  This result is noteworthy especially because it shows that the dynamic behavior of signal tiles (excluding glue deactivation) can be \emph{fully duplicated} by static tile systems which are allowed to ``barely'' use three dimensions.  Furthermore, for every temperature $\tau>1$ there exists a 3D 2HAM tile set which can simulate the class of all STAM$^+$ systems at temperature $\tau$.

\section{Preliminaries}\label{sec:prelims}

Here we provide definitions and terms used in this paper.  See also \cite{SignalsArxiv} for a more detailed definition of the STAM.

\subsection{Informal definition of the 2HAM}

The 2HAM \cite{AGKS05g,DDFIRSS07} is a generalization of the abstract Tile Assembly Model (aTAM) \cite{Winf98} in that it allows for two assemblies, both possibly consisting of more than one tile, to attach to each other. Since we must allow that the assemblies might require translation before they can bind, we define a \emph{supertile} to be the set of all translations of a $\tau$-stable assembly, and speak of the attachment of supertiles to each other, modeling that the assemblies attach, if possible, after appropriate translation.
We now give a brief, informal, sketch of the $d$-dimensional 2HAM, for $d \in \{2,3\}$, which is normally defined as a 2D model but which we extend to 3D as well, in the natural and intuitive way.

A \emph{tile type} is a unit square if $d=2$, and cube if $d=3$, with each side having a \emph{glue} consisting of a \emph{label} (a finite string) and \emph{strength} (a non-negative integer).   We assume a finite set $T$ of tile types, but an infinite number of copies of each tile type, each copy referred to as a \emph{tile}.
A \emph{supertile} is (the set of all translations of) a positioning of tiles on the integer lattice $\Z^d$.  Two adjacent tiles in a supertile \emph{interact} if the glues on their abutting sides are equal and have positive strength.
Each supertile induces a \emph{binding graph}, a grid graph whose vertices are tiles, with an edge between two tiles if they interact.
The supertile is \emph{$\tau$-stable} if every cut of its binding graph has strength at least $\tau$, where the weight of an edge is the strength of the glue it represents.
That is, the supertile is stable if at least energy $\tau$ is required to separate the supertile into two parts.
A 2HAM \emph{tile assembly system} (TAS) is a pair $\calT = (T,\tau)$, where $T$ is a finite tile set and $\tau$ is the \emph{temperature}, usually 1 or 2. (Note that this is considered the ``default'' type of 2HAM system, while a system can also be defined as a triple $(T,S,\tau)$, where $S$ is the \emph{initial configuration} which in the default case is just infinite copies of all tiles from $T$, but in other cases can additionally or instead consist of copies of pre-formed supertiles.)
Given a TAS $\calT=(T,\tau)$, a supertile is \emph{producible}, written as $\alpha \in \prodasm{T}$, if either it is a single tile from $T$, or it is the $\tau$-stable result of translating two producible assemblies without overlap. Note that if $d=3$, or if $d=2$ but it is explicitly mentioned that \emph{planarity} is to be preserved, it must be possible for one of the assemblies to start infinitely far from the other and by merely translating in $d$ dimensions arrive into a position such that the combination of the two is $\tau$-stable, without ever requiring overlap.  This prevents, for example, binding on the interior of a region completely enclosed by a supertile.
A supertile $\alpha$ is \emph{terminal}, written as $\alpha \in \termasm{T}$, if for every producible supertile $\beta$, $\alpha$ and $\beta$ cannot be $\tau$-stably attached.
A TAS is \emph{directed} if it has only one terminal, producible supertile.

\iffull
\subsection{Formal definition of the $d$-dimensional 2HAM}
\fi
\ifabstract
\later{
\section{Formal Definition of the $d$-Dimensional 2HAM}
}
\fi
\later{
\label{def:2ham_formal}
We now formally define the 2HAM in $d \in \{2,3\}$ dimensions.

We work in the $d$-dimensional discrete space $\Z^d$. Define the set
$U_d$ to be the set of all
\emph{unit vectors} in $\mathbb{Z}^d$ (i.e. vectors of length $1$ in $\mathbb{Z}^d$).
We also sometimes refer to these vectors by the directions north, east, south, west, up, and down ($N$, $E$, $S$, $W$, $U$, $D$).
All \emph{graphs} in this paper are undirected.
A \emph{grid graph} is a graph $G =
(V,E)$ in which $V \subseteq \Z^d$ and every edge
$\{\vec{a},\vec{b}\} \in E$ has the property that $\vec{a} - \vec{b} \in U_d$.

Intuitively, a tile type $t$ is a unit square if $d=2$ and a unit cube if $d=3$ that can be
translated, but not rotated, having a well-defined ``side
$\vec{u}$'' for each $\vec{u} \in U_d$. Each side $\vec{u}$ of $t$
has a ``glue'' with ``label'' $\textmd{label}_t(\vec{u})$--a string
over some fixed alphabet--and ``strength''
$\textmd{str}_t(\vec{u})$--a nonnegative integer--specified by its type
$t$. Two tiles $t$ and $t'$ that are placed at the points $\vec{a}$
and $\vec{a}+\vec{u}$ respectively, \emph{bind} with \emph{strength}
$\textmd{str}_t\left(\vec{u}\right)$ if and only if
$\left(\textmd{label}_t\left(\vec{u}\right),\textmd{str}_t\left(\vec{u}\right)\right)
=
\left(\textmd{label}_{t'}\left(-\vec{u}\right),\textmd{str}_{t'}\left(-\vec{u}\right)\right)$.

In the subsequent definitions, given two partial functions $f,g$, we write $f(x) = g(x)$ if~$f$ and~$g$ are both defined and equal on~$x$, or if~$f$ and~$g$ are both undefined on $x$.

Fix a finite set $T$ of tile types.
A $T$-\emph{assembly}, sometimes denoted simply as an \emph{assembly} when $T$ is clear from the context, is a partial
function $\pfunc{\alpha}{\Z^d}{T}$ defined on at least one input, with points $\vec{x}\in\Z^d$ at
which $\alpha(\vec{x})$ is undefined interpreted to be empty space,
so that $\dom \alpha$ is the set of points with tiles.
We write $|\alpha|$ to denote $|\dom \alpha|$, and we say $\alpha$ is
\emph{finite} if $|\alpha|$ is finite. For assemblies $\alpha$
and $\alpha'$, we say that $\alpha$ is a \emph{subassembly} of
$\alpha'$, and write $\alpha \sqsubseteq \alpha'$, if $\dom \alpha
\subseteq \dom \alpha'$ and $\alpha(\vec{x}) = \alpha'(\vec{x})$ for
all $x \in \dom \alpha$.

Two assemblies $\alpha$ and $\beta$ are \emph{disjoint} if $\dom \alpha \cap \dom \beta = \emptyset.$
For two assemblies $\alpha$ and $\beta$, define the \emph{union} $\alpha \cup \beta$ to be the assembly defined for all $\vec{x}\in\Z^d$ by $(\alpha \cup \beta)(\vec{x}) = \alpha(\vec{x})$ if $\alpha(\vec{x})$ is defined, and $(\alpha \cup \beta)(\vec{x}) = \beta(\vec{x})$ otherwise. Say that this union is \emph{disjoint} if $\alpha$ and $\beta$ are disjoint.

The \emph{binding graph of} an assembly $\alpha$ is the grid graph
$G_\alpha = (V, E )$, where $V =
\dom{\alpha}$, and $\{\vec{m}, \vec{n}\} \in E$ if and only if (1)
$\vec{m} - \vec{n} \in U_d$, (2)
$\lab_{\alpha(\vec{m})}\left(\vec{n} - \vec{m}\right) =
\lab_{\alpha(\vec{n})}\left(\vec{m} - \vec{n}\right)$, and (3)
$\strength_{\alpha(\vec{m})}\left(\vec{n} -\vec{m}\right) > 0$.
Given $\tau \in \mathbb{N}$, an
assembly is $\tau$-\emph{stable} (or simply \emph{stable} if $\tau$ is understood from context), if it
cannot be broken up into smaller assemblies without breaking bonds
of total strength at least $\tau$; i.e., if every cut of $G_\alpha$
has weight at least $\tau$, where the weight of an edge is the strength of the glue it represents. In contrast to the model of Wang tiling, the nonnegativity of the strength function implies that glue mismatches between adjacent assemblies do not prevent them from binding, so long as sufficient binding strength is received from the (other) adjacent sides of the tiles at which the glues match.

For assemblies $\alpha,\beta:\Z^d \dashrightarrow T$ and $\vec{u} \in \Z^d$, we write $\alpha+\vec{u}$ to denote the assembly defined for all $\vec{x}\in\Z^d$ by $(\alpha+\vec{u})(\vec{x}) = \alpha(\vec{x}-\vec{u})$, and write $\alpha \simeq \beta$ if there exists $\vec{u}$ such that $\alpha + \vec{u} = \beta$; i.e., if $\alpha$ is a translation of $\beta$. Given two assemblies $\alpha,\beta:\Z^d \dashrightarrow T$, we say $\alpha$ is a \emph{subassembly} of $\beta$, and we write $\alpha \sqsubseteq \beta$, if $S_\alpha \subseteq S_\beta$ and, for all points $p \in S_\alpha$, $\alpha(p) = \beta(p)$.
Define the \emph{supertile} of $\alpha$ to be the set $\ta = \setr{\beta}{\alpha \simeq \beta}$.
A supertile $\ta$ is \emph{$\tau$-stable} (or simply \emph{stable}) if all of the assemblies it contains are $\tau$-stable; equivalently, $\ta$ is stable if it contains a stable assembly, since translation preserves the property of stability. Note also that the notation $|\ta| \equiv |\alpha|$ is the size of the supertile (i.e., number of tiles in the supertile) is well-defined, since translation preserves cardinality (and note in particular that even though we define $\ta$ as a set, $|\ta|$ does not denote the cardinality of this set, which is always $\aleph_0$).

For two supertiles $\ta$ and $\tb$, and temperature $\tau\in\N$, define the \emph{combination} set $C^\tau_{\ta,\tb}$ to be the set of all supertiles $\tg$ such that there exist $\alpha \in \ta$ and $\beta \in \tb$ such that (1) $\alpha$ and $\beta$ are disjoint (steric protection), (2) if $d=3$, or if $d=2$ and \emph{planarity} is explicitly being required, it must be possible to form $\gamma \equiv \alpha \cup \beta$ by the translation in $d$-dimensions of $\alpha$ beginning infinitely far from $\beta$ such that, at all times, $\alpha$ and $\beta$ are disjoint (3) $\gamma \equiv \alpha \cup \beta$ is $\tau$-stable, and (4) $\gamma \in \tg$. That is, $C^\tau_{\ta,\tb}$ is the set of all $\tau$-stable supertiles that can be obtained by ``attaching'' $\ta$ to $\tb$ stably, with $|C^\tau_{\ta,\tb}| > 1$ if there is more than one position at which $\beta$ could attach stably to $\alpha$.

It is common with seeded assembly to stipulate an infinite number of copies of each tile, but our definition allows for a finite number of tiles as well. Our definition also allows for the growth of infinite assemblies and finite assemblies to be captured by a single definition, similar to the definitions of \cite{jSSADST} for seeded assembly.

Given a set of tiles $T$, define a \emph{state} $S$ of $T$ to be a multiset of supertiles, or equivalently, $S$ is a function mapping supertiles of $T$ to $\N \cup \{\infty\}$, indicating the multiplicity of each supertile in the state. We therefore write $\ta \in S$ if and only if $S(\ta) > 0$.

A \emph{(two-handed) tile assembly system} (\emph{TAS}) is an ordered triple $\mathcal{T} = (T, S, \tau)$, where $T$ is a finite set of tile types, $S$ is the \emph{initial state}, and $\tau\in\N$ is the temperature. If not stated otherwise, assume that the initial state $S$ is defined $S(\ta) = \infty$ for all supertiles $\ta$ such that $|\ta|=1$, and $S(\tb) = 0$ for all other supertiles $\tb$. That is, $S$ is the state consisting of a countably infinite number of copies of each individual tile type from $T$, and no other supertiles. In such a case we write $\calT = (T,\tau)$ to indicate that $\calT$ uses the default initial state.  For notational convenience we sometimes describe $S$ as a set of supertiles, in which case we actually mean that  $S$ is a multiset of supertiles with infinite count of each supertile. We also assume that, in general, unless stated otherwise, the count for any supertile in the initial state is infinite.

Given a TAS $\calT=(T,S,\tau)$, define an \emph{assembly sequence} of $\calT$ to be a sequence of states $\vec{S} = (S_i \mid 0 \leq i < k)$ (where $k = \infty$ if $\vec{S}$ is an infinite assembly sequence), and $S_{i+1}$ is constrained based on $S_i$ in the following way: There exist supertiles $\ta,\tb,\tg$ such that (1) $\tg \in C^\tau_{\ta,\tb}$, (2) $S_{i+1}(\tg) = S_{i}(\tg) + 1$,\footnote{with the convention that $\infty = \infty + 1 = \infty - 1$} (3) if $\ta \neq \tb$, then $S_{i+1}(\ta) = S_{i}(\ta) - 1$, $S_{i+1}(\tb) = S_{i}(\tb) - 1$, otherwise if $\ta = \tb$, then $S_{i+1}(\ta) = S_{i}(\ta) - 2$, and (4) $S_{i+1}(\tilde{\omega}) = S_{i}(\tilde{\omega})$ for all $\tilde{\omega} \not\in \{\ta,\tb,\tg\}$.
That is, $S_{i+1}$ is obtained from $S_i$ by picking two supertiles from $S_i$ that can attach to each other, and attaching them, thereby decreasing the count of the two reactant supertiles and increasing the count of the product supertile. If $S_0 = S$, we say that $\vec{S}$ is \emph{nascent}.

Given an assembly sequence $\vec{S} = (S_i \mid 0 \leq i < k)$ of $\calT=(T,S,\tau)$ and a supertile $\tg \in S_i$ for some $i$, define the \emph{predecessors} of $\tg$ in $\vec{S}$ to be the multiset $\pred_{\vec{S}}(\tg) = \{\ta,\tb\}$ if $\ta,\tb \in S_{i-1}$ and $\ta$ and $\tb$ attached to create $\tg$ at step $i$ of the assembly sequence, and define $\pred_{\vec{S}}(\tg) = \{ \tg \}$ otherwise. Define the \emph{successor} of $\tg$ in $\vec{S}$ to be $\succ_{\vec{S}}(\tg)=\ta$ if $\tg$ is one of the predecessors of $\ta$ in $\vec{S}$, and define $\succ_{\vec{S}}(\tg)=\tg$ otherwise. A sequence of supertiles $\vec{\ta} = (\ta_i \mid 0 \leq i < k)$ is a \emph{supertile assembly sequence} of $\calT$ if there is an assembly sequence $\vec{S} = (S_i \mid 0 \leq i < k)$ of $\calT$ such that, for all $1 \leq i < k$, $\succ_{\vec{S}}(\ta_{i-1}) = \ta_i$, and $\vec{\ta}$ is \emph{nascent} if $\vec{S}$ is nascent.

The \emph{result} of a supertile assembly sequence $\vec{\ta}$ is the unique supertile $\res{\vec{\ta}}$ such that there exist an assembly $\alpha \in \res{\vec{\ta}}$ and, for each $0 \leq i < k$, assemblies $\alpha_i \in \ta_i$ such that $\dom{\alpha} = \bigcup_{0 \leq i < k}{\dom{\alpha_i}}$ and, for each $0 \leq i < k$, $\alpha_i \sqsubseteq \alpha$.  For all supertiles $\ta,\tb$, we write $\ta \to_\calT \tb$ (or $\ta \to \tb$ when $\calT$ is clear from context) to denote that there is a supertile assembly sequence $\vec{\ta} = ( \ta_i \mid 0 \leq i < k )$ such that $\ta_0 = \ta$ and $\res{\vec{\ta}} = \tb$. It can be shown using the techniques of \cite{Roth01} for seeded systems that for all two-handed tile assembly systems $\calT$ supplying an infinite number of each tile type, $\to_\calT$ is a transitive, reflexive relation on supertiles of $\calT$. We write $\ta \to_\calT^1 \tb$ ($\ta \to^1 \tb$) to denote an assembly sequence of length 1 from $\ta$ to $\tb$ and $\ta \to_\calT^{\leq 1} \tb$ ($\ta \to^{\leq 1} \tb$) to denote an assembly sequence of length 1 from $\ta$ to $\tb$ if $\ta \ne \tb$ and an assembly sequence of length 0 otherwise.

A supertile $\ta$ is \emph{producible}, and we write $\ta \in \prodasm{\calT}$, if it is the result of a nascent supertile assembly sequence. A supertile $\ta$ is \emph{terminal} if, for all producible supertiles $\tb$, $C^\tau_{\ta,\tb} = \emptyset$.\footnote{Note that a supertile $\ta$ could be non-terminal in the sense that there is a producible supertile $\tb$ such that $C^\tau_{\ta,\tb} \neq \emptyset$, yet it may not be possible to produce $\ta$ and $\tb$ simultaneously if some tile types are given finite initial counts, implying that $\ta$ cannot be ``grown'' despite being non-terminal. If the count of each tile type in the initial state is $\infty$, then all producible supertiles are producible from any state, and the concept of terminal becomes synonymous with ``not able to grow'', since it would always be possible to use the abundant supply of tiles to assemble $\tb$ alongside $\ta$ and then attach them.} Define $\termasm{\calT} \subseteq \prodasm{\calT}$ to be the set of terminal and producible supertiles of $\calT$. $\calT$ is \emph{directed} (a.k.a., \emph{deterministic}, \emph{confluent}) if $|\termasm{\calT}| = 1$.
} %

\subsection{Informal description of the STAM}

In the STAM, tiles are allowed to have sets of glues on each edge (as opposed to only one glue per side as in the TAM and 2HAM).  Tiles have an initial state in which each glue is either ``$\texttt{on}$'' or ``$\texttt{latent}$'' (i.e. can be switched $\texttt{on}$ later).  Tiles also each implement a transition function which is executed upon the binding of any glue on any edge of that tile.  The transition function specifies, for each glue $g$ on a tile, a set of glues (along with the sides on which those glues are located) and an action, or \emph{signal} which is \emph{fired} by $g$'s binding, for each glue in the set.  The actions specified may be to: 1. turn the glue $\texttt{on}$ (only valid if it is currently $\texttt{latent}$), or 2. turn the glue $\texttt{off}$ (valid if it is currently $\texttt{on}$ or $\texttt{latent}$).  This means that glues can only be $\texttt{on}$ once (although may remain so for an arbitrary amount of time or permanently), either by starting in that state or being switched $\texttt{on}$ from $\texttt{latent}$ (which we call \emph{activation}), and if they are ever switched to $\texttt{off}$ (called \emph{deactivation}) then no further transitions are allowed for that glue.  This essentially provides a single ``use'' of a glue and the signal sent by its binding.  Note that turning a glue $\texttt{off}$ breaks any bond that that glue may have formed with a neighboring tile. Also, since tile edges can have multiple active glues, when tile edges with multiple glues are adjacent, it is assumed that all matching glues in the $\texttt{on}$ state bind (for a total binding strength equal to the sum of the strengths of the individually bound glues).  The transition function defined for each tile type is allowed a unique set of output actions for the binding event of each glue along its edges, meaning that the binding of any particular glue on a tile's edge can initiate a set of actions to turn an arbitrary set of the glues on the sides of the same tile either $\texttt{on}$ or $\texttt{off}$.

As the STAM is an extension of the 2HAM, binding and breaking can occur between tiles contained in pairs of arbitrarily sized supertiles.  In order to allow for physical mechanisms which implement the transition functions of tiles but are arbitrarily slower or faster than the average rates of (super)tile attachments and detachments, rather than immediately enacting the outputs of transition functions, each output action is put into a set of ``pending actions'' which includes all actions which have not yet been enacted for that glue (since it is technically possible for more than one action to have been initiated, but not yet enacted, for a particular glue). Any event can be randomly selected from the set, regardless of the order of arrival in the set, and the ordering of either selecting some action from the set or the combination of two supertiles is also completely arbitrary.  This provides fully asynchronous timing between the initiation, or firing, of signals (i.e. the execution of the transition function which puts them in the pending set) and their execution (i.e. the changing of the state of the target glue), as an arbitrary number of supertile binding events may occur before any signal is executed from the pending set, and vice versa.  %

An STAM system consists of a set of tiles and a temperature value.  To define what is producible from such a system, we use a recursive definition of producible assemblies which starts with the initial tiles and then contains any supertiles which can be formed by doing the following to any producible assembly:  1. executing any entry from the pending actions of any one glue within a tile within that supertile (and then that action is removed from the pending set), 2. binding with another supertile if they are able to form a $\tau$-stable supertile, or 3. breaking into $2$ separate supertiles along a cut whose total strength is $< \tau$.

The STAM, as formulated, is intended to provide a model based on experimentally plausible mechanisms for glue activation and deactivation. %
However, while the model allows for the placement of an arbitrary number of glues on each tile side and for each of them to signal an arbitrary number of glues on the same tile, this is currently limited in practice.  Therefore, each system can be defined to take into account a desired threshold for each of those parameters, not exceeding it for any given tile type, and so we have defined the notion of \emph{full-tile signal complexity} as the maximum number of signals on any tile in a set to capture the maximum complexity of any tile in a given set.

\begin{definition}
The \emph{full-tile signal complexity} of a given tile, $t$, is the total number of all signals sent by all glues on that tile.  The full-tile signal complexity of an entire tile set $T$ is simply the maximum full-tile signal complexity of all $t \in T$.
\end{definition}

We now define a restriction of the STAM which is used throughout this paper.

\begin{definition}
We define the \emph{STAM$^+$} to be the STAM restricted to using only glue activation, and no glue deactivation.  Similarly, we say an STAM$^+$ tile set is one which contains no defined glue deactivation transitions, and an STAM$^+$ system $\mathcal{T} = (T,\tau)$ is one in which $T$ is an STAM$^+$ tile set.
\end{definition}

As the main goal of this paper is to show that self-assembly by systems using active, signalling tiles can be simulated using the static, unchanging tiles of the 3D 2HAM, since they have no ability to break apart after forming $\tau$-stable structures, all of our results are confined to the STAM$^+$.

A detailed, technical definition of the STAM model is provided in \cite{SignalsArxiv}.

\later{
\section{Definitions for simulation}

In this section, we both informally and formally define what it means for one 2HAM or STAM TAS to ``simulate'' another 2HAM or STAM TAS.  Therefore, all tilesets may be either 2HAM or STAM tile sets, and supertiles refer to active supertiles when built from STAM tile sets.  %

\subsection{Informal definitions for simulation}

Let $\mathcal{U} = (U,S_U,\tau_U)$ be the system which is simulating the system $\mathcal{T} = (T,S_T,\tau_T)$. There must be some scale factor $c \in \mathbb{N}$ at which $\mathcal{U}$ simulates $\mathcal{T}$, and we define a \emph{representation function} $R$ which maps each $c \times c$ square (sub)assembly in $\mathcal{U}$ to a tile in $\mathcal{T}$ (or empty space if it is incomplete).  Each such $c \times c$ block is referred to as a \emph{macrotile}, since that square configuration of tiles from set $U$ represent a single tile from set $T$.  We say that $\mathcal{U}$ simulates $\mathcal{T}$ under representation function $R$ at scale $c$.

To properly simulate $\mathcal{T}$, $\mathcal{U}$ must have 1. \emph{equivalent productions}, meaning that every supertile producible in $\mathcal{T}$ can be mapped via $R$ to a supertile producible in $\mathcal{U}$, and vice versa, and 2. \emph{equivalent dynamics}, meaning that when any two supertiles $\alpha$ and $\beta$, which are producible in $\mathcal{T}$, can combine to form supertile $\gamma$, then there are supertiles producible in $\mathcal{U}$ which are equivalent to $\alpha$ and $\beta$ which can combine to form a supertile equivalent to $\gamma$, and vice versa.  Note that especially the formal definitions for equivalent dynamics include several technicalities related to the fact that multiple supertiles in $\mathcal{U}$ may map to a single supertile in $\mathcal{T}$, among other issues.  Please see \ifabstract \cite{Signals3DArxiv} \else Section~\ref{sec:defsSim} \fi for details.

We say that a tile set $U$ is \emph{intrinsically universal} for a class of tile assembly systems if, for every system in that class, a system can be created for which 1. $U$ is the tile set, 2. there is some initial configuration which consists of supertiles created from tiles in $U$, where those ``input'' supertiles are constructed to encode information about the system being simulated, and perhaps also singleton tiles from $U$, 3. a representation function which maps macrotiles in the simulator to tiles in the simulated system, and 4. under that representation function, the simulator has equivalent productions and equivalent dynamics to the simulated system.  Essentially, there is one tile set which can simulate any system in the class, using only custom configured input supertiles.

For a tileset $T$, let $A^T$ and $\tilde{A}^T$ denote the set of all assemblies over $T$ and all supertiles over $T$ respectively. Let $A^T_{< \infty}$ and $\tilde{A}^T_{< \infty}$ denote the set of all finite assemblies over $T$ and all finite supertiles over $T$ respectively.

In what follows, let $U$ be a $d$-dimensional tile set and let $m \in \mathbb{Z}^+$. An $m$-\emph{block assembly}, or {\em macrotile},  over tile set $U$ is a partial function $\gamma : \mathbb{Z}^d_m \dashrightarrow U$, where $\mathbb{Z}_m = \{ 0,1,\ldots m-1 \}$.  Note that the dimension of the $m$-block is implicitly defined by $U$.  Let $B^U_m$ be the set of all $m$-block assemblies over $U$. The $m$-block with no domain is said to be $\emph{empty}$.  For an arbitrary assembly $\alpha \in A^U$ and $(x_0,\ldots x_{d-1})\in\Z^d$, define $\alpha^m_{x_0,\ldots x_{d-1}}$ to be the $m$-block supertile defined by $\alpha^m_{x_0,\ldots, x_{d-1}}(i_0,\ldots, i_{d-1}) = \alpha(mx_0+i_0,\ldots, mx_{d-1}+i_{d-1})$ for $0 \leq i_0, \ldots, i_{d-1}< m$.

\subsection{Formal definitions for simulation}\label{sec:defsSim}

For some tile set $T$ of dimension $d' \le d$, where $d \in \{2,3\}$ and $d' \in \{d-1,d\}$, and a partial function $R: B^{U}_m \dashrightarrow T$, define the \emph{assembly representation function} $R^*: A^{U} \dashrightarrow A^T$ such that $R^*(\alpha) = \beta$ if and only if $\beta(x_0,...,x_{d'-1}) = R(\alpha^m_{x_0,...,x_{d-1}})$ for all $(x_0,...,x_{d-1}) \in \mathbb{Z}^{d-1}$.
Let $f: \Z^{d} \rightarrow \Z^{d'}$, where $f(x_0,\ldots,x_{d-1}) = (x_0,\ldots,x_{d-1})$ if $d=d'$ and $f(x_0,\ldots,x_{d-1}) = (x_0,\ldots,x_{d'-1},0)$ if $d' = d-1$, and undefined otherwise.
    If either $U$ or $T$ is an STAM tile set, note that the assembly representation function does not consider the pending sets $\Pi$ of any constituent tiles, and thus treats active supertiles with different pending sets but otherwise identical as identical supertiles.
    $\alpha$ is said to map \emph{cleanly} to $\beta$ under $R^*$ if
    for all non empty blocks $\alpha'^m_{x_0,\ldots, x_{d-1}}$, $(f(x_0,\ldots,x_{d-1})+f(u_0,\ldots,u_{d-1})) \in \dom \alpha$ for some $u_0,\ldots, u_{d-1} \in \{-1,0,1\}$ such that $u_0^2 + \cdots + u_{d-1}^2 \leq 1$, or if $\alpha'$ has at most one non-empty $m$-block $\alpha^m_{0, \ldots, 0}$.
In other words, $\alpha'$ may have tiles on supertile blocks representing empty space in $\alpha$, but only if that position is adjacent to a tile in $\alpha$.  We call such growth ``around the edges'' of $\alpha'$ \emph{fuzz} and thus restrict it to be adjacent to only valid macrotiles, but not diagonally adjacent (i.e.\ we do not permit \emph{diagonal fuzz}).

For a given assembly representation function $R^*$, define the \emph{supertile representation function} $\tilde{R}: \tilde{A}^{U} \dashrightarrow \mathcal{P}(A^T)$ such that $\tilde{R}(\ta) = \{R^*(\alpha) | \alpha \in \ta \}$. $\ta$ is said to \emph{map cleanly} to $\tilde{R}(\ta)$ if $\tilde{R}(\ta)\in \tilde{A}^T$ and $\alpha$ maps cleanly to $R^*(\alpha)$ for all $\alpha \in \ta$.
In the following definitions, let $\mathcal{T} = \left(T,S,\tau\right)$  be a 2HAM or STAM TAS and, for some initial configuration $S_{\mathcal{T}}$, that depends on $\mathcal{T}$, let $\mathcal{U} = \left(U,S_{\mathcal{T}},\tau'\right)$ be a 2HAM or STAM TAS, and let $R$ be an $m$-block representation function $R: B^U_m \dashrightarrow T$.

\begin{definition}\label{scott-defn:alt-equiv-prod}
We say that $\mathcal{U}$ and $\mathcal{T}$ have \emph{equivalent productions} (at scale factor $m$), and we write $\mathcal{U} \Leftrightarrow_R \mathcal{T}$ if the following conditions hold:
\begin{enumerate}
    \item \label{scott-defn:simulate:equiv_prod_a}$\left\{\tilde{R}(\ta) | \ta \in \prodasm{\mathcal{U}}\right\} = \prodasm{\mathcal{T}}$.
    \item \label{scott-defn:simulate:equiv_prod_c}$\left\{\tilde{R}(\ta) | \ta \in \termasm{\mathcal{U}}\right\} = \termasm{\mathcal{T}}$.
    \item \label{scott-defn:simulate:equiv_prod_b}For all $\ta \in \prodasm{\mathcal{U}}$, $\ta$ maps cleanly to $\tilde{R}(\ta)$
\end{enumerate}
\end{definition}

\begin{definition}\label{scott-defn:alt-equiv-dynamic-t-to-s}
We say that $\mathcal{T}$ \emph{follows} $\mathcal{U}$ (at scale factor $m$), and we write $\mathcal{T} \dashv_R \mathcal{U}$ if, for any $\ta, \tb \in \prodasm{\mathcal{U}}$ such that $\ta \rightarrow_{\mathcal{U}}^1 \tb$, $\tilde{R}(\ta) \rightarrow_\mathcal{T}^{\leq 1} \tilde{R}\left(\tb\right)$.
\end{definition}

\begin{definition}\label{scott-defn:alt-equiv-dyanmic-s-to-t-strong}
We say that $\mathcal{U}$ \emph{strongly models} $\mathcal{T}$ (at scale factor $m$), and we write $\mathcal{U} \models^+_R \mathcal{T}$ if for any $\ta$, $\tb \in \prodasm{\mathcal{T}}$ such that $\tilde{\gamma} \in C^{\tau}_{\ta , \tb}$, then for all $\ta', \tb' \in \prodasm{\mathcal{U}}$ such that $\tilde{R}(\ta')=\ta$ and $\tilde{R}\left(\tb'\right)=\tb$, it must be that there exist $\ta'', \tb'', \tilde{\gamma}' \in \prodasm{\mathcal{U}}$, such that $\ta' \rightarrow_{\mathcal{U}} \ta''$, $\tb' \rightarrow_{\mathcal{U}} \tb''$, $\tilde{R}(\ta'')=\ta$, $\tilde{R}\left(\tb''\right)=\tb$, $\tilde{R}(\tilde{\gamma}')=\tilde{\gamma}$, and $\tilde{\gamma}' \in C^{\tau'}_{\ta'', \tb''}$.
\end{definition}

Note that \emph{strongly models} is in contrast to \emph{weakly models} as defined in \cite{2HAMIU}.

\begin{definition}\label{scott-defn:simulate}
Let $\mathcal{U} \Leftrightarrow_R \mathcal{T}$ and $\mathcal{T} \dashv_R \mathcal{U}$.  $\mathcal{U}$ \emph{simulates} $\mathcal{T}$ (at scale factor $m$) if $\mathcal{U} \models_R^+ \mathcal{T}$.
\end{definition}

Throughout this paper, we use the above definition of ``simulation'', which in \cite{2HAMIU} is referred to by the term ``strong simulation'', while they use the term ``simulation'' to refer to a weaker definition which make use of ``weakly models''.  Thus, here, for simulation we use the intuitively ``stronger'' notion of simulation where the simulator must in some sense more strictly model the simulated system.

\subsection{Intrinsic universality}

Let $\REPL$ denote the set of all $m$-block (or macrotile) representation functions (i.e., $m$-block supertile representation functions for some $m\in\Z^+$).
For some $d \in \{2,3\}$, let $\frakC$ be a class of $d$-dimensional tile assembly systems, and let $U$ be a $d'$-dimensional tile set for $d' \geq d$.
Note that every element of $\frakC$, $\REPL$, and $\mathcal{A}^U_{< \infty}$ is a finite object, hence can be represented in a suitable format for computation in some formal system such as Turing machines.
We say $U$ is \emph{intrinsically universal} for $\frakC$ \emph{at temperature} $\tau' \in \Z^+$ %
if there are computable functions $\mathcal{R}:\frakC \to \REPL$ and $\mathcal{S}:\frakC \to \left(A^U_{< \infty} \rightarrow \mathbb{N} \cup \{\infty\}\right)$ such that, for each $\mathcal{T} = (T,S,\tau) \in \frakC$, there is a constant $m\in\N$ such that, letting $R = \mathcal{R}(\mathcal{T})$, $S_\mathcal{T}=S(\mathcal{T})$, and $\mathcal{U}_\mathcal{T} = (U,S_\mathcal{T},\tau')$, $\mathcal{U}_\mathcal{T}$ simulates $\mathcal{T}$ at scale $m$ and using supertile representation function $R$.
That is, $\mathcal{R}(\mathcal{T})$ outputs a representation function that interprets macrotiles (or $m$-blocks) of $\mathcal{U}_\mathcal{T}$ as assemblies of $\mathcal{T}$, and $S(\mathcal{T})$ gives the initial state used to create the necessary macrotiles from $U$ to represent $\mathcal{T}$ subject to the constraint that no macrotile in $S_{\calT}$ can be larger than a single $m \times m$ square. We say that~$U$ is \emph{intrinsically universal} for $\frakC$ if it is intrinsically universal for $\frakC$ at some temperature $\tau'\in Z^+$.

} %

\section{Transforming STAM$^+$ Systems From Arbitrary to Bounded Signal Complexity}\label{sec:transform-to-simple-stam}

\ifabstract
\later{
\section{Details for Transforming STAM$^+$ Systems From Arbitrary to Bounded Signal Complexity}\label{sec:details-simple-stam}
}
\fi
In this section, we demonstrate methods for reducing the signal complexity of STAM$^+$ systems with $\tau = 1$ or $\tau > 1$ and results related to reducing signal complexity.  First, we define terms related to the complexity of STAM systems, and then state our results for signal complexity reduction.

\ifabstract We now provide informal definitions for fan-out and mutual activation.  For more rigorous definitions, see \cite{Signals3DArxiv}.
\begin{definition}
For an STAM system $\mathcal{T} = (T,\sigma, \tau)$, we say that $\mathcal{T}$ contains {\bf fan-out} iff
there exists a glue $g$ on a tile $t \in T$ such that whenever $g$ binds, it triggers the activation or deactivation of more than $1$ glue on $t$.
\end{definition}

\begin{definition}
For an STAM system $\mathcal{T} = (T,\sigma, \tau)$, we say that $\mathcal{T}$ contains {\bf mutual activation} iff
$\exists t_1, t_2 \in T$ with glue $g$ on adjacent edges of $t_1$ and $t_2$ such that whenever $t_1$ and $t_2$ bind by means of glue $g$, the binding of $g$ causes the activation or deactivation of other glues on both $t_1$ and $t_2$.
\end{definition}
\fi

\later{
We now define what it means for an STAM system to contain fan-out and mutual activation.
\begin{definition}
For an STAM system $\mathcal{T} = (T,\sigma, \tau)$, we say that $\mathcal{T}$ contains {\bf fan-out} iff
$\exists  t = (G, L, \delta, \Pi)\in T$, $a\in\Gamma$ and $d\in \{N, S, E, W\}$ such that $|\delta(d,a)| > 1$.
In this case, we also say that $(d,a)$ {\bf fans-out}.
\end{definition}

\begin{definition}
For an STAM system $\mathcal{T} = (T,\sigma, \tau)$, we say that $\mathcal{T}$ contains {\bf mutual activation} iff
$\exists t_1 = (G_1, L_1, \delta_1, \Pi_1), t_2 = (G_2, L_2, \delta_2, \Pi_2)\in T$, $a\in\Gamma$ and $d_1, d_2\in \{N, S, E, W\}$ such that $d_1 = -d_2$ and $|\delta_1(d_1,a)|\neq 0\text{ and }|\delta_2(d_2,a)|\neq 0$.
In this case, we also say that $(d_1,a)$ {\bf mutually activates} glues in $\delta_2(d,a)$.
\end{definition}

Intuitively, fan-out corresponds to a single glue firing multiple signals, and mutual activation occurs when the binding of a pair of glues causes signals to be fired on both of the tiles participating in the binding.

} %

\subsection{Impossibility of eliminating both fan-out and mutual activation at $\tau=1$}

\begin{wrapfigure}{r}{0.4\textwidth}
\begin{center}
\includegraphics[width=1.2in]{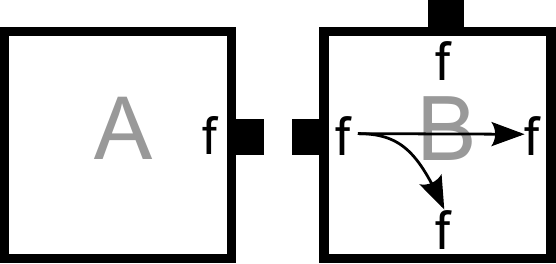}
\caption{An example of a tile set where fan-out and mutual activation cannot be completely removed. The glue $f$ on the west edge of tile type $B$ signals two other glues.}
\label{fig:noSplit}
\end{center}
\end{wrapfigure}
We now discuss the impossibility of completely eliminating both fan-out and mutual activation at temperature $1$. Consider the signal tiles in Figure~\ref{fig:noSplit} and let $\mathcal{T} = (T,1)$ be the STAM$^+$ system where $T$ consists of exactly those tiles.
Theorem~\ref{thm:fanOutEssential} shows that at temperature $1$, it is impossible to completely eliminate both fan-out and mutual activation. In other words, any STAM$^+$ simulation of $\mathcal{T}$ must contain some instance of either fan-out or mutual activation.
The intuitive idea is that the only mechanism for turning on glues is binding, and at temperature $1$ we cannot control when glues in the \texttt{on} state bind. Hence any binding pair of glues that triggers some other glue must do so by means of a sequence of glue bindings leading from the source of the signal to the signal to be turned \texttt{on}. Hence there must be paths to both of the triggered glues from the single originating glue where at some point a single binding event fires two signals. We will see that this is not the case at temperature $2$ since we can control glue binding through cooperation there.

\later{
\subsection{Proof of Lemma~\ref{lem:oneToTwo}}\label{sec:lemmaProof}

To prove Theorem~\ref{thm:fanOutEssential}, we first prove the following Lemma.

\begin{lemma}
\label{lem:oneToTwo}
At temperature $1$, any STAM$^+$ system $\mathcal{S}$ such that there exists $\alpha\in \prodasm{S}$ where a glue $g$ on some edge of $\alpha$ turns \texttt{on} two distinct glues $a$ and $b$ on edges of $\alpha$ contains
either fan-out or mutual activation.
\end{lemma}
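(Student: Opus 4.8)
The plan is to argue directly from the STAM$^+$ semantics that the single binding event of $g$ is the common source of both activations, and then to split into cases according to which of the two tiles participating in that binding fires each of the two signals. First I would record the two governing facts about the model: signals are fired only as the result of a binding event, and the transition function of a tile can act only on glues located on that same tile. Consequently, when $g$ ``turns on'' the two distinct glues $a$ and $b$ inside the producible supertile $\alpha$, there is a single binding event in which $g$ binds---say $g$ occupies side $d_1$ of a tile of type $t_1 = (G_1,L_1,\delta_1,\Pi_1)$ and the abutting side $d_2 = -d_1$ of a tile of type $t_2 = (G_2,L_2,\delta_2,\Pi_2)$---and every signal responsible for activating $a$ or $b$ lies in $\delta_1(d_1,g)\cup\delta_2(d_2,g)$, with the signals in $\delta_1(d_1,g)$ acting on $t_1$ and those in $\delta_2(d_2,g)$ acting on $t_2$.

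The key structural observation is that, since $a$ and $b$ are distinct glues, activating both requires at least two distinct signals, one targeting $a$ and one targeting $b$, and each of these belongs to $\delta_1(d_1,g)$ or to $\delta_2(d_2,g)$. This yields the dichotomy. In the first case both signals lie in the same set, say $\delta_1(d_1,g)$; then $|\delta_1(d_1,g)|\geq 2 > 1$, so the type $t_1$ with the glue $g$ on side $d_1$ witnesses \textbf{fan-out} by definition. In the second case the two signals are split between the tiles, so $|\delta_1(d_1,g)|\neq 0$ and $|\delta_2(d_2,g)|\neq 0$; since $d_1 = -d_2$, the pair of types $t_1,t_2$ together with the glue $g$ witnesses \textbf{mutual activation} by definition. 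In either case $\mathcal{S}$ contains fan-out or mutual activation, which completes the argument.

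The main obstacle---indeed the only delicate point---is making ``$g$ turns on $a$ and $b$'' precise, i.e. justifying that a \emph{single} binding event is responsible for both activations rather than two unrelated bindings of copies of $g$ occurring at different places in $\alpha$. If the hypothesis is read as ``the binding of $g$ directly fires the activation signals for $a$ and $b$,'' the reduction above is immediate. If instead it is read transitively (the binding of $g$ initiates cascades that eventually activate $a$ and $b$), I would first build the causal tree of binding events rooted at the binding of $g$, locate the lowest common ancestor binding event $e^\ast$ of the events that activate $a$ and $b$, and observe that $e^\ast$ must fire at least two distinct signals, one leading along each branch; applying the same two-case analysis to $e^\ast$ then again produces fan-out or mutual activation. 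I expect the bookkeeping in this causal-tree reduction (ensuring the two branches are genuinely distinct and that each activation has a well-defined originating binding) to be the one place requiring care; the underlying combinatorial dichotomy is otherwise immediate.
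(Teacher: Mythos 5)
Your proposal is correct and follows essentially the same route as the paper: the paper likewise reduces the transitive reading to a single divergence point by tracing the chains of binding events from the binding of $g$ to the activations of $a$ and $b$, locating the binding pair where the two chains split, and then applying the same two-case analysis (both fired signals on one tile gives fan-out; one on each tile gives mutual activation). The ``causal-tree / lowest common ancestor'' bookkeeping you flag as delicate is exactly the step the paper handles with its backward-induction argument that, at temperature $1$, each activation must be reachable from $g$ by a sequence of binding pairs.
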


\begin{proof}
First, we make some general observations about signal passing at temperature $1$. Notice that if a glue $g$ turns on a glue $a$, then
it must do so by a sequence of binding pairs (where a ``binding pair'' refers to the two glues that are binding together) such that the first pair in the sequence contains $g$ and the binding event of the last pair in the sequence triggers $a$. To see this, consider the fact that since $a$ is \texttt{latent} initially, a binding event must occur which initiates a signal for $a$ to turn on. The pair of glues of this binding event is the last pair in our sequence. If this pair includes $g$, the observation holds. If not, then note that the glues of this
binding pair cannot both be in the \texttt{on} state. Otherwise, assuming that $a$ does indeed get signaled to turned \ton, binding of these glues could turn $a$ \texttt{on} prior to the binding of $g$ with some other glue. Hence one of these glues must be \texttt{latent} initially and some binding event of
a pair of glues must trigger it. Now if {\em this} pair of glues includes $g$, we are done. Otherwise, we continue this argument until we reach a pair of binding glues that finally includes $g$.
This shows that the mechanism by which a glue $g$ triggers glues at temperature $1$ is a sequence of binding glues.

Another observation that we make is that if a glue $g$ turns on two distinct glues $a$ and $b$, then it must do so by means of fan-out or mutual activation.
Consider such a $g$. Then by the above argument, there exists a sequence of binding pairs such that the first pair in the sequence contains $g$ and the binding event of the last pair in the sequence triggers $a$.
Similarly, there exists such a sequence of binding pairs for $b$.
Since both of these sequences originate from the same binding pair (the one including $g$), these sequences cannot be disjoint. However, since $a$ and $b$ are distinct glues,
there must distinct glues $g_1$ and $g_2$ and a binding pair $(g_0, g_0^*)$ such that the binding of $g_0$ and $g_0^*$ fires $g_1$ and $g_2$ (where ``fires'' means sends activation signals to) without intermediate binding events.

Therefore, if a glue $g$ turns on two distinct glues $a$ and $b$ there must be at least one binding pair $(g_0, g_0^*)$ that fires $g_1$ and $g_2$ where $g_1$ and $g_2$ are distinct glues. To finish the proof we need to consider
four cases depicted in Figure~\ref{fig:mutActAndFanOutCases}. If $g_1$ and $g_2$ are glues on edges of the same tile, then we have a case of fan-out. If $g_1$ and $g_2$ are glues on separate tiles, then we have a case of mutual activation.

\begin{figure}[htp]
\begin{center}
\includegraphics[width=3in]{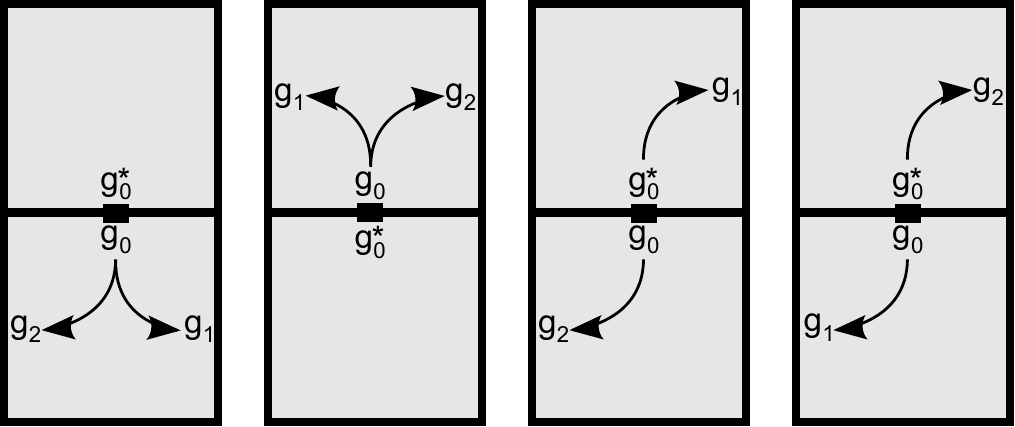}
\caption{A single binding event firing two distinct glues.  The left two use fan-out, while the right two use mutual activation.}
\label{fig:mutActAndFanOutCases}
\end{center}
\end{figure}

\end{proof}

}  %

\begin{theorem}
\label{thm:fanOutEssential}
At temperature 1, there exists an STAM$^+$ system $\mathcal{T}$ such that
any STAM$^+$ system $\mathcal{S}$ that simulates $\mathcal{T}$ contains fan-out or mutual activation.
\end{theorem}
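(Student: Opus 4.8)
The plan is to let $\mathcal{T} = (T,1)$ be exactly the system of Figure~\ref{fig:noSplit}, in which the glue $f$ on the west edge of tile $B$ turns \texttt{on} two distinct glues --- call them $p$ and $q$ --- each of which subsequently enables a distinct further attachment. The whole theorem then reduces to Lemma~\ref{lem:oneToTwo}: it suffices to show that any STAM$^+$ system $\mathcal{S}$ simulating $\mathcal{T}$ has some $\alpha \in \prodasm{S}$ in which a single glue turns \texttt{on} two distinct glues, since the lemma immediately converts this into fan-out or mutual activation in $\mathcal{S}$. So the work is entirely in transporting the ``one glue activates two glues'' behavior of $\mathcal{T}$ across the simulation into $\mathcal{S}$.

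First I would pin down the relevant tile-level dynamics of $\mathcal{T}$. The singleton $B$ is producible, and in it both $p$ and $q$ are \texttt{latent}, so no tile can attach to $B$ along the edges they guard; only after $B$ binds along $f$ do $p$ and $q$ turn \texttt{on} and the two further attachments become possible. Hence $\prodasm{T}$ contains a supertile $\ta_0$ mapping to $B$ with $f$ unbound (from which neither of the two attachments is available), a one-step attachment along $f$ producing $\ta_1$, and from $\ta_1$ both attachments are available.

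Next I would push this configuration through the representation function $R$ of $\mathcal{S}$. Using equivalent productions together with $\mathcal{T} \dashv_R \mathcal{S}$ and $\mathcal{S} \models^+_R \mathcal{T}$, there are producible supertiles of $\mathcal{S}$ mapping to $\ta_0$ and $\ta_1$, and the macrotile $M_B$ representing $B$ must, after the single binding event that represents $f$ binding, be able to accommodate the two macrotile attachments representing the $p$- and $q$-tiles, while before that event neither is possible (otherwise $\mathcal{S}$ would produce a supertile mapping to a $\mathcal{T}$-configuration with a tile bound along a \texttt{latent} glue, contradicting condition~\ref{scott-defn:simulate:equiv_prod_a} of equivalent productions). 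Let $g$ be the glue of $M_B$ participating in this $f$-representing binding. At temperature $1$ the only way the effect of that binding can propagate into $M_B$ is through a sequence of binding events originating at $g$ --- exactly the chain observation established in the proof of Lemma~\ref{lem:oneToTwo} --- and the two attachments are unlocked by two \emph{distinct} boundary glues $a$ and $b$ of $M_B$ that were off beforehand. Therefore the binding of $g$ turns \texttt{on} both $a$ and $b$, so $g$ turns \texttt{on} two distinct glues in a producible assembly of $\mathcal{S}$, and Lemma~\ref{lem:oneToTwo} finishes the argument.

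The main obstacle is the third step: rigorously excluding the ways $\mathcal{S}$ might enable the two attachments \emph{without} a single glue doing both. One escape is pre-activating $a$ or $b$ before the $f$-representing binding; this is ruled out by equivalent productions, since a prematurely available attachment would realize in $\mathcal{T}$ a tile bound along a \texttt{latent} glue, which is not in $\prodasm{T}$. The other escape is routing the two activations through two independent external binding events; ruling this out is where $\tau = 1$ is essential, since there every glue activation terminates a chain of binding events, and both chains that unlock $a$ and $b$ must emanate from the unique new binding carrying information into $M_B$, namely $g$. A final detail is to confirm $a$ and $b$ are genuinely distinct, which holds because the $p$- and $q$-attachments occur at distinct locations on the boundary of $M_B$.
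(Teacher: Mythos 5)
Your overall framework---choosing the system of Figure~\ref{fig:noSplit} and reducing everything to Lemma~\ref{lem:oneToTwo}---matches the paper's, but the transport step contains a genuine gap. You assume that the attachment representing the binding of $f$ is mediated by a single glue $g$ of the macrotile $M_B$ (``the unique new binding carrying information into $M_B$''), and conclude that both activation chains must emanate from $g$. That uniqueness is unjustified: when two macrotiles combine, their shared edge is $m$ tiles long and may carry many matching \texttt{on} glues, all of which bind in that one combination event, and each of which may fire its own (single) signal. A simulator can therefore route the activation of your boundary glue $a$ through one glue on the shared edge and the activation of $b$ through a different glue on that same edge, so that no single glue ever activates two glues---and Lemma~\ref{lem:oneToTwo} never applies. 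Nothing in equivalent productions or in the $\tau=1$ chain observation rules this out for a single $B$-macrotile; this is precisely the escape route your argument needs to close and does not.

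The paper closes it with a pumping argument that your proposal lacks. It considers the producible assemblies $\gamma_i$ consisting of a chain of $i$ copies of $B$ (Figure~\ref{fig:pumpingB}), partitions the glues of the $j$-th macrotile into sets $G^j_N, G^j_E, G^j_S, G^j_W$, and observes that the glues of $G^j_E$ and $G^j_S$ must all be \texttt{latent} initially (else $\mathcal{T}$ would not follow $\mathcal{S}$) and can only be activated by signals fired by the bindings along $G^j_W$. If every binding pair fires at most one signal, then since at least one signal must be spent activating a glue in $G^j_S$, the number of glues that can be activated in $G^j_E$ is at most $|G^j_W|-1$; since the macrotiles are all of the same type, this budget strictly decreases along the chain, so for $i$ large enough some macrotile can activate nothing in $G^j_E$ or nothing in $G^j_S$ (Figure~\ref{fig:losingSignal}), contradicting simulation. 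Only then does some binding event fire two signals, and Lemma~\ref{lem:oneToTwo} applies. To repair your proof you would need either this counting argument over arbitrarily long chains or some other mechanism that defeats the ``many glues on one macrotile edge, one signal each'' strategy; the single-attachment picture cannot do it.
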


\ifabstract  \noindent The proof of Theorem~\ref{thm:fanOutEssential} can be found in \cite{Signals3DArxiv}. \fi

\later{
\subsection{Proof of Theorem~\ref{thm:fanOutEssential}}\label{sec:fanOutTheoremProof}

Note that this proof makes use of Lemma~\ref{lem:oneToTwo}, which can be found in \ifabstract \cite{Signals3DArxiv} \else Section~\ref{sec:lemmaProof} \fi.

\begin{proof}
Let $\mathcal{S} = (S,1)$ be an STAM$^+$ system that simulates $\mathcal{T} = (T,1)$ with representation function $R$.
Given Lemma~\ref{lem:oneToTwo}, it suffices to
show that there must be an assembly $\alpha$ in $\prodasm{S}$ such that there exists a glue on an exposed edge that signals two \texttt{latent} glues to turn \texttt{on}.

Suppose that $\gamma_i^{\prime}\in \prodasm{S}$ is such that $R^*(\gamma_i^{\prime}) = \gamma_i$ where $\gamma_i \in \prodasm{T}$ is depicted in Figure~\ref{fig:pumpingB}
with $i$ tiles labeled $B$ attached. Let $t_j$ be the $j^{th}$ tile with label $B$ counting from the left in the assembly $\gamma_i$. Note that $t_0$ is of type $A$.
For each $j$, let $t_j^{\prime} \in \prodasm{S}$ be an assembly mapping to $t_j$ under $R$. Note that we can partition the glues of $t_j^{\prime}$ into four sets $\{G^j_d\}_{d\in \{N,S,E,W\}}$ where
each set corresponds to the glue $f$ on the $d$ edge of $t_j$.  To make the proof cleaner and without loss of generality, we take $j$ to be greater than $2$.

\begin{figure}[htp]
\begin{center}
\includegraphics[width=3in]{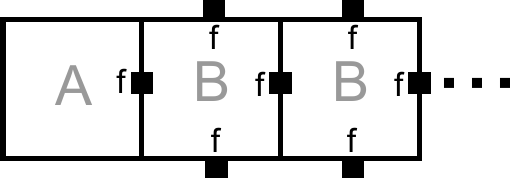}
\caption{A tile assembly in $\prodasm{T}$ consisting of $i$ repeated $B$ tiles.}
\label{fig:pumpingB}
\end{center}
\end{figure}

Since $\mathcal{T}$ follows $\mathcal{S}$, some glue of $G^j_W$ must be in the \texttt{on} state to allow $t^{\prime}_j$ to bind to glues in $G^{j-1}_E$. Hence we may assume that these same glues in $G^{j+1}_W$ are initially in the \texttt{on} state. Therefore, any matching glues of $G^j_E$ must initially be \tlatent and only turn \ton after $t^{\prime}_j$ binds to $t^{\prime}_{j-1}$ along glues in $G^{j}_W$.
Otherwise, at temperature $1$, $t^{\prime}_j$ would be able to bind to $t^{\prime}_{j+1}$ prior to $t^{\prime}_{j}$ binding to $t^{\prime}_{j-1}$, and
this would contradict the assumption that $\mathcal{T}$ follows $\mathcal{S}$. Likewise, some
glues in $G^j_S$ must initially be \tlatent and only turn \ton after $t^{\prime}_j$ binds to $t^{\prime}_{j-1}$ along glues in $G^{j}_W$.
Hence the binding of one or more glues in $G^j_W$ fires glues in $G^j_E$ and $G^j_S$.
Next we show that under the assumption that $\mathcal{S}$ simulates $\mathcal{T}$, it cannot be the case that each glue in $G^{j}_W$ fires at most one glue.

For the sake of contradiction, assume that each binding pair of glues in the production of $\gamma_i^{\prime}$
signals at most one other glue. Then the number of \tlatent glues that could be turned \ton in $G^j_E$ is at most $|G^j_W| - 1$.
This follows from the facts that signaling only occurs when glues bind and at least $1$ signal must be used to turn \ton a glue in $G^j_S$. Therefore, as $j$ increases, the number of possible glues in $G^j_W$ that could be turned on decreases. The idea is depicted in Figure~\ref{fig:losingSignal}.
\begin{figure}[htp]
\begin{center}
\includegraphics[width=3in]{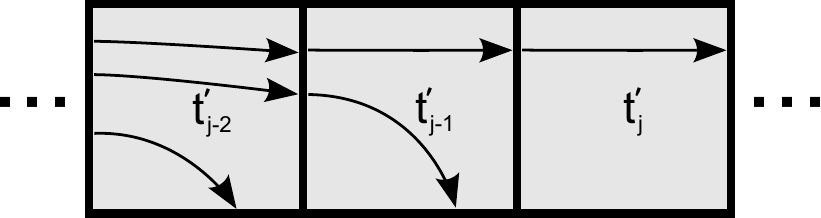}
\caption{An example of a case where there are no glues of $G^{j}_S$ that are signaled.}
\label{fig:losingSignal}
\end{center}
\end{figure}
Therefore, there exists an $n$ sufficiently large such that for some $i,j > n$ with $j < i + 1$, either no glue in $G^j_E$ is ever fired or no glue in $G^j_S$ is ever fired even after attachment of $t^{\prime}_j$.
Figure~\ref{fig:losingSignal} shows an example where no glues of $G^j_S$ are fired.
However, two signals must be sent through $t^{\prime}_{j+1}$ to turn on at least $1$ glue in $G^{j+1}_E$ and $1$ glue in $G^{j+1}_S$; therefore we have a contradiction. Hence, there must be a binding event in the production of $\gamma_i^{\prime}$
that signals two or more glues. By Lemma~\ref{lem:oneToTwo}, this implies that $\gamma_i^{\prime}$ contains fan-out or mutual activation.

\qed
\end{proof}
}  %

\subsection{Eliminating either fan-out or mutual activation}

In this section we will discuss the possibility of eliminating fan-out from an STAM$^+$ system. We do this by simulating a given STAM$^+$ system with a simplified STAM$^+$ system that contains no fan-out, but does contain mutual activation. A slight modification to the construction that we provide then shows that mutual activation can be swapped for fan-out.

\begin{definition}\label{defn:simplified-stam}
An \emph{$n$-simplified STAM tile set} is an STAM tile set which has the following properties: (1) the full-tile signal complexity is limited to a fixed constant $n \in \mathbb{N}$, (2) there is either no fan-out or no mutual activation, and (3) fan-in is limited to $2$. We say that an STAM system $\mathcal{T} = (T,\sigma,\tau)$ is $n$-simplified if $T$ is $n$-simplified.
\end{definition}

\begin{theorem}
\label{thm:2SimpAtTemp1}
For every STAM$^+$ system $\mathcal{T} = (T,\sigma, \tau)$, there exists a $2$-simplified STAM$^+$ system $\mathcal{S} = (S, \sigma^{\prime}, \tau)$ which simulates $\mathcal{T}$ with scale factor $O(|T|^2)$ and tile complexity $O(|T|^2)$.
\end{theorem}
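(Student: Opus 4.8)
The plan is to simulate $\mathcal{T}$ by replacing each tile type $t\in T$ with a pre-assembled $c\times c$ macrotile supertile $M_t$ built from the simplified set $S$, where $c=O(|T|^2)$, and to let the initial configuration $\sigma'$ contain infinitely many copies of each $M_t$. The representation function $R$ maps a completed $c\times c$ block to the tile type it encodes and is undefined on incomplete blocks; to make this well defined I would reserve a distinguished ``identity core'' region inside each macrotile whose tile pattern uniquely names $t$. The boundary of $M_t$ is partitioned into its four sides, and on side $d$ I place one boundary sub-tile carrying a glue for each glue $a$ that $t$ may present on that side, initialized \texttt{on} or \texttt{latent} exactly as $a$ is in $t$. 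All of the internal wiring lives in the block's interior and realizes the transition function $\delta_t$ using only gadgets of full-tile signal complexity at most $2$, with no fan-out (but permitting mutual activation), so that $S$ is $2$-simplified in the sense of Definition~\ref{defn:simplified-stam}.

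Second, I would build the routing network from four gadget families, each using at most two signals per sub-tile and never firing more than one signal from a single glue. A \emph{wire} is a path of sub-tiles along which the binding of each tile's incoming glue fires exactly one outgoing glue. A \emph{duplicator} copies one signal into two: the incoming wire turns on a glue $g$ on sub-tile $P$, $g$ binds its partner on the adjacent sub-tile $P'$, and this single binding event fires one output wire on $P$ and one on $P'$ by \emph{mutual activation} -- duplicating the signal with no glue firing twice. A balanced tree of duplicators copies one signal to $k$ destinations using $O(k)$ gadgets. A \emph{crossover} sub-tile lets an $N$--$S$ wire and an $E$--$W$ wire pass independently ($W$-binding fires only $E$, $N$-binding fires only $S$), which is precisely what makes planar routing of an arbitrary signal graph possible inside one block. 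Finally a fan-in-$2$ \emph{OR-join} has two input glues each of whose binding fires the same output glue; chaining these in a binary tree reduces arbitrary fan-in to $2$.

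Third, I would describe the boundary behavior. When $M_t$ binds a neighbor $M_{t'}$ along a matching boundary glue $a$ -- which happens exactly when $t$ and $t'$ bind along $a$ in $\mathcal{T}$, since boundary glues carry the labels and strengths of $\mathcal{T}$'s glues -- that single interface binding initiates, by mutual activation, the internal signal for $\delta_t(d,a)$ inside $M_t$ and simultaneously the signal for $\delta_{t'}(-d,a)$ inside $M_{t'}$, faithfully reproducing any mutual activation present in $\mathcal{T}$. Each internal signal then enters a duplication tree and is routed through wires, crossovers, and OR-joins to turn \texttt{on} precisely the boundary glues that $\delta_t(d,a)$ turns on in $t$. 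Because STAM$^+$ is monotone (glues move only from \texttt{latent} to \texttt{on}) and fully asynchronous, this propagation eventually completes regardless of ordering, every intermediate internal state corresponds to valid partial progress, and -- crucially -- no boundary glue ever turns \texttt{on} unless a signal that should activate it has arrived.

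Finally I would verify the formal simulation conditions. Equivalent productions $\mathcal{S}\Leftrightarrow_R\mathcal{T}$ holds because macrotiles bind as atomic $\tau$-stable units exactly where the corresponding tiles bind, so every supertile in $\prodasm{S}$ maps under $\tilde{R}$ into $\prodasm{T}$ and conversely, and since a present block is always a complete $M_t$ there is no fuzz and clean mapping is immediate. The dynamic conditions $\mathcal{T}\dashv_R\mathcal{S}$ and $\mathcal{S}\models^+_R\mathcal{T}$ follow from the boundary-to-interior correspondence above together with monotonicity: a one-step macrotile attachment projects to at most one tile attachment in $\mathcal{T}$, and any attachment available in $\mathcal{T}$ is realizable in $\mathcal{S}$ after first letting the relevant internal signals propagate to expose the needed boundary glue. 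The scale factor and the number of distinct sub-tile types are both $O(|T|^2)$, matching the worst case in which every glue sits on every side and signals every glue on every side, so the network carries $O(|T|^2)$ wires laid out via crossovers in a block of side $O(|T|^2)$. I expect the main obstacle to be this correctness argument under asynchrony rather than the gadget design: one must rule out any spurious binding or premature boundary-glue activation that would place an $\mathcal{S}$-supertile outside $\tilde{R}^{-1}(\prodasm{T})$, and must handle the $\models^+_R$ technicalities arising when many $\mathcal{S}$-blocks with differing internal signal-states map to one tile -- exactly where the fire-once, activation-only nature of STAM$^+$ is essential, and also why this construction respects Theorem~\ref{thm:fanOutEssential} by eliminating fan-out while retaining mutual activation rather than attempting to eliminate both.
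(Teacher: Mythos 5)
Your gadget library is essentially the paper's in different clothing --- your duplicator is exactly the paper's mutual-activation ``echo tile'' gadget (Figure~\ref{fig:echoTile}), your OR-joins are its fan-in zones, your crossover tile is its observation that at most two non-parallel signals ever share a tile, and your routing network plays the role of its concentric per-glue frames --- so the signal-complexity reduction itself is the same idea. Where you genuinely diverge is in how macrotiles come to exist: you place fully pre-assembled $c\times c$ supertiles $M_t$ in the initial configuration $\sigma'$, whereas the paper starts from singleton tiles and lets each macrotile self-assemble. The theorem statement existentially quantifies $\sigma'$ and the model permits pre-formed supertiles, so your route is legitimate, and it buys you a lot: the paper's central technical difficulty is precisely that a partially formed macrotile must not expose exterior glues prematurely (Figure~\ref{fig:improperGrowth}), which it solves with a ``verifying circuit'' and ``presenter circuit'' wrapped around the boundary (initially costing $4$ signals per tile, then reduced back to $2$ with an extra gadget and a doubled scale factor), plus geometric/glue-labeling care to force aligned attachment. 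With pre-assembly, clean mapping is trivial and none of that machinery is needed. The cost is that your $\mathcal{S}$ has an initial configuration containing supertiles of $\Theta(|T|^4)$ tiles, which is exactly what the paper cannot afford downstream: Theorem~\ref{thm:3D-IU-for-STAM} feeds the $2$-simplified system into a 3D 2HAM simulator whose own input supertiles are constrained to fit in a single macrotile block, so the composed construction needs the simplified system to work from singletons. Your proof therefore establishes the stated theorem but not the stronger form the paper actually uses.

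Two details in your writeup need tightening, though both are repairable. First, a boundary sub-tile whose simulated glue $a$ both receives activation (fan-in $2$) and fires its own outgoing signal on binding would carry $3$ signals, violating Definition~\ref{defn:simplified-stam}; you must terminate the OR-join tree on a separate tile so the boundary tile sees fan-in $1$ plus one outgoing signal. Second, you should state explicitly that each simulated glue occupies a canonical position on the macrotile edge determined by its label (as the paper does), since at $\tau=1$ a single matching boundary glue suffices for attachment and without position-unique placement two macrotiles could bind off-center, breaking clean mapping.
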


To prove Theorem~\ref{thm:2SimpAtTemp1}, we construct a macrotile such that every pair of signal paths that run in parallel are never contained on the same tile.  This means that at most two signals are ever on one tile since it is possible for a tile to contain at most two non-parallel (i.e. crossing) signals.  In place of fan-out, we use mutual activation gadgets (see Figure~\ref{fig:echoTile}) within the {\em fan-out zone}. Similarly, we use a {\em fan-in zone} consisting of tiles that merge incoming signals two at a time, in order to reduce fan-in. For examples of these zones, see Figure~\ref{fig:constructionOverview}. Next, we print a circuit (a system of signals) around the perimeter of the macrotile which ensures that the external glues (the glues on the edges of the macrotiles that cause macrotiles to bind to one another) are not turned on until a macrotile is fully assembled. More details of the construction can be found in \ifabstract \cite{Signals3DArxiv} \else Section~\ref{sec:MacroCreation} \fi.
\begin{figure}[htp]
\begin{center}
\includegraphics[width=4.5in]{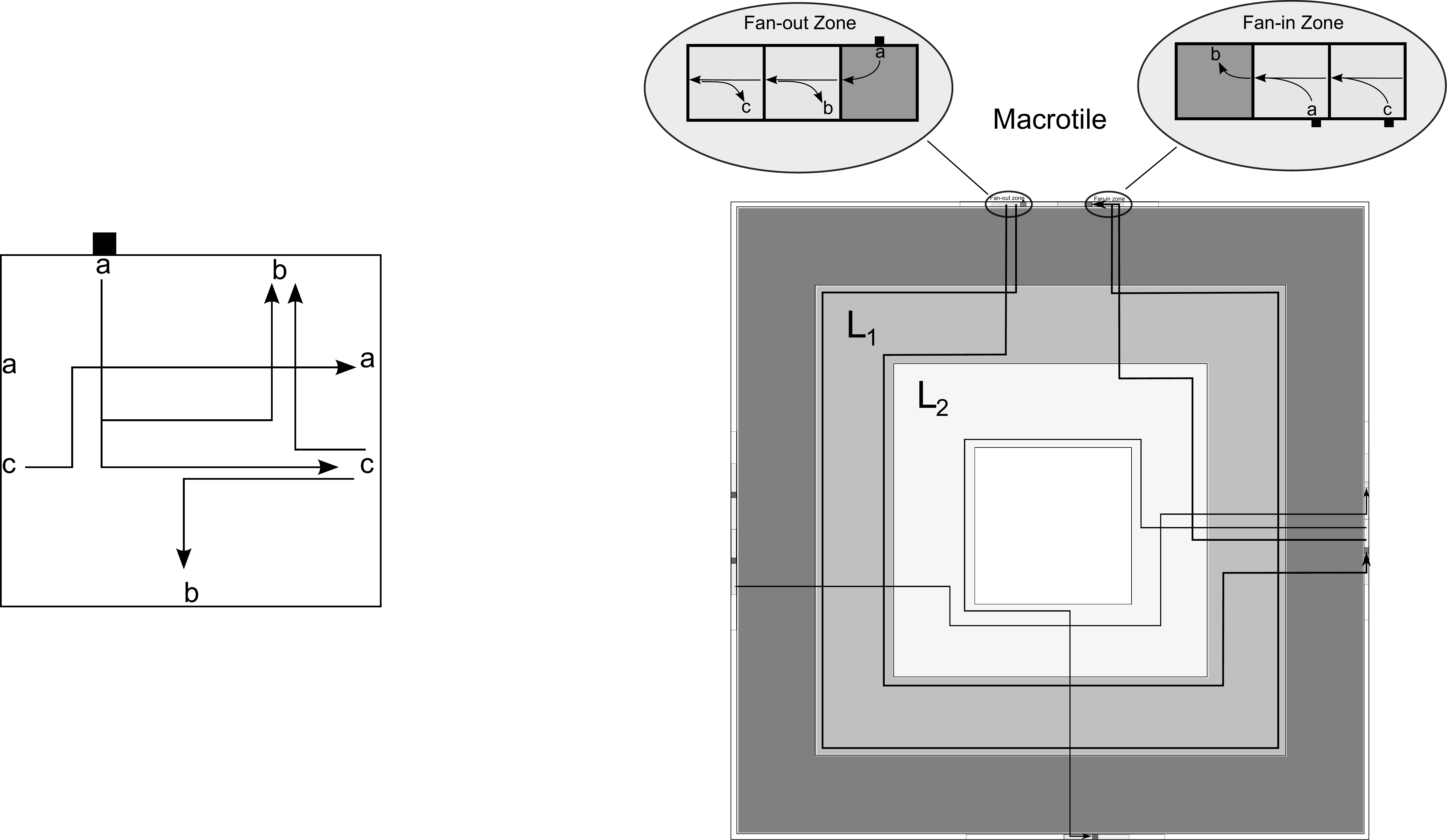}
\caption{A tile with $5$ signals (left) and the STAM$^+$ macrotile that simulates it (right).  Each frame, labeled $L_1$ and $L_2$, corresponds to a glue.  For example on the tile to be simulated (left) there is a signal that runs from glue $a$ to glue $c$.  In order to simulate this signaling, a signal runs from the fan-out zone of glue $a$ to the frame associated with glue $a$, labeled $L_1$, on the north edge.  The signal then wraps around the frame until it reaches the east side on which glue $c$ lies.  Then the signal enters the fan-in zone of glue $c$.}
\label{fig:constructionOverview}
\end{center}
\end{figure}

To further minimize the number of signals per tile at $\tau > 1$, cooperation allows us to
reduce the number of signals per tile required to just $1$. To achieve this result, we modify the construction used to show Theorem~\ref{thm:2SimpAtTemp1}, and prove Theorem~\ref{thm:1SimpAtTemp2}.
The details of the modification are in \ifabstract \cite{Signals3DArxiv} \else Section~\ref{sec:reduceSigPerTileTemp2} \fi.

\begin{theorem}
\label{thm:1SimpAtTemp2}
For every STAM$^+$ system $\mathcal{T} = (T,\sigma, \tau)$ with $\tau > 1$, there exists a $1$-simplified STAM$^+$ system $\mathcal{S}=(S,\sigma^{\prime},\tau)$ which simulates $\mathcal{T}$ with scale factor $O(|T|^2)$ and tile complexity $O(|T|^2)$.
\end{theorem}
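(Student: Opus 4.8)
The plan is to build on the construction already developed for Theorem~\ref{thm:2SimpAtTemp1}, modifying it so that cooperation at temperature $\tau>1$ lets us collapse the signal count per tile from $2$ down to $1$. Recall that in the $2$-simplified construction, a tile may carry up to two signals precisely because two non-parallel (crossing) signal paths can share a single tile. The key idea at $\tau>1$ is that we no longer need signals to independently propagate; instead, we can use \emph{cooperative binding} to gate the progress of a signal path on the arrival of a second strength-$1$ glue. This means that wherever two signals would have had to cross on one tile in the old construction, we can instead reroute so that one path is forced to wait for a cooperating glue before continuing, thereby ensuring that each individual tile need only ever fire a single signal.

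First I would describe the modified macrotile layout, which reuses the same overall structure from Figure~\ref{fig:constructionOverview}: a fan-out zone for each glue (now built from mutual-activation echo gadgets carrying one signal each), frames $L_1, L_2$ routing the signal around the perimeter, and a fan-in zone merging incoming signals two at a time. The essential change is in how path crossings are handled: rather than letting a single tile host two crossing signals, I would replace each crossing with a small cooperative gadget in which one of the two paths contributes a strength-$1$ glue that, together with a second strength-$1$ glue from an independently-placed helper tile, cooperatively enables the next tile in the other path to attach. Because attachment now requires two glues summing to $\tau$, we gain precise control over \emph{when} each segment of a path advances, and we can stage the two crossing signals so that no single tile is ever responsible for more than one activation. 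I would verify that the fan-out and fan-in zones, the echo gadgets, and the external-glue-gating circuit around the perimeter all respect the $1$-signal-per-tile bound under this cooperative scheme.

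Next I would establish the simulation itself, namely that $\mathcal{S}$ has equivalent productions ($\mathcal{U}\Leftrightarrow_R\mathcal{T}$ in the sense of Definition~\ref{scott-defn:alt-equiv-prod}) and equivalent dynamics ($\mathcal{T}\dashv_R\mathcal{S}$ and $\mathcal{S}\models^+_R\mathcal{T}$ per Definitions~\ref{scott-defn:alt-equiv-dynamic-t-to-s} and~\ref{scott-defn:alt-equiv-dyanmic-s-to-t-strong}) under the macrotile representation function $R$ at scale $m = O(|T|^2)$. The productions argument is largely inherited from the Theorem~\ref{thm:2SimpAtTemp1} construction, since the macrotile geometry and the set of externally exposed glues are unchanged; what must be rechecked is that the cooperative gating does not introduce spurious producible supertiles and, crucially, that it does not permit premature binding along any external glue before a macrotile is fully formed. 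The scale factor and tile complexity bounds of $O(|T|^2)$ follow the same counting as before: the dominant contribution is one signal path per (source glue, destination glue) pair, and there are at most $O(|T|^2)$ such pairs in the worst case.

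The hard part will be arguing that the cooperative gadgets preserve the \emph{asynchronous} correctness of the simulation. Because the STAM fires signals into a pending set with no timing guarantees, I must ensure that for every interleaving of signal firings and supertile attachments, the cooperative gating never deadlocks (a path stalls waiting for a cooperating glue that never arrives) and never races ahead (a path advances before its gating glue is truly available). Concretely, I expect the delicate step to be showing that the helper tiles supplying the second cooperating glue are themselves activated along a path whose completion is \emph{guaranteed} to precede the point at which the gated path needs them, so that the two strength-$1$ glues are reliably co-present exactly when attachment should occur and never before. Establishing this invariant for every crossing, under all asynchronous orderings, is where the bulk of the careful case analysis will lie; once it holds, the strong-modeling condition $\mathcal{S}\models^+_R\mathcal{T}$ and the clean-mapping requirement follow by the same reasoning used for the $2$-simplified case.
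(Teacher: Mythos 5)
Your overall strategy matches the paper's: start from the Theorem~\ref{thm:2SimpAtTemp1} macrotiles and use cooperative binding at $\tau>1$ to push the per-tile signal count from $2$ down to $1$ (Section~\ref{sec:reduceSigPerTileTemp2}). You also correctly identify crossing signals as one place where a cooperative gadget is needed, and the paper indeed replaces each crossing tile by a small cooperative block (Figure~\ref{fig:temp2CrossingSignalReduce}). However, there is a genuine gap: crossings are only one of the places where the $2$-simplified construction puts two signals on a single tile, and your plan leaves the others untouched. In particular, you propose to keep the fan-out zones ``built from mutual-activation echo gadgets carrying one signal each,'' but the echo gadget of Figure~\ref{fig:echoTile} cannot be chained at one signal per tile: in a fan-out zone each echo tile must both propagate the incoming activation to the next tile \emph{and} emit one branch, and however you split those two actions across the mutually-activating pair, some tile ends up firing two signals. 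The same problem occurs at the merge tiles of the fan-in zone (two distinct binding events each signaling the onward glue is two signals on that tile) and, most severely, on the boundary tiles carrying the verifying and presenter circuits, which in the base construction carry up to $4$ signals. ``Verifying'' that these components respect the $1$-signal bound is not possible --- they must be redesigned.

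The paper's fix, which your proposal is missing, is to use cooperation not just to gate crossings but to \emph{replace signal propagation by tile attachment} in all of these components: fan-out is realized by a pre-formed duple that attaches across two cooperating strength-$1$ glues, with each of the two binding events firing one signal on a different tile (Figure~\ref{fig:temp2FanOutSignalReduce}); fan-in is realized by a cooperative attachment gadget (Figure~\ref{fig:temp2FanInSignalReduce}); and the verifying/presenter circuits are discarded entirely in favor of static verifier and presenter tiles that traverse the boundary by cooperation, with premature macrotile binding (Figure~\ref{fig:improperGrowth}) prevented by representing every exterior strength-$\tau$ glue as two glues of strengths $\lceil\tau/2\rceil$ and $\lfloor\tau/2\rfloor$ on separate tiles so that no exposed glue suffices for attachment until the presenter tiles are in place. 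This last point --- the splitting of exterior glues, at the cost of a further constant/additive scale increase --- is also absent from your proposal, and without it the equivalent-dynamics argument fails because partially formed macrotiles could bind. Your concern about deadlocks and races in the asynchronous setting is legitimate but is not where the main difficulty lies; once each two-signal tile is replaced by a gadget in which distinct binding events fire distinct single signals, the asynchrony argument is inherited essentially unchanged from the $2$-simplified case.
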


\later{
\subsection{Macrotile Creation for Theorem~\ref{thm:2SimpAtTemp1}}\label{sec:MacroCreation}
In this section we provide a construction to prove the following Lemma.

\begin{lemma}\label{lem:stam-simplification}
For every STAM$^+$ system $\mathcal{T} = (T,\tau)$, there exists a $4$-simplified STAM$^+$ system $\mathcal{S} = (S, \tau)$ which simulates $\mathcal{T}$ with scale factor $O(|T|^2)$ and tile complexity $O(|T|^2)$.
\end{lemma}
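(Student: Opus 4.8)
The plan is to give an explicit macrotile construction together with a representation function $R$ and to verify the simulation conditions of Definition~\ref{scott-defn:simulate}. I would replace each tile type $t \in T$ by a single macrotile type: an $O(|T|^2) \times O(|T|^2)$ block of $\mathcal{S}$-tiles whose completed form $R$ reads back as $t$ (and which $R$ sends to empty space while incomplete or invalid). Each original glue of $t$ is assigned a \emph{frame} region on the appropriate edge of the block, as in Figure~\ref{fig:constructionOverview}, and each original signal rule ``the binding of glue $a$ turns \texttt{on} glue $c$'' is realized by a \emph{wire}: a fixed path of interior tiles, each carrying a single activation signal from its incoming glue to its outgoing glue, running from $a$'s frame to $c$'s frame. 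Interior glues are given strength $\tau$ so the block assembles the same way for every temperature. Laying the wires out crossbar-style -- each signal entering on its own track and leaving on another, with only perpendicular crossings permitted -- routes all $O(|T|^2)$ signals inside a block of side $O(|T|^2)$, fixing the scale factor, and a routine accounting of the distinct wire, crossing, echo, merge, and frame tile types gives tile complexity $O(|T|^2)$. Both bounds are dominated by the unrealistic case in which every glue sits on every edge and signals every glue on every side.

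Next I would introduce the two gadget families that remove fan-out and cap fan-in, satisfying conditions (2) and (3) of Definition~\ref{defn:simplified-stam}. Wherever the original tile has fan-out -- one glue $a$ firing several signals -- I replace it inside a \emph{fan-out zone} by a cascade of \emph{echo gadgets} (Figure~\ref{fig:echoTile}): the incoming wire turns \texttt{on} a single shared glue $g$ on a tile $t_1$, and when a second tile $t_2$ binds along $g$, that one binding fires a continuation signal on $t_1$ and a continuation signal on $t_2$. This is mutual activation, not fan-out -- each individual glue still fires at most one signal -- so chaining $k-1$ such gadgets splits one incoming signal into $k$ outgoing wires while the whole tile set contains \emph{no} fan-out. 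Symmetrically, wherever several wires must converge on one glue I insert a \emph{fan-in zone}: a binary tree of \emph{merge tiles}, each having two incoming glues that both signal one outgoing glue, so fan-in is everywhere at most $2$. A direct count of the busiest tiles -- a crossing ($2$ signals), an echo tile ($\le 2$), a merge tile ($2$), and the perimeter tiles that may combine a wrap-around gating signal with a crossing wire -- bounds the full-tile signal complexity by $4$, exactly what a $4$-simplified tile set requires.

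The remaining ingredient is a perimeter \emph{gating circuit}. Because assembly is two-handed and fully asynchronous, I must stop a block from exposing an active external glue before it has received and propagated the signals a genuine tile of $\mathcal{T}$ would have. So every external glue begins \texttt{latent}, and a circuit routed around the frames activates it only once the internal wiring feeding that edge has fired, forwarding a ``block complete'' confirmation to each neighbour. I would then check $R$ against the three simulation requirements: equivalent productions (each producible $\mathcal{T}$-supertile is the $R$-image of a producible $\mathcal{S}$-supertile and conversely, mapping cleanly with no diagonal fuzz, since a partial block only ever grows adjacent to an already committed macrotile); $\mathcal{T} \dashv_R \mathcal{S}$ (each single $\mathcal{S}$-combination either completes interior wiring, mapping to zero $\mathcal{T}$-steps, or joins two gated blocks, mapping to one step); and $\mathcal{S} \models^+_R \mathcal{T}$ (any legal $\mathcal{T}$-combination is reachable by first driving each involved block to its gated state).

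I expect the main obstacle to be precisely the interaction between the asynchronous STAM semantics and the gating circuit: pending actions may be executed in any order and after arbitrary delay, block attachments may interleave freely with signal firings, yet the external glues must become active in exactly -- and only -- the configurations in which the simulated tile would bind. The delicate step is proving that no ordering of pending actions and attachments can either expose an external glue prematurely (which would create a spurious producible supertile and break equivalent productions) or strand a block in a state from which it cannot complete (which would break $\mathcal{S} \models^+_R \mathcal{T}$), all while keeping the per-tile budget at $4$ signals and fan-in at $2$. Once the gating invariant is pinned down, the wire, echo, and merge gadgets themselves are routine.
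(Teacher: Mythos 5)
Your proposal follows essentially the same route as the paper's construction: one hardcoded macrotile per tile type with per-glue frames for routing, mutual-activation (echo) gadgets in fan-out zones, pairwise merge tiles in fan-in zones, and a perimeter verifying/presenter circuit that keeps external glues \texttt{latent} until the boundary is complete, with the same $O(|T|^2)$ scale and tile-complexity accounting and the same verification of equivalent productions and dynamics. The one point where you should align with the paper is that the gating circuit need only certify completion of the macrotile \emph{boundary}, not that interior wiring has fired -- interior signal propagation can remain fully asynchronous precisely because the STAM$^+$ semantics being simulated are themselves asynchronous.
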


The impetus behind this construction is to trade signal complexity for tile type complexity, meaning more but simpler tile types, along with an increased scale factor.  Given an arbitrary STAM$^+$ TAS $\mathcal{T} = (T, \tau)$, we show that there exists a $4$-simplified STAM$^+$ TAS $\mathcal{S} = (S, \tau)$ such that $\mathcal{S}$ simulates $\mathcal{T}$. In order to accomplish this, for each tile type $t \in T$ we hardcode a set of tiles that assemble a macrotile in $\prodasm{S}$ which exactly mimics the behavior of $t$ at scale factor $O(|T|^2)$, and tile complexity $O(|T|^2)$.  A macrotile in $\prodasm{S}$ is a square of dimensions $O(|T|)^2 \times O(|T|)^2$ which in our case can be thought of as representing a unique tile $t \in T$. Using a macrotile to simulate $t$ allows us to use mutual activation gadgets to eliminate fan-out (See Figure~\ref{fig:echoTile}), and to distribute the signal complexity of $t$ across multiple tiles.  We then have a macrotile in $\mathcal{S}$ in which all of the constituent tiles are $4$-simplified and which simulates the behavior of $t$.

} %

\begin{figure}[htp]
\begin{center}
\includegraphics[width=2.5in]{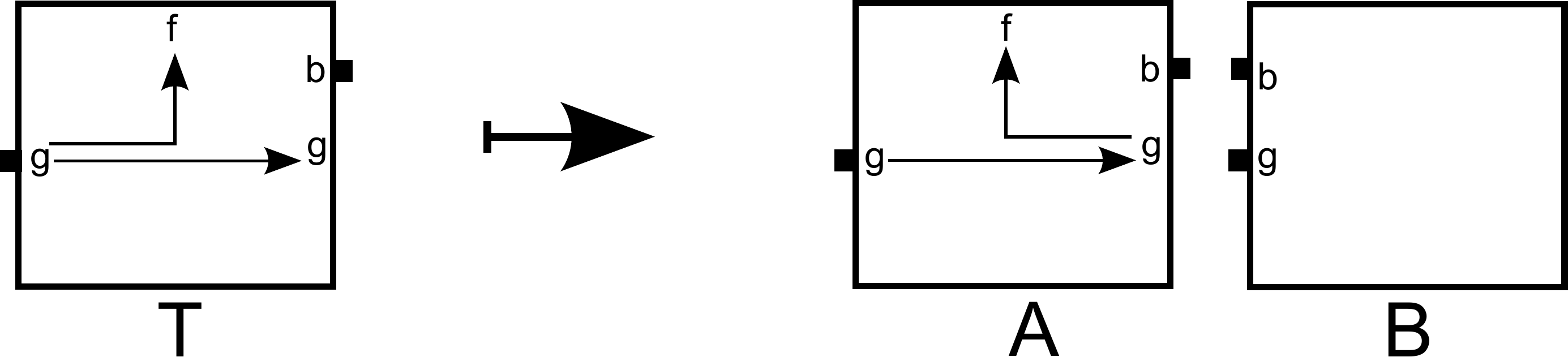}
\caption{An example of a mutual activation gadget consisting of tiles $A$ and $B$  without fan-out simulating, at $\tau=1$, the functionality of tile $T$ which has fan-out.  The glue $b$ represents the generic glues which holds the macrotile together.  The idea is to ``split'' the signals from the west glue $g$ on tile $A$ into two signals without using fan-out.  Once the west glue $g$ on tile $A$ binds, it turns on the east glue $g$ on tile $A$.  Then, when the east glue $g$ on tile $A$ binds to tile $B$, it triggers glue $f$.  Thus, the east glue $g$ triggers both the west glue $g$ and glue $f$ without fan-out.}
\label{fig:echoTile}
\end{center}
\end{figure}

\later{

This construction takes as input an STAM$^+$ TAS $\mathcal{T} = (T, \tau)$ and outputs an STAM$^+$ TAS $\mathcal{S} = (S, \tau)$ such that for any $s \in S$ there is no fan-out and the number of signals on $s$ is $\le 2$.
Notice that for a general STAM$^+$ system $\mathcal{T} = (T,\sigma,\tau)$, the initial configuration, $\sigma$, can consist of some pre-formed supertiles (including singleton tiles). For our construction, we take $\sigma$ to be an infinite number of copies of all singleton tiles. If additional supertiles are included in the initial configuration of $\mathcal{T}$, then we can simply form combinations of the necessary macrotiles in $\prodasm{S}$ to obtain an equivalent initial configuration for $\mathcal{S}$.

First, we introduce some useful notation. As usual, we let $G$ be the set of glues contained in the tile set $T$.  Let $i \in D$, where $D = \{N,E,S,W\}$, be a direction.  We say that glue $g \in G_i$ if $g$ is on edge $i$ of any $t \in T$.  Also, we define the multiset $\mathcal{G}$ as $\mathcal{G} = \cup_{i=1}^4 G_i$.

For each $t \in T$ we construct a unique macrotile in the following manner.
We begin by designing a set of tiles that assembles to build a $|\mathcal{G}|^2 \times |\mathcal{G}|^2$ square which presents $\tau$ strength glues around the perimeter, and whose interior glues are unique to each adjacent edge between tiles and of strength $\tau$.  This square serves no other purpose than to provide the necessary spacing required for the rest of the construction.  Then we design a set of tiles that forms a square frame of width $4|\mathcal{G}|$ such that the previously built square fits into the frame and all sides of the square bind to the frame.  We then design another frame that is again of width $4|\mathcal{G}|$ and encompasses and binds to all sides of the previous frame.  We continue in this way until all $g \in G$ are associated with a unique frame.  For each frame of width $|\mathcal{G}|$ contained in a frame of width $4|\mathcal{G}|$, we associate a direction $i \in D$.  In addition, for each single tile wide frame contained in a frame of width $|\mathcal{G}|$ that is associated with direction $i$, we associate a unique glue in $G_i$.  We then design these single tile wide frames such that for any signal that exists from $g \in G_i$ to $g^{\prime} \in \mathcal{G}$ on $t$ there is a tile that accepts the signal coming from the fan-out zone of glue $g$ on edge $i$ and propagates that signal to the fan-in zone of glue $g^{\prime}$ by carrying it around the frame to the edge on which $g^{\prime}$ lies.

Finally, we design a tile set that forms single tile wide lines for each edge of the macrotile.  The lines of tiles corresponding to edge $i$ will expose the active and latent glues of $t$ (all initially in the \tlatent state) that are contained in $G_i$ in a specified ordering.  This ordering is necessary because otherwise it could be the case that two tiles in $T$ are able to bind, but the two macrotiles in $U$ representing these tiles could not bind because they placed the corresponding glues that are supposed to bind on tiles in different positions.  Also, by having an ordering we prevent macrotiles from binding off center, that is macrotiles binding but not having their edges exactly line up, since we create a new unique glue for each position.  Each of the exterior glues in $G_i$ will have a \emph{fan-in zone} to their clockwise direction and a \emph{fan-out zone} to their counterclockwise direction.  The fan-in zone consists of a line of $|\mathcal{G}|$ tiles that is able to accept a signal from each $g \in \mathcal{G}$ and ``merge'' it into other signals so that no more than a fan-in of $2$ occurs in the fan-in zone.  For an example of a fan-in zone see Figure~\ref{fig:fanInZone}.
 \begin{figure}[htp]
\begin{center}
\includegraphics[width=3.0in]{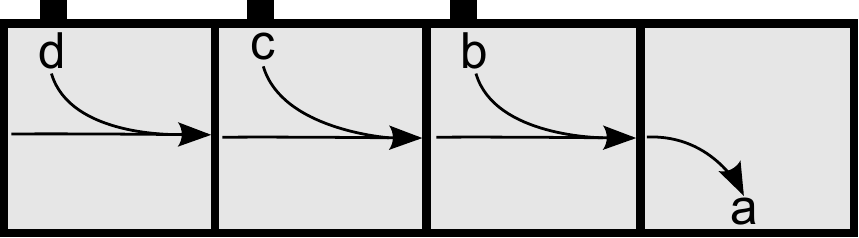}
\caption{A segment of the fan-in zone for an exterior glue $a$ on the South side of a macrotile.  Here, we see how we can use tiles to ``merge'' two signals at a time using fan-in so that at most only two binding events signal a single glue. For example, a binding event for $c$ merges with the incoming signal triggered by a glue on the West edge so that either can signal the glue on the East edge.}
\label{fig:fanInZone}
\end{center}
\end{figure}
 \noindent The fan-out zone of a glue $g$ consists of a line of $|\mathcal{G}|$ tiles which use mutual activation gadgets (Figure~\ref{fig:echoTile}) to eliminate fan-out.  An example of a fan-out zone can be seen in Figure~\ref{fig:fanOutZone}.
\begin{figure}[htp]
\begin{center}
\includegraphics[width=3.0in]{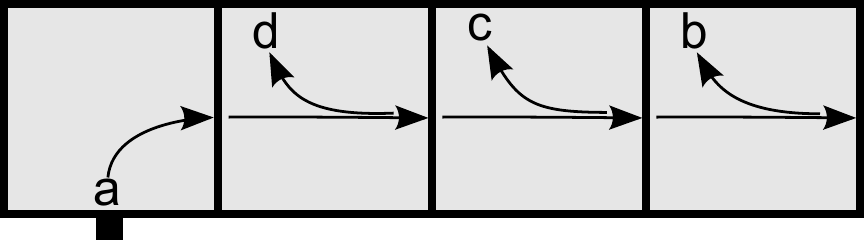}
\caption{A segment of the fan-out zone for an exterior glue $a$ on the South side of a macrotile.  Here, we see how the echo tiles are able to eliminate fan-out for glue $a$.}
\label{fig:fanOutZone}
\end{center}
\end{figure}
\noindent The fan-in and fan-out zones have glues on their ends such that they attach to the fan-in and fan-out zones of other glues to form a line of length $(2|\mathcal{G}+1)|G_i)$ with the ordering specified above. These lines of tiles are designed so that they attach to their corresponding edge and are centered with respect to the inner most $|\mathcal{G}|^2 \times |\mathcal{G}|^2$ square.  The rest of the perimeter of the macrotile is filled with generic tiles.  See Figure~\ref{fig:MacrotileOverview} for an example of a macrotile.

\begin{figure}[htp]
\begin{center}
\includegraphics[width=4.5in]{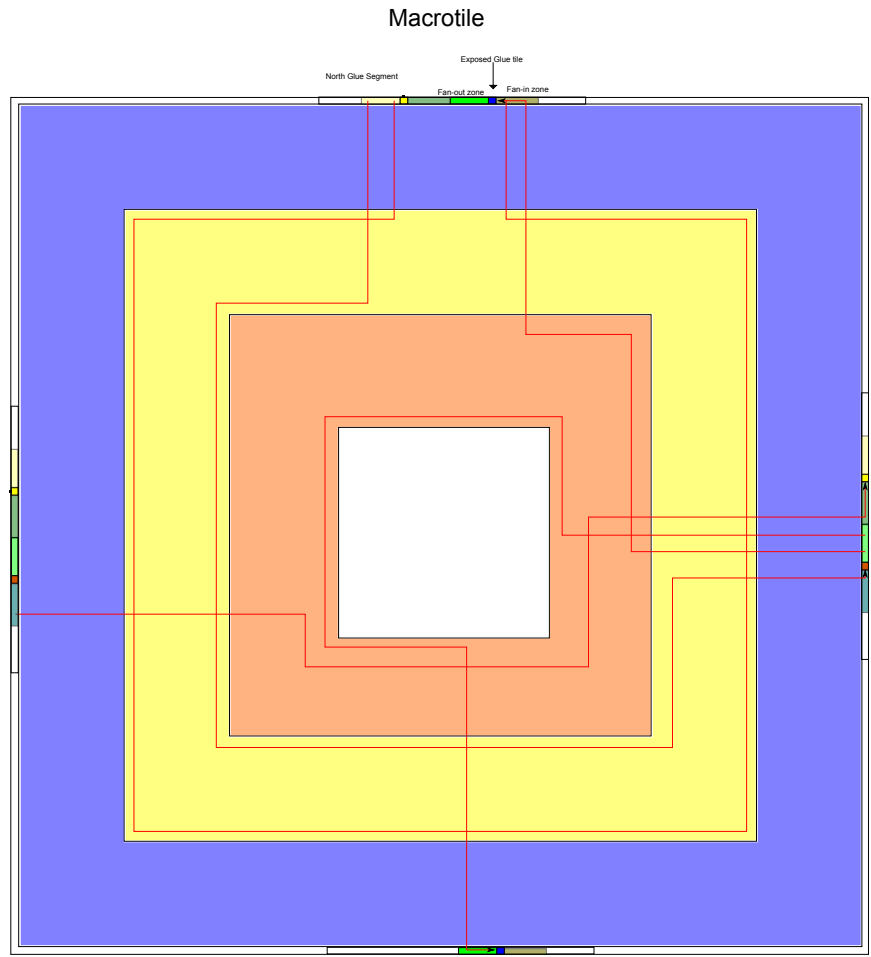}
\caption{An example of an STAM$^+$ macrotile that simulates the tile shown in Figure~\ref{fig:fanout_t_image}.  Here, the yellow squares represent glue $a$, the blue square represents glue $b$ and the orange squares represent glue $c$.  The color of each frame corresponds to the glue of the same color.  For example in Figure~\ref{fig:fanout_t_image} there is a signal that runs from glue $a$ to glue $c$.  In order to simulate this signaling, a signal runs from the fan-out zone of glue $a$ (the yellow glue) to the frame associated with glue $a$ on the north edge.  The signal then wraps around the frame until it reaches the east side on which glue $c$ lies.  Then the signal enters the fan-in zone of glue $c$.  }
\label{fig:MacrotileOverview}
\end{center}
\end{figure}

\begin{figure}[htp]
\begin{center}
\includegraphics[width=2.5in]{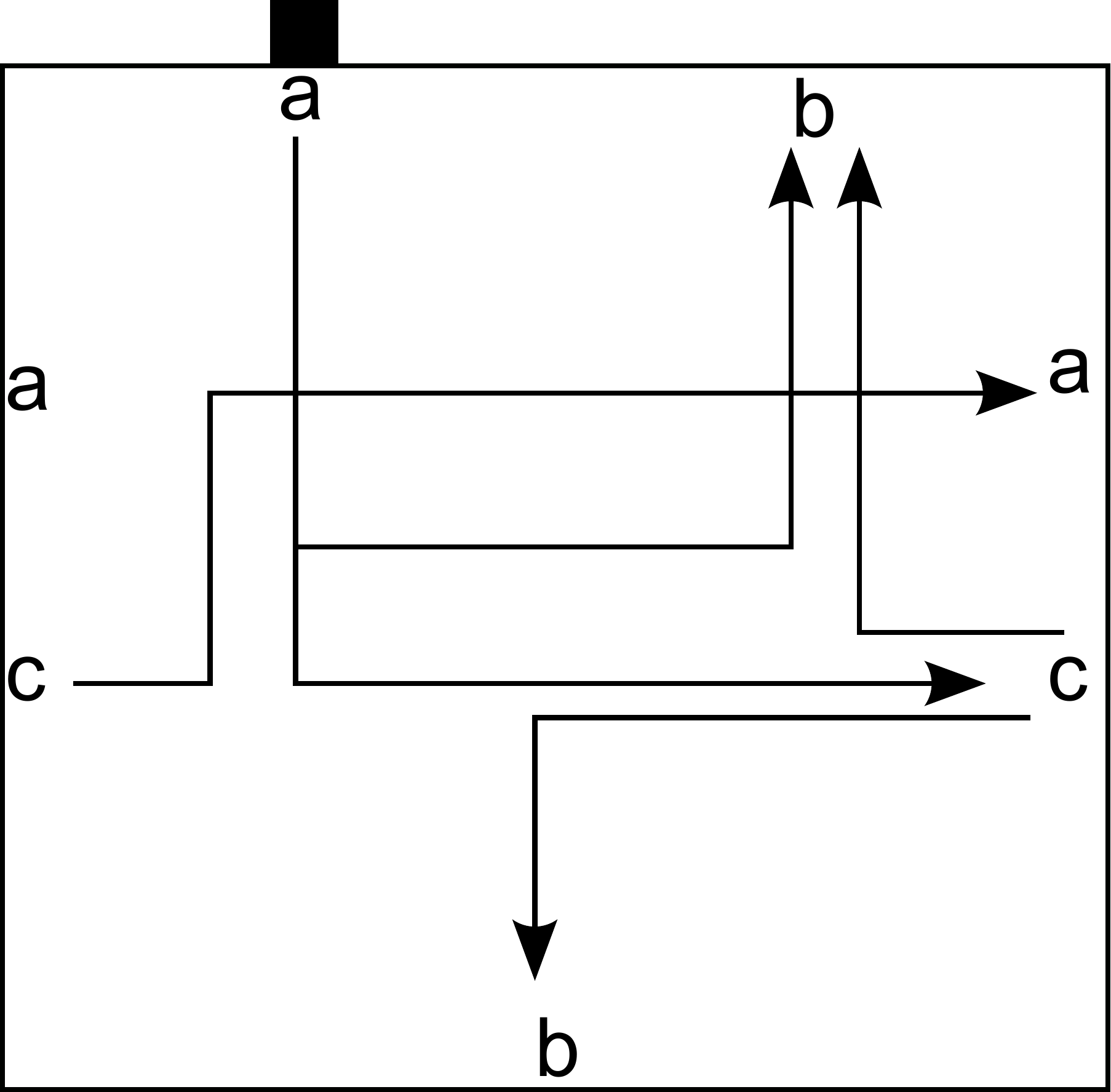}
\caption{The tile simulated by Figure~\ref{fig:MacrotileOverview}.}
\label{fig:fanout_t_image}
\end{center}
\end{figure}

\subsubsection{Macrotile Algorithm}
Let $U$ be an enumerated set of $|T|$ hardcoded macrotiles in $S$ of length $10|\mathcal{G}|^2 + 2$. We informally define the function $PrintPathToFrame$ as a function which prints the signal path from the fan-out zone of $g_k$ to the single width frame associated with the signal that travels from $g_k$ to $g_l$ (i.e. it creates the necessary series of initially \ton and \tlatent glues plus signals which combine to form a directed signalling pathway along that route).  Similarly, the function $PrintPathAroundFrame$ prints the signal path from the point where the signal path enters the frame to the tile on the frame that is closest to the merge tile associated with $g_k$ in the fan-in zone of $g_l$.  The function $PrintPathFromFrame$ prints the path from the frame to the merge tile associated with $g_k$ in the fan-in zone of $g_l$.  Using these subroutines, the algorithm for macrotile creation is given in Figure~\ref{fig:mainMacrotileCreation}.

\begin{figure}[t]
\begin{algorithm}[H]
  \KwData{$\mathcal{T}$, $U$}
  \KwResult{$\mathcal{S}$}

    \For {$t_n = (G_n, L_n, \delta_n, \Pi_n) \in T$}{
    Print hardcoded $|\mathcal{G}|^2 \times |\mathcal{G}|^2$ square on $u_n \in U$\\
    \For {$1 \leq i \leq 4$}{
        \For {$g_k \in G_i$} {
            Print exterior glue tile for $g_k$ on $u_n$ \\
            Print fan-out zone of length $|G|$ associated with $g_k$ on $u_n$\\
            Print fan-in zone of length $|G|$ associated with $g_k$ on $u_n$\\
            \tcc{Print signals along the frame associated with $g_k$ to glues which $g_k$ triggers}
            \For{$g_l \in \delta_n(g_k, i)$ where $g_l \in G_i^{\prime}$}{
                PrintPathToFrame($g_k$,$g_l$)\\
                PrintPathAroundFrame($g_k$, $g_l$)\\
                PrintPathFromFrame($g_k$, $g_l$)\\
                }
            }
        }
    }
\end{algorithm}
\caption{The main algorithm making macrotiles.}
\label{fig:mainMacrotileCreation}
\end{figure}

\subsubsection{Achieving equivalent dynamics}

We call the tiles that form a single tile wide frame around a macrotile the {\em boundary}. Tiles of a macrotile that are not on the boundary are called {\em interior} tiles. Glues on tile edges facing the outside of a boundary
are called {\em exterior} glues and glues on edges facing the inside of the boundary are called {\em interior} glues. To simulate both the production {\em and} the dynamics of an STAM$^+$ system, it is crucial that the boundary tiles of the simple macrotiles resulting from our fan-out reduction procedure form completely before playing a part in the assembly of the simulated system. Otherwise, attachment to an existing
assembly of partially formed macrotiles could lead to a situation depicted in Figure~\ref{fig:improperGrowth} where tiles of one macrotile ($T_1$ and $T_2$ in Figure~\ref{fig:improperGrowth}) begin to grow from the West, while
tiles of another macrotile ($T_3$ and $T_4$ in Figure~\ref{fig:improperGrowth}) grow from the South. Also note that only the boundary needs to be formed, the tiles that grow the interior of the macrotiles are only used to pass signals.
Once interior tiles fall into place asynchronous signals can be passed. Since signal passing in the STAM$^+$ is asynchronous the fact that interior tiles may attach at anytime does not prevent this system
from exhibiting equivalent dynamics. See \ifabstract \cite{Signals3DArxiv} \else Section~\ref{sec:proofSketch} \fi for details on correctness.
\begin{figure}[htp]
\begin{center}
\includegraphics[width=2.5in]{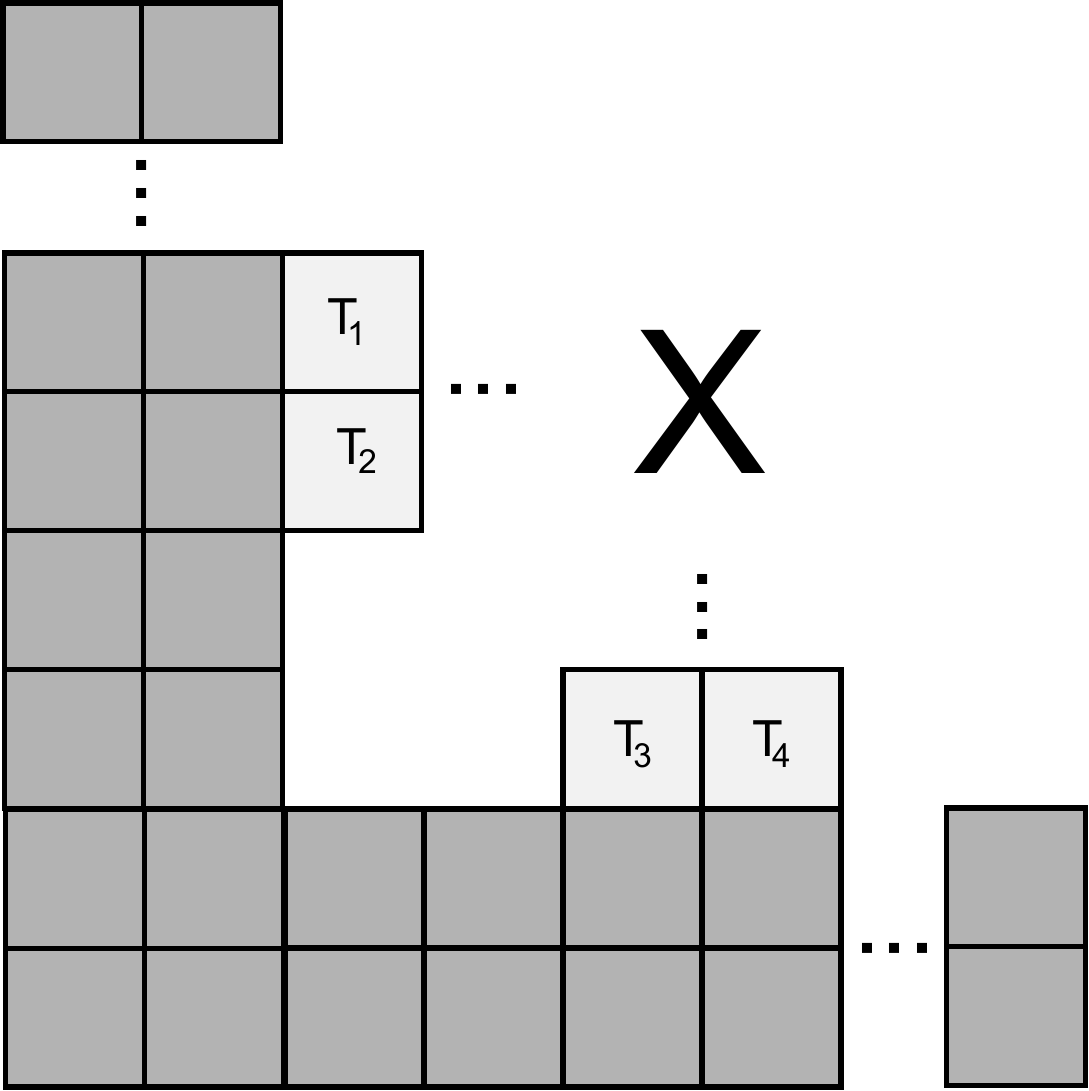}
\caption{Allowing exterior glues of macrotiles to bind prior to the formation of the boundary of the macrotile can prevent equivalent dynamics. For example, $T_1$ and $T_2$ may
be tiles of a macrotile that does not contain the tiles $T_3$ and $T_4$.}
\label{fig:improperGrowth}
\end{center}
\end{figure}

We provide a technique using signal passing that ensures that the boundary of a macrotile is formed prior to turning on glues exposed on the exterior of
the macrotile. This forces the macrotile's boundary to form before it is used in simulation at the expense of increasing the number of signals on a boundary tile by $3$. In Section~\ref{sec:reduceSigPerTile}
we show that this added signal complexity can be reduced.
For a given macrotile, we define a {\em circuit} to be a path of active glues through this macrotile such that the path starts and ends at a preset initial tile and the signals along this path propagate
through adjacent tiles until the signal reaches the initial tile. See Figure~\ref{fig:circuits} for an example.
\begin{figure}[htp]
\begin{center}
\includegraphics[width=4.5in]{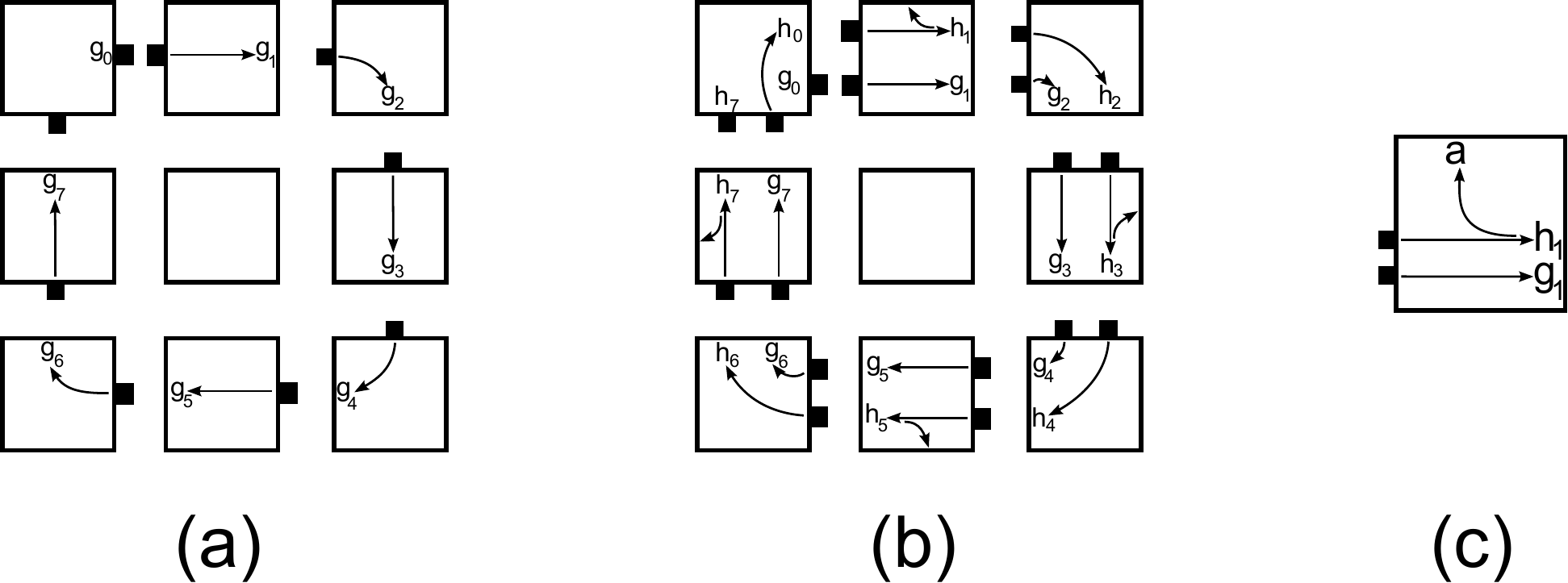}
\caption{{\bf(a)} An example of a circuit following the boundary tiles of a macro tile. {\bf(b)} An example of a verifying circuit labeled by g glues followed by a presenter circuit followed by h glues. Note that glue $g_7$ cannot initiate the presenter circuit until the boundary of the macrotile is fully formed. {\bf(c)} Glue `a' is turned on when $h_1$ binds. }
\label{fig:circuits}
\end{center}
\end{figure}
Now, overlay each of the border tiles with $2$ circuits both with initial tiles in the top left corner that follow a path going clockwise as
in Figure~\ref{fig:circuits}(b). One of these circuits, the {\em verifying circuit}, (labeled with g glues in Figure~\ref{fig:circuits}(b)) is used to check that the boundary tiles have completely formed.
The other circuit,
the {\em presenter circuit}, (labeled with h glues in Figure~\ref{fig:circuits}(b)) is used to turn on exterior glues of the macrotile. When glue $g_7$ of the verifying circuit has attached, it sends a signal to initiate the presenter circuit.
As signals are propagated through this circuit, we can once again utilize mutual activation to turn exterior glues on.
Figure~\ref{fig:circuits}(c) shows an example of turning on a North glue by
$h_1$ by mutual activation. As the presenter circuit propagates its signal, glue $h_1$ binds, sending a signal to turn \ton glue $a$. Each glue $h_i$ for $i < 8$
can use mutual activation to turn \ton an exterior glue of a macrotile. Since each tile belonging to the border of our constructed macrotile contains at most $1$ exterior glue, the number of signals
on any one of these boundary tiles increases by at most $3$. Finally, it is easy to see that a macrotile of any size can be endowed with these circuits and that
any macrotile with these circuits must have a completely formed boundary before exterior glues are turned \ton. We say that circuits with this property {\em delay exterior glues}.

\subsubsection{Circuit printing algorithm}
The circuit described above can be ``printed'' on a macrotile with the algorithm in Figure~\ref{fig:circuitAlgo}.

\begin{figure}[ht]
\begin{algorithm}[H]
  \KwData{A set of macrotile boundary tile types $B=\{t_i|1\leq i\leq m\}$ enumerated clockwise starting in the top left corner, $n=$scale factor}
  \KwResult{A set of macrotile boundary tile types equipped with circuits that delay exterior glues}
  \For{$0 \leq i < 4n-4$}{
    \tcc{handle initial tile type}
    \If{$i=1$}{
      \tcc{add glue $g_0$ to the East edge of type $t_0$ that is in the on state}
      addActiveGlue($t_i$, $g_0$, on, East)\\
      addActiveGlue($t_i$, $h_0$, latent, East)\\
      addActiveGlue($t_i$, $g_{m-1}$, on, South)\\
      addActiveGlue($t_i$, $h_{m-1}$, on, South)\\
      \tcc{add a signal from the South glue $g_{m-1}$ to the East glue $h_0$}
      addSignal($g_{m-1}$, South, $h_0$, East)
    }
    \tcc{North boundary tile type cases}
    \ElseIf{$1<i<n-1$}{
      \tcc{add glues and signal for the verifying circuit}
      addActiveGlue($t_i$, $g_{i-1}$, on, West)\\
      addActiveGlue($t_i$, $g_i$, latent, East)\\
      addSignal($g_{i-1}$, West, $g_i$, East)\\
      \tcc{add glues and signal for the presenter circuit}
      addActiveGlue($t_i$, $h_{i-1}$, on, West)\\
      addActiveGlue($t_i$, $h_i$, latent, East)\\
      addSignal($h_{i-1}$, West, $h_i$, East)

      \If{$\exists a$ an exterior glue of $t_i$}{
        \tcc{add a signal from the East glue $h_i$ to the exterior glue $a$}
        addSignal($h_{i}$, East, $a$, North)
      }
    }
    \ElseIf{$i=n$}{
      \tcc{add glues and signal for the verifying circuit}
      addActiveGlue($t_i$, $g_{i-1}$, on, West)\\
      addActiveGlue($t_i$, $g_i$, latent, South)\\
      addSignal($g_{i-1}$, West, $g_i$, South)\\
      \tcc{add glues and signal for the presenter circuit}
      addActiveGlue($t_i$, $h_{i-1}$, on, West)\\
      addActiveGlue($t_i$, $h_i$, latent, South)\\
      addSignal($h_{i-1}$, West, $h_i$, South)
    }
    \tcc{$\ldots$ continue for East, South and West boundary tile types and the two remaining corner tiles}
  }
\end{algorithm}
\caption{Algorithm to place a circuit on the boundary of a macrotile.}
\label{fig:circuitAlgo}
\end{figure}

\subsection{Reducing the number of signals per tile}\label{sec:reduceSigPerTile}

Any macrotile can be equipped with circuits that delay exterior glues. We would like to minimize the number of signals per tile. At the moment any tile in our macrotile construction contains at most $4$ signals. Here we present techniques to reduce the number of signals per tile required. First, we consider systems with any temperature and reduce the number of signals per tile to $2$. Then we show that for systems with temperature greater than $1$, cooperation allows us to
reduce the number of signals per tile required to just $1$.

\subsubsection{Reducing signals per tile at any temperature}
We can obtain the results of Theorem~\ref{thm:2SimpAtTemp1} by modifying
the macrotile construction given in Section~\ref{sec:MacroCreation}.
First, consider any macrotile given by our construction. Observe that tiles not on the boundary of a macrotile have at most $2$ signals already. However, tiles on the boundary may contain up to $4$ signals.
Figure~\ref{fig:4SimpSignalTile} shows a typical boundary tile. The two horizontal running signals belong to the verifying circuit and the presenter circuit. $d$ is the glue that is presented. In Figure~\ref{fig:4SimpSignalTile}, $d$ also signals some other glue. Therefore, there is a signal coming out of $d$ leading to the signal passing lanes of our macrotile construction.

\begin{figure}[htp]
\begin{center}
\includegraphics[width=1in]{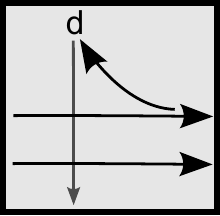}
\caption{Example of a boundary tile with $4$ signals.}
\label{fig:4SimpSignalTile}
\end{center}
\end{figure}

Note that the number of signals per tile can easily be reduced to $3$ by splitting up the verifying circuit and the presenter circuit. In this case,
we encompass a macrotile with a $2$ tile wide frame and
 the verifying circuit fires around the tiles just inside the boundary tiles. When the verifying circuit completes, it signals the presenter circuit to then fire around the boundary of the macrotile. This could lead to a $1$ tile wide gap between some macrotiles in an assembly, but this is enough to ensure proper growth since no other macrotile could fit into or start growing in this one tile wide gap.
 After placing verifying circuits and presenter circuits on separate tiles,
 reducing the number of signals per tile needed can be done using the gadget in Figure~\ref{fig:temp1BoundarySignalReduce}.

\begin{figure}[htp]
\begin{center}
\includegraphics[width=2.5in]{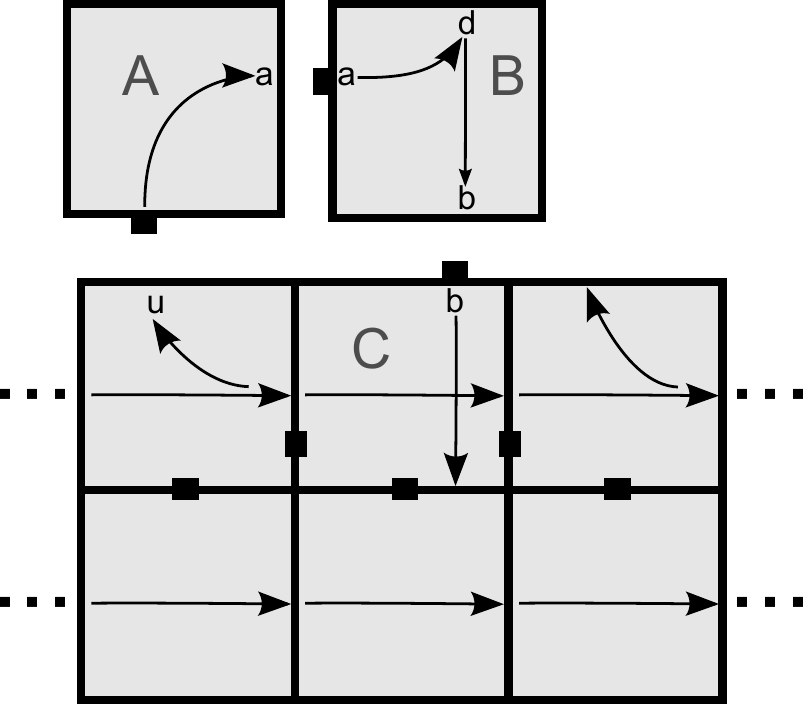}
\caption{Reducing the number of signals per tile to $2$.}
\label{fig:temp1BoundarySignalReduce}
\end{center}
\end{figure}

In Figure~\ref{fig:temp1BoundarySignalReduce}, the presenter circuit now turns on $u$, allowing tile $A$ to bind and turning on glue $a$. This allows tile $B$ to bind and eventually signal $d$ on. Now when $d$ sends a signal down, it only crosses one circuit per tile. This reduces the $4$ signals used in Figure~\ref{fig:4SimpSignalTile} to just $2$ signals per tile. In order to use this gadget with our macrotile construction, we must double the scale factor since we must add one tile between each case of fan-out.

Lemma~\ref{lem:oneToTwo} implies that at temperature $1$, $2$-simplified is the best that we can do. At temperature greater than $1$ cooperation can be used to reduce the number of signals per tile to $1$.

\subsubsection{Reducing signals per tile at temperature greater than $1$}\label{sec:reduceSigPerTileTemp2}
For simplicity, we present the details at temperature $2$. We modify the macrotile construction given in Section~\ref{sec:MacroCreation} to obtain the results of Theorem~\ref{thm:1SimpAtTemp2}.
By increasing the scale factor by the number of strength $2$ glues, we can assume that every exposed glue of a completely formed macrotile is of strength $1$ by representing strength $2$ glues with two strength $1$ glues. In the following modification to
our macrotile construction, we assume this to be the case.
The goal of this modification to the construction of our macrotiles is to reduce the number of signals per tile needed to only $1$.

First, notice that at temperature $2$, we can do away with the verifying circuit and the presenter circuit. This is done using cooperation. Consider a macrotile given by our construction prior to adding the these circuits, we can give all but one exposed exterior edge strength $1$ glues and the remaining exposed exterior edge a strength $2$ glue. Then, we can verify that a macrotile has a formed boundary with tiles that traverse the macrotile using cooperation. We give the final tile of these ``verifying'' tiles a glue of strength $2$ so that when it binds, it can allow for ``presenter'' tiles to grow around the macrotile using cooperation. Since all temperature $2$ glues on the exposed edges of the boundary of a completed macrotile have been represented by temperature $1$ glues, these glues cannot bind until the presenter tiles start to form a boundary. This prevents cases of improper growth depicted in Figure~\ref{fig:improperGrowth}.

With the verifying circuit and the presenter circuit gone, notice that macrotiles have at most $2$ signals per tile. There are three situations where $2$ signals are used: fan-in tiles, fan-out tiles and tiles with crossing signals. To achieve an STAM$^+$ system that is $1$-simplified it suffices to show that we can replace any of these three with constructions that only use $1$ signal per tile.
Fan-in can be replaced by the gadget in Figure~\ref{fig:temp2FanInSignalReduce}.
\begin{figure}[htp]
\begin{center}
\includegraphics[width=2.5in]{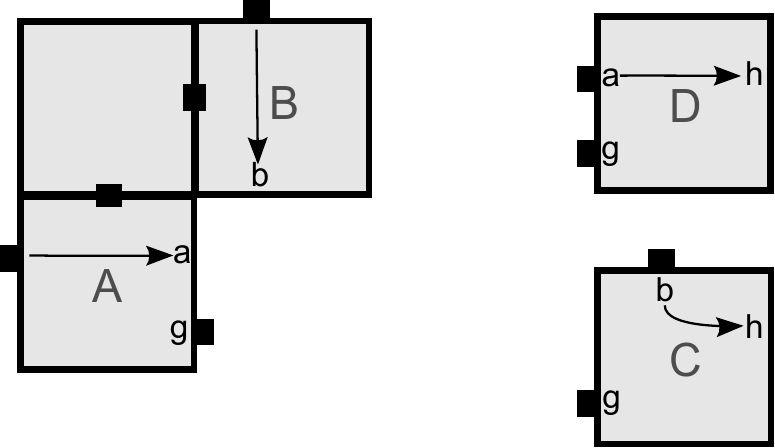}
\caption{Reducing the number of signals per tile to $1$ for fan-in at temperature $2$. $a$, $b$ and $g$ are strength $1$ glues. If glue $a$ of $A$ is signaled to turn \ton, cooperation allows tile $D$ to bind and a signal is sent to glue $h$ of $D$. Similarly, if glue $b$ of $C$ is turned \ton, tile $C$ can bind using cooperation and a signal is sent to glue $h$ of $C$. }
\label{fig:temp2FanInSignalReduce}
\end{center}
\end{figure}

\noindent Fan-out can be replaced by the gadget in Figure~\ref{fig:temp2FanOutSignalReduce}.

\begin{figure}[htp]
\begin{center}
\includegraphics[width=2in]{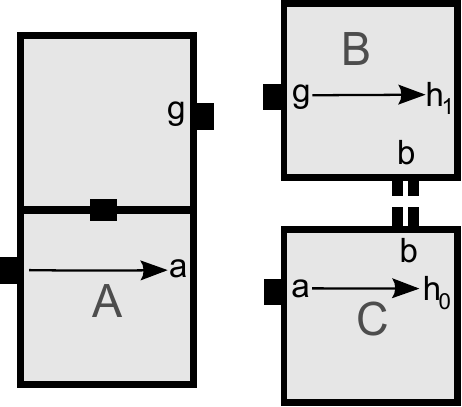}
\caption{Reducing the number of signals per tile to $1$ for fan-in at temperature $2$. $b$ is strength $2$ glue. $a$ and $g$ are strength $1$ glues. Therefore, tiles $B$ and $C$ can bond to form tuples. If glue $a$ of tile $A$ is turned \ton, one of these tuples can attach along glues $a$ and $g$. The binding of the $a$ glues signals $h_0$, and the binding of the $g$ glues signals $h_1$.}
\label{fig:temp2FanOutSignalReduce}
\end{center}
\end{figure}

\noindent All of the signal crossing that occurs in our macrotiles is of the form depicted by the single tile on the left in Figure~\ref{fig:temp2CrossingSignalReduce}. In particular, one signal runs vertically and the other runs horizontally. Signal crossings of this form can be replace by the gadget in Figure~\ref{fig:temp2CrossingSignalReduce}.

\begin{figure}[htp]
\begin{center}
\includegraphics[width=4in]{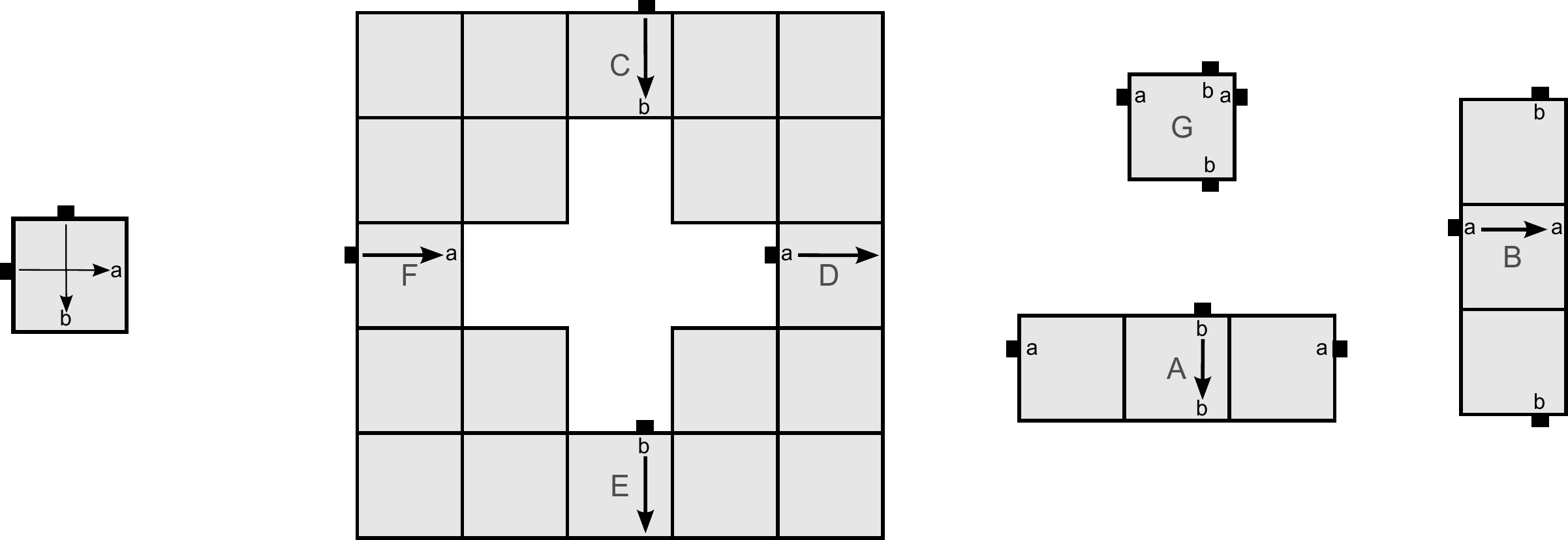}
\caption{Reducing the number of signals per tile to $1$ for signal crossing at temperature $2$. Left: a tile with signal crossing. Middle: a replacement gadget for signal crossing at temperature $2$. Right: two tiles that allow for signal crossing. $a$ and $b$ are strength $1$ glues. If $a$ is signaled to turn on, cooperation allows for tile $A$ to attach. The binding event along the East glue of $A$ allows a signal to continue horizontally. The signal on $A$ and the tile $G$
allow the vertical signal to continue. Likewise, if $b$ is signaled to turn on then the $B$ tile attaches, propagating the signal vertically, while the signal of $B$ and the tile $G$ allow for the horizontal signal to continue.}
\label{fig:temp2CrossingSignalReduce}
\end{center}
\end{figure}

Notice that to use these temperature $2$ gadgets, we first increase the scale factor so that we can represent a temperature $2$ glue by temperature $1$ glues. Also, when we replace tiles by these gadgets, the scale factor
must increase by a constant factor to preserve some connectivity of the macrotile. For example, as depicted in Figure~\ref{fig:temp2CrossingSignalReduce}, a tile with signal crossing is replaced by a $5$ by $5$ macrotile.
It is also interesting to note that in the original construction complete assembly of the macrotile is possible prior to the sending of any signals. With the modified constructions, some of the assembly of the
macrotile cannot occur until after signals have been fired.

Finally, for any temperature $\tau>1$, we can achieve the same reductions of signal complexity as follows. First, for any glues not on the exterior of the macrotile, replace strength $1$ glues in
the modification by strength $\lceil{\tau/2}\rceil$ glues and strength $2$ glues by strength $\tau$ glues. For glues on the exterior of the macrotile, just as we represent strength $2$ glues by two strength $1$ glues on two separate tiles,
we replace any strength $\tau$ glues by strength $\lceil{\tau/2}\rceil$ and strength $\lfloor{\tau/2}\rfloor$ glues. We keep the strengths of the other glues on the exterior of the macrotile fixed. This increases the scale factor by the number of $\tau$ strength glues, but ensures that strength $\tau$ bonds do not occur independent of the attaching of a macrotile.

\subsection{Proof of correctness}\label{sec:ProofOfCorrectness}
\label{sec:proofSketch}
To show correctness of the fan-out reduction algorithm we must show that the resulting STAM$^+$ system simulates the original STAM$^+$ system by defining a representation function $R$ that maps
macrotiles of the simulating system $\mathcal{S}$ to tiles of the original system $\mathcal{T}$. For a macrotile $\alpha^{\prime} \in \prodasm{S}$,
tile $\alpha$ in $T$ and empty space $\varepsilon$, define $R$ as follows.
$R(\alpha^{\prime}) = \alpha$ if a single exterior glue of $\alpha^{\prime}$ is on and all latent exterior glues of $\alpha^{\prime}$ match glues of $\alpha$. Define $R$ to be $\varepsilon$ otherwise.
Then for our construction and choice of $R$,
when the boundary of $\alpha^{\prime}$ has formed and all latent exterior glues of $\alpha^{\prime}$ match glues of $\alpha$, $\alpha^{\prime}$ uniquely represents
$\alpha$. Let $R^*$ be the assembly representation function. To show that $\mathcal{S}$ simulates $\mathcal{T}$ we must show that $\mathcal{T}$ follows $\mathcal{S}$ and that $\mathcal{S}$
strongly models $\mathcal{T}$.

We first show that $\mathcal{T}$ follows $\mathcal{S}$.
Let $\gamma, \beta \in \prodasm{T}$ and $\gamma^{\prime}, \beta^{\prime}\in \prodasm{S}$ be such that $R^*(\gamma^{\prime}) = \gamma$ and $R^*(\beta^{\prime}) = \beta$.
If $\gamma^{\prime}$ binds to $\beta^{\prime}$ to give $\sigma^{\prime}$ then the exposed edges of $\gamma^{\prime}$ and $\beta^{\prime}$ have glues that must match a subset of exposed glues of the assemblies
$\gamma$ and $\beta$. Therefore $\gamma$ and $\beta$ can bind along edge pairs that correspond to the edge pairs of $\gamma^{\prime}$ and $\beta^{\prime}$
whose binding gives $\sigma^{\prime}$. Letting $\sigma$ be this binding of $\gamma$ and $\beta$, we see that $R^*(\sigma^{\prime}) = \sigma$. Therefore, $\mathcal{T}$ follows $\mathcal{S}$.

To show that $\mathcal{S}$ models $\mathcal{T}$, let $\alpha$ and $\beta$ be in $\prodasm{T}$ and let $\gamma$ be in $C_{\alpha,\beta}^\tau$.
Let $\alpha^{\prime}, \beta^{\prime}  \in \prodasm{S}$ such that $R^*(\alpha^{\prime}) = \alpha$ and $R^*(\beta^{\prime}) = \beta$.
Note that in order for $R^*$ to map $\alpha^{\prime}$ each macrotile used in the assembly of $\alpha^{\prime}$ must have a completely formed boundary.
Likewise for $\beta^{\prime}$.
Then notice that exterior glues of $\alpha$ that binds with an
exterior glues of $\beta$ to yield $\gamma$ correspond to exterior glues of $\alpha^{\prime}$ and $\beta^{\prime}$. Once the glues of $\alpha^{\prime}$ and $\beta^{\prime}$ have been triggered,
allowing these glues to bind yields an assembly
$\gamma^{\prime} \in \prodasm{S}$ such that $R^*(\gamma^{\prime}) = \gamma$. Hence $\mathcal{S}$ models $\mathcal{T}$.

} %
\subsection{Summary of Results}

At temperature $1$, the minimum signal complexity obtainable in general is $2$ and while it is possible to eliminate either fan-in or mutual activation,
it is impossible to eliminate both. For temperatures greater than $1$, cooperation allows for signal complexity to be reduced to just $1$ and for
both fan-in and mutual activation to be completely eliminated. Table~\ref{tbl:summary} gives a summary of these two cases of reducing signal complexity and shows the cost of such reductions in terms of
scale factor and tile complexity.
\begin{table}
\centering
\setlength{\tabcolsep}{.5em}
\begin{tabular}{|c|c|c|c|c|}\hline\rowcolor{black!20!white}
Temperature & Signal & Scale Factor & Tile Complexity & Contains Fan-Out / \\\rowcolor{black!20!white}
 & per Tile & & & Mutual Activation\\\hline
$1$ & $2$ & $O(|T|^2)$   & $O(|T|^2)$   & one or the other    \\\hline
$>1$ & $1$ & $O(|T|^2)$   & $O(|T|^2)$   & neither \\\hline
\end{tabular}
\caption{The cost of reducing signal complexity at $\tau=1$ and at $\tau > 1$.\label{tbl:summary}}
\end{table}

\section{A 3D 2HAM Tile Set which is IU for the STAM$^+$}\label{sec:3D-IU-for-STAM}
In this section we present our main result, namely a 3D 2HAM tile set which can be configured to simulate any temperature $1$ or $2$ STAM$^+$ system, at temperature $2$.  It is notable that although three dimensions are fundamentally required by the simulations, only two planes of the third dimension are used.

\begin{theorem}\label{thm:3D-IU-for-STAM}
There is a 3D tile set $U$ such that, in the 2HAM, $U$ is intrinsically universal at temperature $2$ for the class of all 2D STAM$^+$ systems where $\tau \in \{1,2\}$.  Further, $U$ uses no more than $2$ planes of the third dimension.
\end{theorem}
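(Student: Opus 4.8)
The plan is to build $U$ in two conceptual stages, leaning heavily on the signal-complexity reductions of Section~\ref{sec:transform-to-simple-stam}. First, given any 2D STAM$^+$ system $S$ with $\tau \in \{1,2\}$, I would pre-process it: apply Theorem~\ref{thm:1SimpAtTemp2} when $\tau = 2$ and Theorem~\ref{thm:2SimpAtTemp1} when $\tau = 1$ to obtain a simplified STAM$^+$ system $S'$ (of the same temperature) in which every tile carries at most a constant number of signals and has no fan-out. The point of this step is that the local signalling behavior $U$ must reproduce is now drawn from a fixed, finite repertoire of gadget types, which is exactly what makes a single universal tile set possible. Since simulation composes (up to multiplying scale factors), it then suffices to exhibit one 3D 2HAM tile set $U$, operating at temperature $2$, that simulates every such simplified $S'$; the computable maps $\mathcal{R}$ and $\mathcal{S}$ required by intrinsic universality are then obtained by composing the simplification transformation with the macrotile encoding described below, with the per-system information carried in the input supertile $S_\mathcal{T}$.

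The heart of the construction is a scheme encoding each simplified tile type $t$ of $S'$ as a 3D macrotile assembled entirely from $U$. Each edge of the macrotile presents, in a fixed positional ordering (as in the simplification construction, so neighboring macrotiles line up and cannot bind off-center), the glues of the corresponding edge of $t$. The single defining idea is to simulate a signal --- i.e., a binding event on a glue $g$ that turns a latent glue $a$ into the \ton state --- by the growth of a static ``wire'': a path of tiles from $U$ that is enabled to grow only once the binding representing $g$ has exposed its initial cooperative glue, and which, upon reaching the position of $a$, places a tile that presents $a$. Because the simulator runs at temperature $2$, cooperation lets each successive wire tile attach only after its predecessor, so the nondeterministic attachment order of the 2HAM faithfully reproduces the asynchronous firing and propagation of STAM$^+$ signals, and a latent glue is simply a glue that no wire has yet reached. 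Temperature $2$ is essential here: the impossibility phenomenon underlying Theorem~\ref{thm:fanOutEssential} shows that at temperature $1$ one cannot control binding order, which is precisely what cooperation buys us.

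The third dimension enters precisely to route these wires past one another. Wherever two signal paths would cross inside a macrotile, I would send one of them up into the second plane, over the other, and immediately back down, using a small fixed set of ``up,'' ``over,'' and ``down'' tiles in $U$; since only one wire ever needs to detour at a crossing and it returns at once, no more than two planes are required. In addition, to prevent the improper growth of Figure~\ref{fig:improperGrowth} --- partially formed macrotiles binding to one another before their boundaries are complete --- I would adapt the verifying/presenter circuit idea to the static setting: a cooperative verification path must traverse the fully assembled boundary of a macrotile before any exterior glue is presented, so a macrotile participates in simulated binding only once it fully represents its tile $t$.

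Finally I would define the representation function $R$ by reading off which exterior glues a completed macrotile presents, extend it to $R^*$ and $\tilde{R}$, and verify the two halves of Definition~\ref{scott-defn:simulate}: that $S'$ \emph{follows} $U$ (every single 2HAM combination or wire step projects to at most one step of $S'$) and that $U$ \emph{strongly models} $S'$ (every combination of $S'$-supertiles is realized by growing and combining the corresponding macrotiles). The main obstacle, I expect, is this correctness argument rather than the gadgetry: I must show that the static, asynchronous growth of wires together with the nondeterministic two-handed combination of macrotiles reproduces STAM$^+$ dynamics exactly --- in particular that no premature combination or diagonal fuzz occurs, that the boundary-verification circuitry genuinely delays \emph{all} exterior glues, and that the two-plane routing never strands a wire in a position from which it cannot complete. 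Establishing equivalent dynamics in both directions, while confining the third dimension to two planes, is the crux.
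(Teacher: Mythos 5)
Your high-level architecture matches the paper's: pre-simplify via the transformations of Section~\ref{sec:transform-to-simple-stam}, grow macrotiles from a universal tile set configured by a single input supertile, represent a \tlatent glue as a glue not yet presented, and use temperature-$2$ cooperation to grow static ``wires'' that stand in for signal propagation. (The paper applies the $2$-simplified transformation at both temperatures rather than the $1$-simplified one at $\tau=2$, but that difference is immaterial.) However, there are two places where your proposal omits the idea that actually makes the construction work, and both are load-bearing. The first is the trigger. You write that a wire ``is enabled to grow only once the binding representing $g$ has exposed its initial cooperative glue,'' but in the static 2HAM a binding event exposes nothing: the only way the simulator can detect that two macrotiles have combined across a simulated glue is geometric, and a single tile cannot do it --- a tile placed at $z=1$ over the seam sits above exactly one of the two macrotiles, so it cannot cooperatively require both. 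The paper's solution is a \emph{duple} (a pre-joined two-tile supertile) that attaches in the $z=1$ plane spanning the boundary tiles of the two macrotiles, using one strength-$1$ upward glue from each; only this attachment, which requires both macrotiles to be present and aligned, initiates the wire. Without this (or an equivalent) gadget your wires never start; it is also the first of the two reasons the construction is inherently three-dimensional, not merely the wire crossings.

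The second gap is the routing. You allocate the second plane only to local over-crossings, but a wire originating at a macrotile edge must traverse the already-assembled body of the macrotile to reach its destination edge, so it must live in $z=1$ for most of its length regardless of crossings; and two wires fired asynchronously may race for the same real estate, so ``send one up, over, and immediately back down'' does not by itself bound the construction to two planes. The paper resolves this with a specific layout: the macrotile is a ring with a hollow center containing two empty ``tracks'' in $z=0$; wires travel in $z=1$ over the ring, drop into a track, and circulate clockwise; coordinated zig-zag growth creates well-defined ``points of competition'' so that a wire that loses a track climbs briefly over the winner and claims the inner track, and two tracks suffice precisely because the $2$-simplification bounds the number of concurrent signals per macrotile by $2$ (this is also why the paper insists on bounding fan-out and fan-in before simulating). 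You correctly identify non-stranding of wires as the crux, but identifying it is not the same as resolving it; the track and point-of-competition design is the content of the proof at exactly that point.
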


To prove Theorem~\ref{thm:3D-IU-for-STAM}, we let $\mathcal{T}' = (T',S',\tau)$ be an arbitrary STAM$^+$ system where $\tau \in \{1,2\}$.  For the first step of our simulation, we define $\mathcal{T} = (T,S,\tau)$ as a $2$-simplified STAM$^+$ system which simulates $\mathcal{T}'$ at scale factor $m' = O(|T'|^2)$, tile complexity $O(|T'|^2)$, as given by Theorem~\ref{thm:2SimpAtTemp1}, and let the representation function for that simulation be $R': B^T_{m'} \dashrightarrow T'$.  We now show how to use tiles from a single, universal tile set $U$ to form an initial configuration $S_{\mathcal{T}}$ so that the 3D 2HAM system $\mathcal{U}_{\mathcal{T}} = (U,S_{\mathcal{T}},2)$ simulates $\mathcal{T}$ at scale factor $m = O(|T|\log|T|)$ under representation function $R: B^U_m \dashrightarrow T$.  This results in $\mathcal{U}_{\mathcal{T}}$ simulating $\mathcal{T}'$ at a scale factor of $O(|T'|^4\log(|T'|^2))$ via the composition of $R$ and $R'$.  Note that throughout this section, $\tau$ refers to the temperature of the simulated systems $\mathcal{T}$ and $\mathcal{T}'$, while the temperature of $\mathcal{U}_{\mathcal{T}'}$ is always $2$.

\subsection{Construction overview}

\ifabstract
In this section, due to restricted space we present the 3D 2HAM construction at a very high level.  Please see \cite{Signals3DArxiv} for more details.
\else
For clarity, we present a high-level overview of the construction.  Please see Section~\ref{sec:3D-construction-details} for more details.
\fi

Assuming that $T$ is a $2$-simplified STAM$^+$ tile set derived from $T'$, we note that for each tile in $T$: 1. glue deactivation is not used, 2. it has $\le 2$ signals, 3. it has no fan-out, and 4. fan-in is limited to $2$.
To simulate $\mathcal{T}$, we create an input supertile $\sigma_{\mathcal{T}}$ from tiles in $U$ so that $\sigma_{\mathcal{T}}$ fully encodes $\mathcal{T}$ in a rectangular assembly where each row fully encodes the definition of a single tile type from $T$.  Beginning with an initial configuration containing an infinite count of that supertile and the individual tile types from $U$, assembly begins with the growth of a row on top of (i.e. in the $z=1$ plane) each copy of $\sigma_{\mathcal{T}}$.  The tiles forming this row nondeterministically select a tile type $t \in T$ for the growing supertile to simulate, allowing each supertile the possibility of simulating exactly one $t \in T$, and each such $t$ to be simulated.  Once enough tiles have attached, that supertile maps to the selected $t$ via the representation function $R$, and at this point we call it a \emph{macrotile}.

\begin{figure}[htp]
\begin{center}
    \includegraphics[width=.60\textwidth]{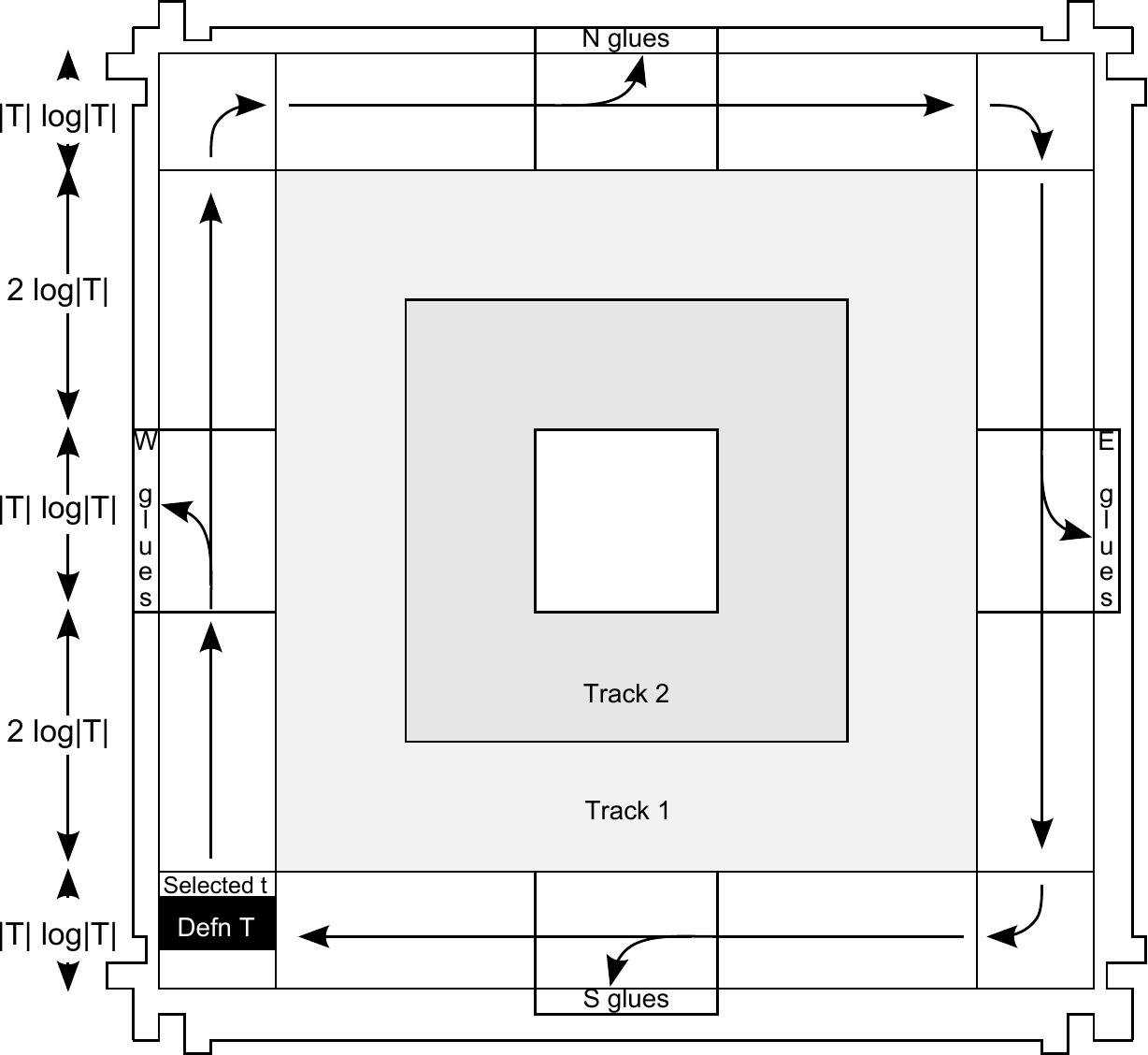}
    \caption{A high level sketch of the components and formation of a macrotile, including dimensions, not represented to scale.}
    \label{fig:3d-sim-macrotile-sketch}
\end{center}
\end{figure}

Each such macrotile grows as an extension of $\sigma_{\mathcal{T}}$ in $z=0$ to form a square ring with a hole in the center.  The growth occurs clockwise from $\sigma_{\mathcal{T}}$, creating the west, north, east, then south sides, in that order.  As each side grows, the information from the definition of $t$ which is relevant to that side is rotated so that it is presented on the exterior edge of the forming macrotile.  The second to last stage of growth for each side is the growth of geometric ``bumps and dents'' near the corners, which ensure that any two macrotiles which attempt to combine along adjacent edges must have their edges in perfect alignment for any binding to occur.  The final stage of growth for each side is to place the glues which face the exterior of the macrotile and are positioned correctly to represent the glues which begin in the \ton state for that side.

Once the first side of a macrotile completes (which is most likely to be the west side, but due to the nondeterministic ordering of tile additions it could potentially be any side), that macrotile can potentially bind to another macrotile, as long as the tiles that they represent would have been able to do so in $\mathcal{T}$.  Whenever macrotiles do bind to each other, the points at which any binding glues exist allow for the attachment of duples (supertiles consisting of exactly 2 tiles) on top of the two binding tiles (in $z=1$).  These duples initiate the growth of rows in $z=1$ which move inward on each macrotile to determine if there is information encoded which specifies a signal for that simulated glue to fire.  If not, that row terminates.  If so, it continues growth by reading the information about that signal (i.e. the destination side and glue), and then growth continues which carries that information inward to the hole in the center of the macrotile.  Once there, it grows clockwise in $z=0$ until arriving at the correct side and glue, where it proceeds to initiate the growth of a row in $z=1$ out to the edge of the macrotile in the position representing the correct glue.  Once it arrives, it initiates the addition of tiles which effectively change the state of the glue from \tlatent to \ton by exposing the necessary glue(s) to the exterior of the macrotile.

The width of the center hole is carefully specified to allow for the maximum necessary $2$ ``tracks'' along which fired signals can travel, and growth of the signal paths is carefully designed to occur in a zig-zag pattern such that there are well-defined ``points of competition'' which allow two signals which are possibly using the same track to avoid collisions, with the second signal to arrive growing over the first, rotating toward the next inward track, and then continuing along that track.  Further, the positioning of the areas representing the glues on each edge is such that there is always guaranteed to be enough room for the signals to perform the necessary rotations, inward, and outward growth.  If it is the case that both signals are attempting to activate the same glue on the same side, when the second signal arrives, the row growing from the innermost track toward the edge of the macrotile will simply run into the ``activating'' row from the first signal and halt, since there is no need for both to arrive and in the STAM such a situation simply entails that signal being discarded.  (Note that this construction can be modified to allow for any arbitrary full-tile signal complexity $n$ for a given tile set by simply increasing the number of tracks to $n$, and all growth will remain correct and restricted to $z \in \{0,1\}$.)

This construction allows for the faithful simulation of $\mathcal{T}$ by exploiting the fact that the activation of glues by fired signals is completely asynchronous in the STAM, as is the attachment of any pair of supertiles, and both processes are being represented through a series of supertile binding events which are similarly asynchronous in the 2HAM.  Further, since the signals of the STAM$^+$ only ever activate glues (i.e. change their states from \tlatent to \ton), the constantly ``forward'' direction of growth (until terminality) in both models ensures that the simulation by $\mathcal{U}_{\mathcal{T}}$ can eventually produce representations of all supertiles in $\mathcal{T}$, while never generating supertiles that don't correctly map to supertiles in $\mathcal{T}$ (equivalent production), and also that equivalent dynamics are preserved.

\begin{theorem}\label{thm:3D-IU-for-STAM-tau}
For each $\tau > 1$, there is a 3D tile set $\widehat{U}_{\tau}$ such that, in the 2HAM, $\widehat{U}_{\tau}$ is IU at temperature $\tau$ for the class of all 2D STAM$^+$ systems of temperature $\tau$.  Further, $U$ uses no more than $2$ planes of the third dimension.
\end{theorem}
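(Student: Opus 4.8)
The plan is to reduce, exactly as in the proof of Theorem~\ref{thm:3D-IU-for-STAM}, to a simplified system and then re-run the same 3D construction with all bond strengths rescaled from $2$ up to $\tau$. First I would take an arbitrary 2D STAM$^+$ system $\mathcal{T}' = (T',S',\tau)$ and apply Theorem~\ref{thm:2SimpAtTemp1} to obtain a $2$-simplified STAM$^+$ system $\mathcal{T} = (T,\sigma',\tau)$ at the \emph{same} temperature $\tau$ that simulates $\mathcal{T}'$ at scale factor $O(|T'|^2)$. (When $\tau > 1$ one may instead invoke Theorem~\ref{thm:1SimpAtTemp2} to get a $1$-simplified system, which needs only a single signal track; either choice works, and I would use the $2$-simplified version so as to reuse the existing two-track layout verbatim.) Since the class under consideration is \emph{all} temperature-$\tau$ systems for a single fixed $\tau$, and since by the temperature-separation phenomenon for the 2HAM recalled in Section~\ref{sec:intro} no simulator running below temperature $\tau$ can reproduce strength-$\tau$ binding in general, the simulator $\widehat{U}_\tau$ must itself operate at temperature $\tau$; this is exactly why the tile set is allowed to depend on $\tau$.

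Next I would transport the macrotile construction of Theorem~\ref{thm:3D-IU-for-STAM} to temperature $\tau$. Every geometric feature---the clockwise side growth, the bump-and-dent alignment teeth near the corners, the center hole sized for the signal tracks, and the zig-zag points of competition---is purely combinatorial and carries over unchanged, so the simulation still uses only the planes $z \in \{0,1\}$. What must change are the glue \emph{strengths}: wherever the temperature-$2$ construction held a bond internal to a macrotile with strength $2$ (whether via a single strength-$2$ glue or by cooperation of two strength-$1$ glues), I would replace it by a bond of total strength $\tau$, using either a single strength-$\tau$ glue or a cooperating pair of strengths $\lceil \tau/2 \rceil$ and $\lfloor \tau/2 \rfloor$, which is precisely the rescaling already employed in Section~\ref{sec:reduceSigPerTileTemp2} for the simplification results. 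This keeps every producible intermediate supertile $\tau$-stable while introducing no bond stronger than necessary.

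The substantive content is the exterior glue presentation and the faithfulness of binding at temperature $\tau$. Each on-state glue of a simulated tile $t$ is exposed on the completed macrotile at its native strength $k \le \tau$, in the position fixed by the side on which it lies; because the teeth force two abutting macrotiles into perfect edge alignment, two macrotiles can bind with total strength $\ge \tau$ if and only if the tiles they represent present matching on-glues summing to $\ge \tau$, which is exactly the binding rule of $\mathcal{T}$ at temperature $\tau$. To prevent a macrotile from binding before its boundary completes (the improper-growth pathology of Figure~\ref{fig:improperGrowth}), any exterior glue whose strength alone reaches $\tau$ is split across two adjacent boundary tiles into glues of strengths $\lceil \tau/2 \rceil$ and $\lfloor \tau/2 \rfloor$, again as in the simplification argument, so that a full strength-$\tau$ contact cannot form until both boundary tiles are in place. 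This costs a scale-factor blow-up proportional to the number of strength-$\tau$ glues but preserves the two-plane bound, since only $z=0$ boundary tiles and in-plane strengths are affected.

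The main obstacle I anticipate is verifying equivalent productions and equivalent dynamics uniformly in $\tau$---in particular, confirming that faithful \emph{multi-glue} binding at arbitrary strengths up to $\tau$ is reproduced, and that no macrotile ever binds off-center or before its boundary finishes, for every $\tau > 1$ at once. I would discharge this with the same representation-function argument used for the earlier results: define $R$ to send a macrotile to $t \in T$ precisely when its boundary is complete and its exposed on-glues match those of $t$ (and to empty space otherwise), then check that $\mathcal{T}$ follows the resulting temperature-$\tau$ system and that this system strongly models $\mathcal{T}$, mirroring the correctness proof accompanying Theorem~\ref{thm:2SimpAtTemp1} but with the rescaled strengths. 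Composing this representation function with $R'$ from the first reduction yields a simulation of the original $\mathcal{T}'$, completing the proof for each fixed $\tau > 1$.
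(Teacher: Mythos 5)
Your proposal is correct and follows essentially the same route as the paper: reduce to a $2$-simplified STAM$^+$ system, reuse the temperature-$2$ 3D macrotile construction with interior glue strengths rescaled ($1 \mapsto \lceil \tau/2 \rceil$, $2 \mapsto \tau$), split exterior strength-$\tau$ glues across two boundary tiles as $\lceil \tau/2 \rceil$ and $\lfloor \tau/2 \rfloor$, and verify correctness with the same representation-function argument. The only detail the paper makes explicit that you leave implicit is the addition of $O(\tau)$ tile types to encode and propagate the larger strength values from $\sigma_{\mathcal{T}}$ to the macrotile edges, which is a minor bookkeeping point.
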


To prove Theorem~\ref{thm:3D-IU-for-STAM-tau}, we create a new tile set $\widehat{U}_{\tau}$ for each $\tau$ from the tile set of Theorem~\ref{thm:3D-IU-for-STAM} by simply creating $O(\tau)$ new tile types which can encode the value of the strength of the glues of $T$ in $\sigma_{\mathcal{T}}$, and which can also be used to propagate that information to the edges of the macrotiles.  For the exterior glues of the macrotiles, just as strength $2$ glues were split across two tiles on the exterior of the macrotiles, so will $\tau$-strength glues, with one being of strength $\lceil \tau/2 \rceil$ and the other $\lfloor \tau/2 \rfloor$.  All glues which appear on the interior of the macrotile are changed so that, if they were strength $1$ glues they become strength $\lceil \tau/2 \rceil$, and if they were strength $2$ they become strength $\tau$.  In this way, the new tile set $\widehat{U}_{\tau}$ will form macrotiles exactly as before, while correctly encoding glues of strengths $1$ through $\tau$ on their exteriors, and the systems using it will correctly simulate STAM$^+$ systems at temperature $\tau$.

\later{
\section{Details of 3D 2HAM Simulation of STAM$^+$ Systems}\label{sec:3D-construction-details}

In this section, we present the details of the construction used to prove Theorem~\ref{thm:3D-IU-for-STAM} in the following steps:  (1) the construction of the $\sigma_{\mathcal{T}}$ , (2) the formation of macrotiles from $\sigma_{\mathcal{T}}$, (3) the passing of signals and activation of glues on a macrotile, and 4) a discussion of how the system $\mathcal{U}_{\mathcal{T}'}$ correctly simulates $\mathcal{T}$ and thus $\mathcal{T}'$.  Note that, except when specifically mentioned, all growth of the macrotile occurs in the plane $z=0$.

\subsection{The initial configuration}\label{sec:seed-supertile}

If the initial configuration of $\mathcal{T}$, $S$, is the default configuration (i.e. containing infinite copies of the singleton tiles from $T$ and nothing else), then the initial configuration for $\mathcal{U}_{\mathcal{T}}$, $S_{\mathcal{T}}$, will simply be infinite copies of each of the tiles from $U$ and the input supertile $\sigma_{\mathcal{T}}$.  If $S$ contains other supertiles, then $S_{\mathcal{T}}$ will also contain pre-formed supertiles which consist of macrotiles in $\mathcal{U}_{\mathcal{T}}$ which map to those supertiles using $R$.  For the rest of this discussion, we will assume the default configuration for $\mathcal{T}$, since such other configurations are a straightforward addition.

We create the initial configuration of $\mathcal{U}$, $S_{\mathcal{T}} = U \cup \{\sigma_{\mathcal{T}}\}$, where $\sigma_{\mathcal{T}}$ is a rectangle consisting of tiles from $U$ which has a single row encoding the definition of each tile $t \in T$, arranged so that their upward (those pointing in the positive $z$ direction) glues contain a representation of $t$, along with one additional row encoding the binary representation of the value $2\log |T|$ which will be used to control a binary counter later in the construction.  The content of the encoding can be seen in Figure~\ref{fig:3d-sim-tile-set-encoding}.

From left to right in each row (other than the northernmost two rows) is the encoding of the definition a tile type $t \in T$, prefaced with a `\#'. To encode tile definitions, first the set of all glues, $G$, used by tiles in $T$ are arranged in some order and each assigned a unique number from $1$ to $|G|$ (the number $0$ is reserved to represent the lack of a glue).  Then, for each side $d \in \{N,E,S,W\}$, a set of $|G|$ glue definitions is listed, separated by semi-colons.  This set consists of a glue definition of each glue as it appears (or doesn't appear) on side $d$ of $t$.  Each glue definition consists of 1) the strength of that glue on that side, followed by a comma, 2) a `$0$' if the initial state of that glue is \texttt{latent} and a `$1$' if its initial state is \texttt{on}, and 3) if that glue fires a signal, the letter corresponding to the destination side of that signal (i.e. $\{N,E,S,W\}$), then the binary number representing the number of glues to count past on the destination side before arriving at the glue activated by that signal (or $0$ for none).  Note that all binary number representations of glues are padded with leading $0$'s if necessary so that their lengths are all $\lfloor \log |G| \rfloor + 1$, including the $0$ representing no glue.  Additionally, rather than being prefaced by a `\#', the final tile type encoding (on the second most northern row) is prefaced with a `!'.  Finally, the ordering of the glue definitions is such that they increase from lowest to highest glue number from the west to east on the north and south sides, and from north to south on the east and west side.  This will ensure that the glues are represented in adjacent locations on adjacent edges of abutting macrotiles.

Since each row contains one entry of width $O(\log|G|) = O(\log|T|)$ for each of the $|G| = O(|T|)$ glues for each direction, the width of each row is $O(|T|\log|T|)$.  The height of $\sigma_{\mathcal{T}}$ is $O(|T|)$.

\begin{figure}[htp]
\begin{center}
\includegraphics[width=3.5in]{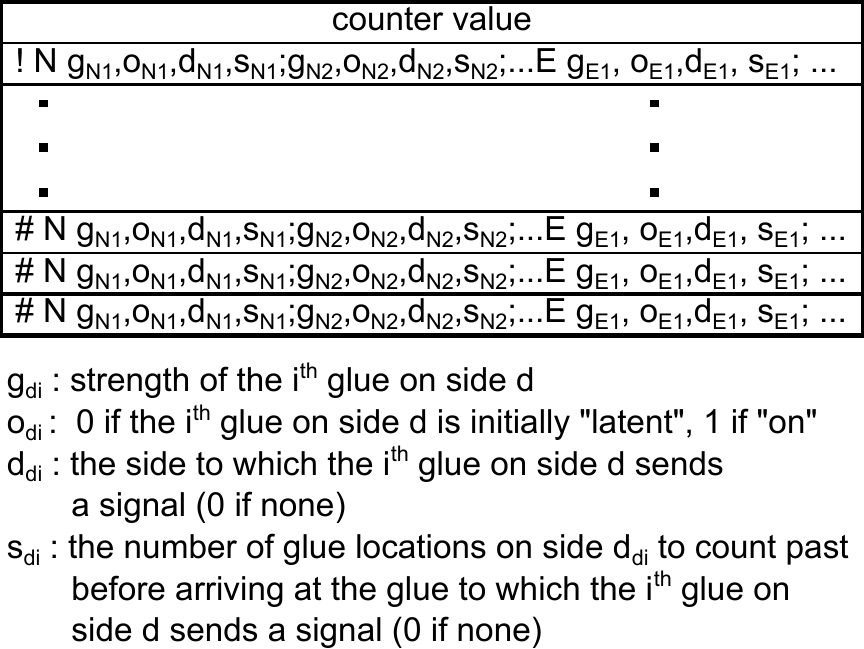}
\caption{The encoding of the simplified signal tile set for the 3D 2HAM simulation.}
\label{fig:3d-sim-tile-set-encoding}
\end{center}
\end{figure}

\subsection{Formation of macrotiles}

With the initial configuration of $\mathcal{S}$ consisting of infinite copies of the input supertile $\sigma_{\mathcal{T}}$ described in Section~\ref{sec:seed-supertile} and the singleton tile types of $U$, each copy of $\sigma_{\mathcal{T}}$ ``differentiates'' by first nondeterministically selecting a particular tile type $t \in T$ to represent.  This happens by a \emph{selection} row of tiles attaching to the top side (in the positive $z$ direction) of the southwest most tile, which encodes a `\#'.  This row grows north over the `\#' tiles, and initially the glues of the growing row encode the fact that no tile type has been selected.  As each `\#' is encountered, as long as no tile type has been selected, one of two tile types can attach.  One denotes that the tile is not chosen, and one that it is.  If the tile type is chosen, growth switches to a zig-zag pattern first moving east over the definition of the selected tile and copying that information up and to the north, then zagging back to the west and continuing until the northernmost row is reached.  Once a tile type has been selected, no additional tile type can be selected.  If no tile type has been selected by the time the `!' symbol is encountered (marking the beginning of the final tile type definition), then that tile type is selected.  In this way, it is guaranteed that exactly one tile type definition is marked as selected.  (Note that this selection process does not select each tile type with a uniform probability distribution, but is used for the simplicity of its explanation.  A process which achieves uniformity with arbitrary precision can be easily substituted.  See \cite{RNSSA} for description of such ``random number selection modules'' which could be substituted for this basic selection process.)  Finally, the row containing the definition of the selected tile grows over the counter value and also copies that information north one position, then, since it has now reached the northern edge of the supertile, down to the $z=0$ plane.

Following the return to $z=0$, growth continues to the north, following a zig-zag pattern of growth where each row grows completely from one side to the other, and then each subsequent row grows completely, starting from the side that the last finished and moving in the opposite direction.  This continues for $2\log|T|$ rows (guided by the counter value encoded by the seed supertile).  At this point, the values which encode the glues of west side of the selected tile type are rotated to the left, while the information for the rest of the tile continues to be propagated forward.  As soon as the rotation is complete, growth continues forward, again controlled by the counter value, then the entire row rotates to the east.  Growth continues clockwise around in a square, depositing the definitions of the glues for each side in the correct locations as shown in Figure~\ref{fig:3d-sim-macrotile-sketch}.  It is an important fact that, as the glue information for each side is rotated into position, the outermost three rows (those on the outermost edges of the macrotile, including those which encode the representations of the glues on the exterior of the macrotile) can't initially form.  When the path growing clockwise around the macrotile reaches the end of a side and begins to rotate to begin the formation of the next side, only then do the outermost $3$ rows of that side begin to form, starting from the corner and growing back along that side forming the bumps and dents near the corners (which can be seen in Figure~\ref{fig:3d-sim-macrotile-sketch}).  This growth happens in two passes, with the first pass creating the bumps and dents at both ends of the side, and then once that is complete the path turns back and forms the outermost row which can then potentially allow any glues on the exterior of the macrotile to be turned \texttt{on}.  The reason for ensuring that the bumps and dents form before any glue can be active on the exterior of the side of the macrotile is to guarantee that they serve as ``alignment tabs'', forcing any macrotiles which may bind across external glues to be correctly oriented with respect to each other, with the edges in complete alignment.  This prevents ``slippage'' where glues bind which are meant to represent different glues in $T$ due to misaligned macrotiles.

Note that the regions in the center of Figure~\ref{fig:3d-sim-macrotile-edge-layout} denoted as ``tracks'' remain empty during this stage of growth.  They will be utilized later in the construction.

\begin{figure}[htp]
\begin{center}
    \includegraphics[width=.90\textwidth]{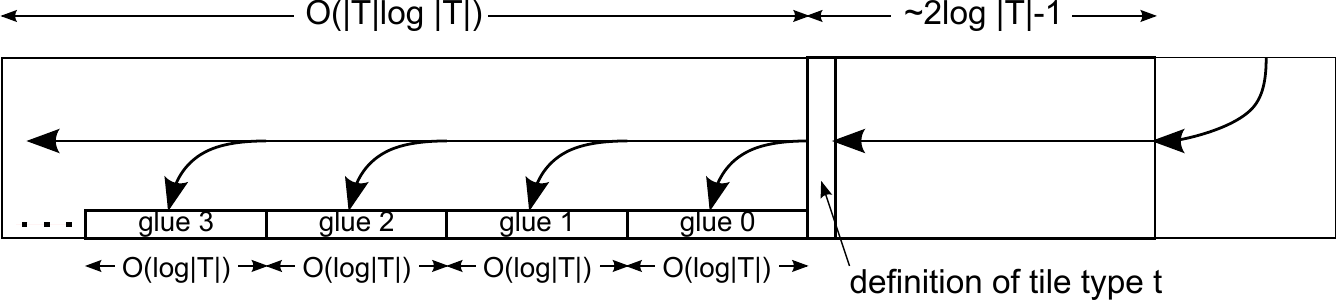}
    \caption{The formation of one side of a macrotile.}
    \label{fig:3d-sim-macrotile-edge-layout}
\end{center}
\end{figure}

The region of each side of a macrotile which is reserved for the glue definitions consists of a region for every glue in the full glue set $G$.  Each region is exactly wide enough to contain the encoding of the direction that a signal from that glue could be passed, the number of the glue that it could be passed to, and $3$ additional spaces on the right side.  See Figure~\ref{fig:3d-sim-macrotile-edge-layout} for a slightly more detailed depiction of the formation of a macrotile side, and Figure~\ref{fig:3d-sim-glue-states} for a depiction of how glues are represented in various states (to be discussed).

\subsection{Glue states and activation, and signal passing}

The encoding of each glue on the edge of a macrotile occupies the two outermost rows.  The innermost contains the information relevant to the signal (if any) which is initiated by the binding of the glue.  See Figure~\ref{fig:3d-sim-glue-states} for a visual depiction of how glues are represented. Of the three rightmost positions of the bottom row, the left and middle are reserved for tiles which will expose positive strength glues to the exterior of the macrotile if the glue in the glue position is simulating a glue in the \texttt{on} state.  If a \texttt{latent} glue is being simulated (or the lack of this glue on this side of the tile type being simulated by this macrotile), then none of the three rightmost positions of the bottom row will have tiles.  If $\tau=2$, only the center tile will expose a single strength-$1$ glue if a strength-$1$ glue is being simulated, and both the center and left tiles will expose strength-$1$ glues if a strength-$2$ glue is being simulated.  If $\tau=1$, both the center and the left tiles will expose strength-$1$ glues.  (In this way, although our simulator $\mathcal{U}_{\mathcal{T}}$ operates at temperature $2$, if $\tau=2$ we exactly mimic the strengths of the glues in $T$, but if $\tau=1$ we ensure that each external glue on a macrotile has sufficient strength to single handedly allow for the combination of macrotiles, since that is the case in $\mathcal{T}$).  In order for the tiles which represent the \texttt{on} glues to attach, a row grows in the $z=1$ plane over the bottom edge of the rightmost tile and provides the cooperation and information necessary to allow the attachment of either $1$ or $2$ tiles, as appropriate for the particular glue.  If such an \texttt{on} glue is on in the initial configuration of the tile type being simulated (i.e. without being turned \texttt{on} by a signal), then that row is initiated during the initial formation of that side during macrotile growth.  Otherwise, if that glue is turned \texttt{on} by a signal, then that row is initiated by growth which will be discussed shortly.

\begin{figure}[htp]
\begin{center}
    \includegraphics[width=.90\textwidth]{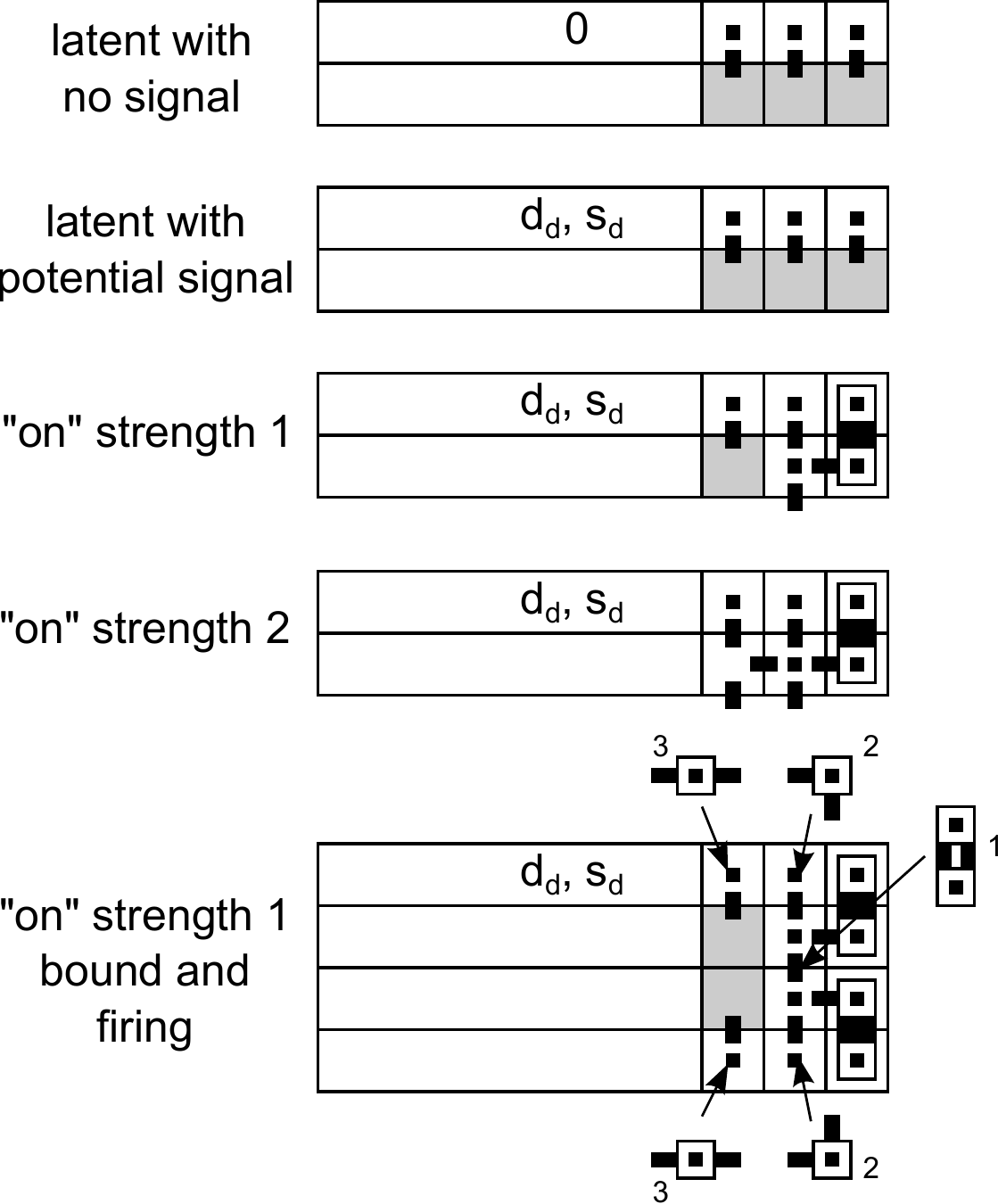}
    \caption{A series of examples showing how the states of external glues of macrotiles are represented.  Grey squares represent empty locations, and smaller white squares represent tiles in the $z=1$ plane while all others are in $z=0$.  Each black bar sticking out from a tile represents a strength $1$ glue, and a black square in the center of a tile represents a glue between the $z=0$ and $z=1$ planes.  Note that the glues are only shown for the three rightmost columns.}
    \label{fig:3d-sim-glue-states}
\end{center}
\end{figure}

If a glue in the \texttt{on} state is being represented, only then can it bind with the possibly adjacent, matching, and \texttt{on} glue of a neighboring macrotile.  If such a binding occurs, then (as shown in the bottommost picture of Figure~\ref{fig:3d-sim-glue-states}), and only then, can the duple (i.e. supertile composed of exactly two tiles) denoted by the number `$1$' attach across the adjacent macrotiles.  The attachment of this duple (in the $z=1$ plane) allows the growth of a row over to the region which denotes what signal (if any) should be fired by this glue binding event.  If a signal is in fact specified, a row of tiles in $z=1$ copy the information about that signal up and toward the center of the macrotile.  Otherwise, the growth of the row terminates.

The center of the macrotile consists of $2$ \emph{tracks}, which are simply empty ``lanes'' of width approximately $\log|T|$ each, around which paths of tiles representing signals can grow.  All signals grow in a clockwise direction around their tracks until reaching the destination side for their signals, and then find the location of the target glue by counting the number of glue positions passed until that number matches the target number they contain. (The information about glue locations is provided by the tiles they bind to around the track)  Upon reaching the location of the destination glue, they send a single-tile-wide path of tiles in $z=1$ across the portion of the macrotile between the signal and the outermost row of the macrotile, where they can then initiate the placement of the tiles that represent the glue in its \texttt{on} state. This path is the counterclockwise most column of the region associated with this glue, and is guaranteed to be either clear so that the glue can be activated, or already taken which means that some other signal has activated the glue (in the case of fan-in equal to $2$) and there is no purpose to this path continuing.  The final portion of this path can be seen in Figure~\ref{fig:3d-sim-glue-states} as the rightmost column.

Since there is no fan-out, each glue can fire at most one signal, and since any signal which activates it uses only the counterclockwise most column, the path in $z=1$ between the signal definition and the tracks is guaranteed to be free.  When a signal is fired, a zig-zag path in $z=1$ of width about $\log|T|$ (containing the signal's destination information) grows to the edge of ``track 1''.  An example can be seen as the grey path extending north from the $g_1$ region in the top image of Figure~\ref{fig:3d-sim-signals}.  Upon reaching the edge of track 1, it attempts to place a tile into the position shown as a green square in $z=0$, which is known as a ``point of competition''.  Due to the zig-zag growth of this path and also that of any signal which may already be occupying that track, if this path can place a tile there, it ``wins the competition'' for that track and then grows down into that track at $z=0$ and rotates to continue growth in the clockwise direction around the track, finding the correct side and glue destination for its signal, initiating the growth of the row which activates the row, and also continuing around the track until either making a full cycle or running into another signal which won a different point of competition.  Whenever it encounters another signal in front of it, since it has already delivered its signal it terminates growth.

In the case that it hasn't delivered its signal, when it runs into another signal occupying its track, it is able to briefly ``climb onto the back of'' that signal and rotate toward the center of the macrotile so that it can claim the inner track.  An example can be seen in the bottom image of Figure~\ref{fig:3d-sim-signals}.  It grows down into the inner track and continues clockwise.  Since the maximum number of signals possible in the macrotile is $2$, and there are $2$ tracks, the signal is guaranteed to have a free track to use.

In order to allow the construction to work solely in $2$ planes, namely $z=0$ and $z=1$, it is necessary to ensure that signals can never be prematurely blocked without climbing up beyond $z=1$.  This is guaranteed by the layout of the regions representing the glues on the edges of the macrotile, the position of the tracks, and the fact that all signals travel clockwise, and the coordinated zig-zag growth of the paths carrying signal information which allows for well-defined points of competition.  Examples of how the various situations are handled can be seen in Figure~\ref{fig:3d-sim-signals}.  The top image shows how glue $g_1$ could fire a signal which activates glue $g_2$ on the same side.  First, the path carrying the signal's information grows to track 1.  It then wins the point of competition shown by the green square, so can move down into the track. The side specified by the signal tells it that it is at the correct side, and the glue count tells it that it doesn't need to pass any signals but has already arrived at the correct location.  At that point, it sends the single-tile-wide path of signals down the rightmost column of glue $g_2$'s region and activates the glue. Additionally, the signal continues around the track until it runs into either itself or some other signal that has moved into the track.

In the middle image of Figure~\ref{fig:3d-sim-signals}, glue $g_2$ has now fired its own signal.  When it arrives at track 1, it has lost the point of competition for track 1 (red), so it grows over track 1 and acquires track 2.  At this point, it grows clockwise and is guaranteed to always have a signal in track 1 to grow over, since the signal for $g_1$ would either continue completely around the macrotile, or be blocked by another signal which would continue until it is blocked.

In the bottom image of Figure~\ref{fig:3d-sim-signals}, a signal which originated on another side of macrotile arrives in track 1, apparently having beat out the first signal for that track at some point.  This signal's destination is glue $g_3$, but it cannot make it that far in track 1 as it has lost the point of competition (red).  Therefore, it must briefly climb onto the back of the signal that beat it and move to track 2.  Because of the fact that there are $3$ rows on the right side of glue $g_1$'s region to the right of its signal definition, there is room for the losing signal to climb on top of the signal that it fired without blocking the path along the rightmost column.  This ensures that if a later signal wants to activate $g_1$, its path will not be blocked.  Further, since the width of the signal on the track and the width of the space used by the signal for $g_1$ are both the same, there is room for the signal to rotate northward without growing into the space reserved for $g_2$.  The signal that lost track 1 acquires track 2, then finally reaches the correct location to send its activation path to $g_3$.

Thus, the macrotile layout and pattern of growth ensure that no matter their originations or destinations, the maximum $2$ signals possible in any macrotile are guaranteed to always be able to use one of the $2$ tracks to successfully navigate to and activate their target glues, while never requiring growth into planes other than $z=0$ and $z=1$.  (Note that by simply increasing the number of tracks to some $n > 2$, $n$ signals could similarly be handled in a macrotile without collision or requiring more planes than $z \in \{0,1\}$.)

\begin{figure}[htp]
\begin{center}
\includegraphics[width=3.0in]{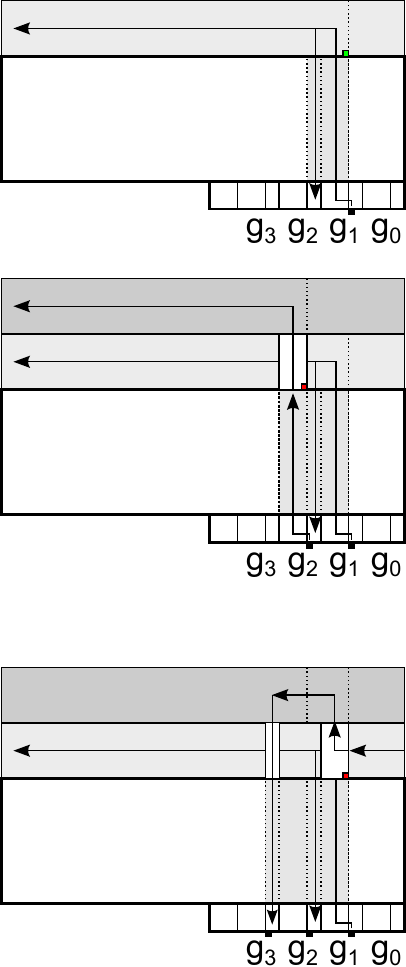}
\caption{A series of examples showing how signals are passed and received.}
\label{fig:3d-sim-signals}
\end{center}
\end{figure}

\subsection{Complexity analysis}

In order to simulate $\mathcal{T}$, $\mathcal{U}_{\mathcal{T}}$ requires building macrotiles of size $O(|T|\log|T|) \times O(|T|\log|T|) \times 2$.  Thus, the scale factor is $O(|T|\log|T|)$, while using a constant sized tile set and requiring only $2$ planes in 3D.  Since $\mathcal{T}$ simulates $\mathcal{T}'$ by using macrotiles of size $O(|T'|^2) \times O(|T'|^2)$ and also with tile complexity $|T| = O(|T'|^2)$, the scale factor for $\mathcal{U}_{\mathcal{T}}$ simulating $\mathcal{T}'$ is $O(|T'|^4 \log(|T'|^2))$.

The tile complexity is $O(1)$ as there is a single universal tile set $U$ which thus simulates any $T'$.

\subsubsection{Improved scale factor}
Note that by using the techniques of Theorem 8 in \cite{2HAMIU}, the scale factor for the simulation of $\mathcal{T}$ by $\mathcal{U}_{\mathcal{T}}$ can be reduced to $\sqrt{T} \log(|T|)$ for an overall scale factor of $O(|T'|^3 \log(|T'|^2))$ for the simulation of $\mathcal{T}'$. However, as that construction is much more complicated (for relatively small gain in scale factor), it is omitted from this version of the paper.

\subsection{Representation function and correctness of simulation} \label{sec:3D2HAMIUProof}

$R: B^U_m \dashrightarrow T$ is the representation function that maps supertiles in $\mathcal{U}_{\mathcal{T}}$ to tiles in $\mathcal{T}$. It does this by simply locating (in the nearly bottom-left corner of each macrotile) the subassembly corresponding to $\sigma_{\mathcal{T}}$, and then the row above it (in $z=1$) such that the leftmost tile type encodes that that row was selected.  This row contains the definition of the selected tile type which this macrotile does or will (if it hasn't yet grown any further) simulate.  If such a selection tile exists, $R$ simply reads the corresponding row and returns the tile type $t \in T$ encoded there.  If the tile doesn't exist, $R$ is undefined (i.e. this supertile maps to nothing).  We now discuss why the simulation of $\mathcal{T}$ is valid under this representation function.

Each supertile which has not yet grown to the point of making a tile selection, and thus does not map to any supertile in $\mathcal{T}$, is completely unable to interact with any other macrotiles as it has no external glues that can bind to another macrotile.  Further, each such supertile is guaranteed to eventually grow into a supertile which does map to a valid $t \in T$, and for every $t \in T$ a supertile which represents it can form (this is the base case for equivalent production).  Once a supertile has grown to the point that it maps to a $t \in T$, the ordering of growth, including the clockwise direction of growth by all macrotiles as well as the formation of each external edge in such a way that the geometric ``bumps'' and ``dents'' are formed before external glues are placed, ensures that the only interactions that can occur between macrotiles (or compositions of macrotiles) are at the well-defined locations representing the exteriors, and with the representations of the glues from $T$ in proper alignment.  Thus, even if a macrotile hasn't fully completed its initial growth to form a complete square, any interactions that it can have with other macrotiles must correctly follow the possible interactions between the tiles and supertiles from $T$ that they represent.  Further, due to the asynchronous nature of the STAM, and the fact that $\mathcal{T}$ is an STAM$^+$ system which does not include glue deactivation, any delays in the full formation of a macrotile and its external glues can only delay (but not permanently prevent) either the binding of macrotiles (which correctly models the asynchronous binding of supertiles in $\mathcal{T}$) or the firing of signals (which also correctly models the asynchronous firing of signals).  The design of the macrotiles and the tracks along which signals can travel ensure that (1) every time glues bind across macrotiles, any potential signals that may fire are in fact fired, and (2) all such signals will eventually arrive at their destinations and correctly activate their target glues.  Again, since signals are guaranteed to be correctly fired and in $\mathcal{T}$ their arrival time is indeterminate, the simulation by $\mathcal{U}_{\mathcal{T}}$ is correct.

Since every singleton tile from $T$ is correctly represented, along with their abilities to combine, fire activation signals, and activate glues, and $\mathcal{U}_{\mathcal{T}}$ cannot produce supertiles which either don't (and never will) represent supertiles in $\mathcal{T}$, or which fail to ultimately behave in the same ways and produce terminal assemblies in $\mathcal{U}_{\mathcal{T}}$ which don't map to terminal assemblies in $\mathcal{T}$, equivalent production holds and also $\mathcal{T}$ follows $\mathcal{U}_{\mathcal{T}}$.  Finally, since for every $\ta', \tb' \in \prodasm{U_{\mathcal{T}}}$ such that $\ta'$ and $\tb'$ are valid supertiles composed of macrotiles where $\tilde{R}(\ta') = \ta$ and $\tilde{R}(\tb') = \left(\tb\right)$ for $\ta, \tb \in \prodasm{T}$, if $\ta$ and $\tb$ can combine in $\mathcal{T}$, then $\ta'$ and $\tb'$ are guaranteed to be able to grow into (if necessary) $\ta''$ and $\tb''$ such that $\ta''$ and $\tb''$ can combine.  This may be the result of either the macrotiles finishing their initial formation, or the assembly of the paths representing the firing and propagation of signals.  This ensures that $\mathcal{U}_{\mathcal{T}}$ strongly models $\mathcal{T}$, and therefore $\mathcal{U}_{\mathcal{T}}$ simulates $\mathcal{T}$.

Finally, we note that by the simple composition of representation functions $R$ and $R'$, we can create a new representation function $R'': B^U_m \dashrightarrow T'$ such that $R''(\alpha) = R'(R^*(\alpha))$.  Thus, since $\mathcal{U}_{\mathcal{S}}$ simulates $\mathcal{T}$ under $R$ and $\mathcal{T}$ simulates $\mathcal{T}'$ under $R'$, $\mathcal{U}_{\mathcal{S}}$ simulates $\mathcal{T}'$ under $R''$.

}

\section{For Each $\tau$, an STAM$^+$ Tile Set Which is IU for the STAM$^+$ at $\tau$}

In this section, we show that for every temperature $\tau$ there is an STAM$^+$ tile set which can simulate any STAM$^+$ system at temperature $\tau$.  The constructions for $\tau > 1$ and $\tau = 1$ differ slightly, and are presented in that order, and both are based heavily upon the construction in Section~\ref{sec:3D-IU-for-STAM}.

Let $\mathcal{T'} = (T', S_\mathcal{T'}, \tau)$ be an arbitrary STAM$^+$ system and let
Let $\mathcal{T} = (T, S_\mathcal{T}, \tau)$ be the STAM$^+$ system constructed in Section~\ref{sec:transform-to-simple-stam} that is $1$-simplified in the case where $\tau > 1$ and $2$-simplified in the case where $\tau = 1$.
The constructions here yield STAM$^+$ systems $\mathcal{U}_{\mathcal{T}} = (U_{\tau}, S_{\mathcal{U}_{\mathcal{T}}}, \tau)$ that simulate each $\mathcal{T}$, and therefore prove Theorem~\ref{thm:STAM+IU}.

\begin{theorem}
\label{thm:STAM+IU}
For each temperature $\tau$, there is an STAM$^+$ tile set $U_{\tau}$ such that $U_{\tau}$ is intrinsically universal for the class of all STAM$^+$ systems at temperature $\tau$.
\end{theorem}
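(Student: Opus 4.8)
The plan is to carry over the macrotile construction of Section~\ref{sec:3D-IU-for-STAM} almost verbatim, but to collapse it from two planes into a single plane by exploiting the fact that STAM$^+$ tiles can fire signals natively. As there, I would first apply Theorem~\ref{thm:2SimpAtTemp1} when $\tau = 1$, and Theorem~\ref{thm:1SimpAtTemp2} when $\tau > 1$, to replace the arbitrary system $\mathcal{T}'$ by the simplified system $\mathcal{T} = (T, S_{\mathcal{T}}, \tau)$ whose tiles have full-tile signal complexity at most $2$ (respectively $1$), no fan-out, and fan-in at most $2$. I would then fix a single STAM$^+$ tile set $U_\tau$ and an input supertile $\sigma_{\mathcal{T}}$ that encodes the definition of each $t \in T$ in its rows exactly as in Section~\ref{sec:seed-supertile}; each copy of $\sigma_{\mathcal{T}}$ nondeterministically differentiates into a macrotile that represents a single tile type, growing clockwise into a square ring, rotating each side's glue information into place and depositing the geometric bumps and dents that act as alignment tabs. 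Since $U_\tau$ is finite and the macrotiles are hardcoded from tile definitions, the functions $\mathcal{R}$ and $\mathcal{S}$ required for intrinsic universality are clearly computable.

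The central observation is that in the 3D construction the second plane is used only to route a fired signal through the macrotile with static tiles; in an STAM$^+$ simulator this routing is performed directly by the model's own signal-passing. Concretely, I would replace each tile path that was grown in $z = 1$ by a lane of tiles lying in the single plane in which one glue's binding activates the next glue along the lane, so that the binding of an exterior macrotile glue triggers a cascade of glue activations that carries the signal's destination information around the central tracks and, upon reaching the target region, exposes the appropriate \ton glue(s) on the exterior. The tracks and ``points of competition'' are retained to keep the at most two concurrent signals from interfering, but collisions are now resolved by the single-use semantics of STAM$^+$ glues rather than by climbing into a second plane.

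The two temperature regimes are handled exactly as in the signal-reduction results. For $\tau > 1$ I would use the $1$-simplified system and rely on cooperation to (i) verify, via the ``verifying-then-presenter'' mechanism of Section~\ref{sec:reduceSigPerTileTemp2}, that a macrotile's boundary is complete before any exterior glue can participate in a binding, and (ii) represent each strength-$\tau$ exterior glue by two glues of strengths $\lceil \tau/2 \rceil$ and $\lfloor \tau/2 \rfloor$ (with interior glues rescaled to strengths $\lceil \tau/2 \rceil$ and $\tau$), so that a $\tau$-strength bond between macrotiles cannot form prematurely. For $\tau = 1$, where cooperation is unavailable, I would instead use the $2$-simplified system and enforce the same ``boundary-before-exterior-glue'' ordering with the delaying signal circuits of Section~\ref{sec:MacroCreation}. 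In both cases the representation function $R$ locates $\sigma_{\mathcal{T}}$ and reads its selected row as in Section~\ref{sec:3D2HAMIUProof}, and $\mathcal{U}_{\mathcal{T}}$ simulates $\mathcal{T}'$ through the composition of $R$ with the simplification representation function.

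The main obstacle is establishing equivalent productions and equivalent dynamics without the third dimension: I must show that a signal routed in one plane can never be permanently blocked, so that every fired signal eventually reaches and activates its target; that no macrotile can expose an exterior glue before its boundary and alignment tabs are in place, so that only correctly aligned, legitimate macrotile combinations occur; and that every partially formed supertile maps under $R$ to empty space rather than to a spurious tile. As in the 3D argument, the decisive leverage is that both $\mathcal{T}$ and $\mathcal{U}_{\mathcal{T}}$ are forward-only (glues only turn \ton, never \toff), so asynchronous delays in forming boundaries or in routing signals can postpone but never prevent the correct behavior; this yields $\mathcal{T} \dashv_R \mathcal{U}_{\mathcal{T}}$ and $\mathcal{U}_{\mathcal{T}} \models^+_R \mathcal{T}$, and hence the desired simulation.
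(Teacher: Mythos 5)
Your overall plan is the paper's: reduce $\mathcal{T}'$ to a $1$-simplified system for $\tau>1$ (via Theorem~\ref{thm:1SimpAtTemp2}) or a $2$-simplified system for $\tau=1$ (via Theorem~\ref{thm:2SimpAtTemp1}), reuse the macrotile layout and seed encoding of the 3D 2HAM construction, and replace everything that lived in the $z=1$ plane by native STAM$^+$ signal cascades, with strength-$\tau$ exterior glues split as $\lceil\tau/2\rceil+\lfloor\tau/2\rfloor$ and boundary-before-exterior-glue ordering enforced by cooperation (for $\tau>1$) or by signal-simulated zig-zag growth with corner gadgets and a delaying circuit (for $\tau=1$). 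That is essentially the argument the paper gives.

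The one step that does not survive the collapse to a single plane as you describe it is ``the tracks and points of competition are retained.'' In the 3D construction, a signal that loses a point of competition resolves the conflict geometrically: it climbs over the occupying path in $z=1$ and rotates to the inner track. With only one plane, an assembled path cannot grow over another assembled path, and appealing to ``single-use semantics'' does not explain how a blocked signal reroutes. The paper sidesteps this entirely rather than retaining it: for $\tau>1$ the simulated system is $1$-simplified, so each macrotile carries at most one signal and a single track with no competition suffices; for $\tau=1$ the (at most two) signals are not routed dynamically at all, but travel along \emph{wires} --- predetermined paths of active glues laid down on the macrotile's tiles during its zig-zag formation, using the destination information from the seed --- and the place where two wires meet is just a tile carrying two crossing signals, which a $2$-simplified tile set permits. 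Your own phrase ``a lane of tiles \ldots in which one glue's binding activates the next'' is already this wire idea, so the fix is to drop the competition mechanism and commit to static wires (plus, for $\tau>1$, a single cooperative path around the one signal track); with that replacement your argument matches the paper's.
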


\subsection{STAM$^+$ Tile Sets which are IU for the STAM$^+$ at each $\tau > 1$}

This construction makes use of macrotiles very similar to those of the construction in Section~\ref{sec:3D-IU-for-STAM} to simulate the tiles of $\mathbb{T}$.  To see how they work, note that in the 2D STAM$^+$, growth of macrotiles can proceed in exactly the same way as most of the macrotile growth in the 3D 2HAM construction of Section~\ref{sec:3D-IU-for-STAM} with a few exceptions. First, since we cannot take advantage of multiple planes, the simulated signals must propagate using actual signals of active tiles. This can be done as follows.
First, we queue a signal to be fired along the necessary path through the macrotile. In the 3D 2HAM construction, as a macrotile grows, glues that can potentially propagate a simulated signal upon binding expose glues that allow for a duple to be placed above them in the $+z$-direction. In the STAM$^+$ construction, we can simply replace these tiles with signal tiles such that the binding of a glue would fire cascading signals toward a signal track of the macrotile. See figure~\ref{fig:zig-zag-macrotile} for an example of a signal track.
The entire sequence of firing a simulated signal, resulting in a glue going from $\texttt{latent}$ to $\texttt{on}$, is as follows.
Following the construction of Section~\ref{sec:3D-IU-for-STAM}, glues that allow for two macrotiles to bind are assumed to have strength less than $\tau$, and a strength $\tau$ glue $g$ is represented by two glues $g_1$ and $g_2$. This means that when we turn a glue $\texttt{on}$, we really expose glues on the exterior of a macrotile that allow for tiles to bind to the macrotile that \emph{present} either a single glue or two glues such as $g_1$ and $g_2$. (See figure~\ref{fig:3d-sim-glue-states} for the 3D 2HAM version of this presentation of glues.)
Let $\alpha$ be a macrotile producible in $\mathcal{U}_{\mathcal{T}}$, let $a$ be a glue of the simulated tile that fires a signal, and let $b$ (respectively $b_1$ and $b_2$ in the case that we are turning $\texttt{on}$ a strength $\tau$ glue) be a glue that turns $\texttt{on}$ as a result of the binding of $a$. When $\alpha$ binds to another macrotile $\beta$ (producible in $\mathcal{U}_{\mathcal{T}}$) using $a$, the binding event for $a$ initiates the propagation of signals that eventually expose a strength $\tau$ glue, $g_s$, on the inside of the signal track (which is initially empty). Edges of the macrotile that border the signal track expose glues that allow for cooperative binding in a clockwise manner around the track. All but one of these glues -- all but the glue on the tile abutting the signal track at the location directly across from the tile location for the simulated glue to be turned $\texttt{on}$ -- are the same. That final glue, call it $g_e$, initiates a cascade of signals that turns $\texttt{on}$ a glue $g_h$ on the exterior of $\alpha$.
Now, when a tile binds to $g_s$, a single tile wide path of tiles grows along the signal track using cooperation until binding with $g_s$. The binding event involving $g_s$ fires a signal that initiates the propagation of signals that expose $g_h$ on the exterior of $\alpha$. Then, $g_h$ allows for tiles to attach that present $b$ (respectively $b_1$ and $b_2$), completing the simulation of the signal from $a$ turning $\texttt{on}$ $b$. To finish the construction, we show how to obtain $S_{\mathcal{U}_{\mathcal{T}}}$ from $S_\mathcal{T}$.

\subsubsection{Creating the initial configuration}\label{sec:stam+seed_temp2}

Here we give a conversion of the initial configuration given in Section~\ref{sec:seed-supertile} that works for STAM$^+$ systems with $\tau > 1$. As in Section~\ref{sec:seed-supertile}, suppose that the initial configuration $S_\mathcal{T}$ is the default configuration. We create the initial configuration $S_{\mathcal{U}_{\mathcal{T}}}$ by letting $S_{\mathcal{U}_{\mathcal{T}}} = U_{\tau} \cup \{\sigma_\mathcal{T}\}$ where $\sigma_\mathcal{T}$ is a rectangle consisting of tiles from $U_{\tau}$ which has a single row encoding the definition of each tile $t\in T$. The specific encoding that is used is identical to the encoding used in Section~\ref{sec:seed-supertile}. Figure~\ref{fig:stam+iu_seed_gadget} depicts one of these rectangular supertiles, along with selection tiles. To select a row from one of these rectangular supertiles, tiles cooperatively bind to the east edge of tiles of the rectangle and nondeterministically choose a row. In figure~\ref{fig:stam+iu_seed_gadget}, the tiles labelled $N$ are skipping rows, the tile labelled $Y$ selects the row and the tiles labelled $D$ mark the fact that a row has been selected. Once again, care must be taken to ensure that some row is always selected. Once a row is selected, the binding of glues $g_s$ in figure~\ref{fig:stam+iu_seed_gadget} fire a signal that turns $\texttt{on}$ either a glue $g_0$ or a glue $g_1$ depending on the row and the encoding of the tile $t \in T$ and also turns $\texttt{on}$ another $g_s$ glue should the signal need to be propagated. The binding of $g_0$ or $g_1$ glues fires signals that propagate to the north edge of the tiles located on the top of the rectangular supertile, thus completing the nondeterministic selection of some tile $t \in T$ for the macrotile to simulate.

\begin{figure}[htp]
\begin{center}
\includegraphics[width=4in]{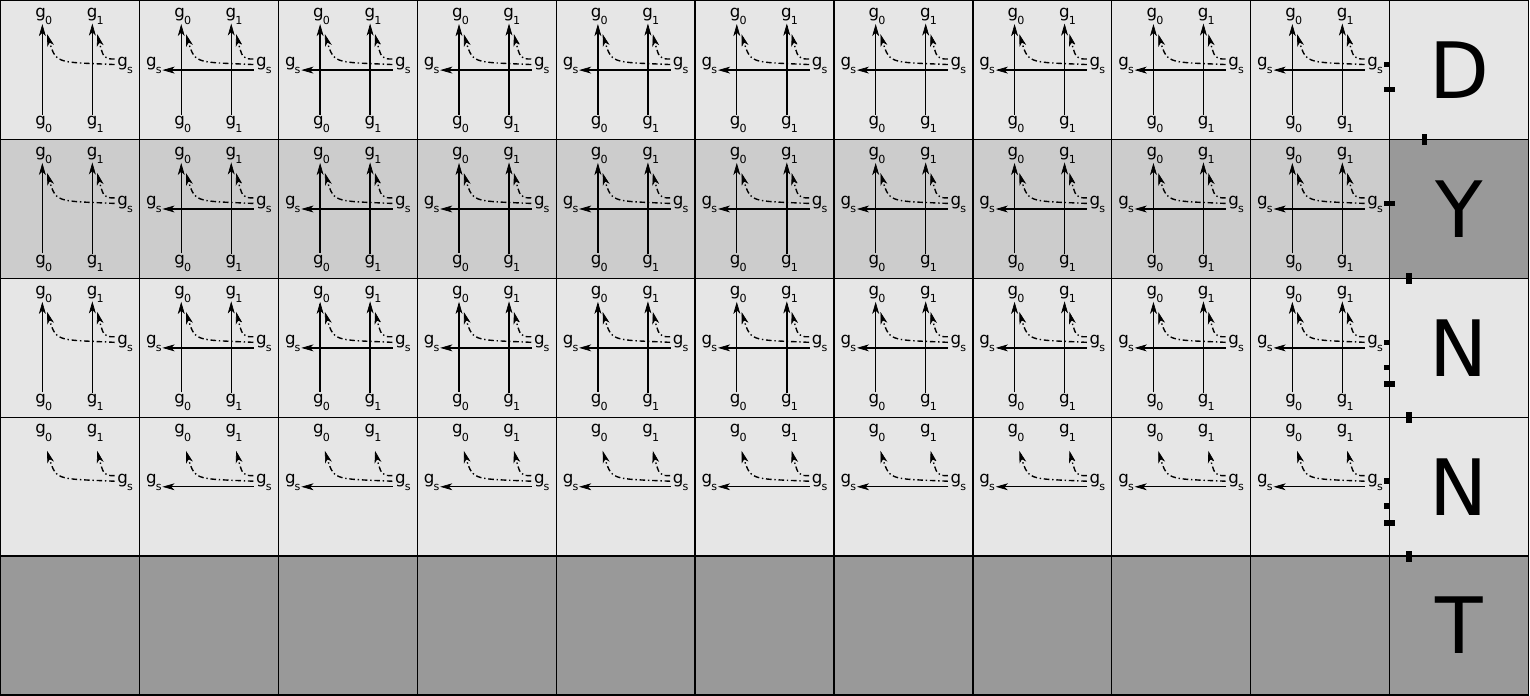}
\caption{A schematic figure for a seed. For each tile, depending on the encoding of the tile set to be simulated, only one of the signals depicted with dashed lines is present.}
\label{fig:stam+iu_seed_gadget}
\end{center}
\end{figure}

\subsection{An STAM$^+$ Tile Set which is IU for the STAM$^+$ at $\tau = 1$}

The construction for $\tau=1$ is very similar to the previous, essentially including a few relatively minor changes to adapt the construction of Section~\ref{sec:3D-construction-details}. We begin by noting that by Theorem~\ref{thm:fanOutEssential}, since $\tau = 1$ the macrotiles that are simulating tiles in $\mathcal{T}$ must be capable of simulating two signals, since at $\tau=1$, the tiles of $T$ cannot be less than $2$-simplified. We also note that at $\tau =1$, STAM$^+$ systems are capable of simulating temperature-$2$ zig-zag growth, which can be used throughout the construction of Section~\ref{sec:3D-construction-details} to build the macrotiles. Figure~\ref{fig:zig-zag-growth} shows how, starting from a single row of tiles, one additional row of tiles can be added in a zig-zag fashion. Additional rows can then be grown similarly. In Figure~\ref{fig:zig-zag-growth}, note that the zig-zag growth is deterministic. For example, $T_6^\prime$ and $T_2^\prime$ determine a single type of tile to be placed to the north of $T_2^\prime$. In our case, deterministic zig-zag growth is all that is needed.  Thus, additional glue and signal complexity allows each tile to attach with a single strength-$1$ glue, which causes a signal cascade that exposes a glue combining information about both glues adjacent to the next location for a tile to attach.  By combining information about both adjacent glues, this strength-$1$ glue effectively simulates them and thus their cooperative behavior.

\begin{figure}[htb]
	\centering
		\subfloat[][]{\includegraphics[width=1.5in]{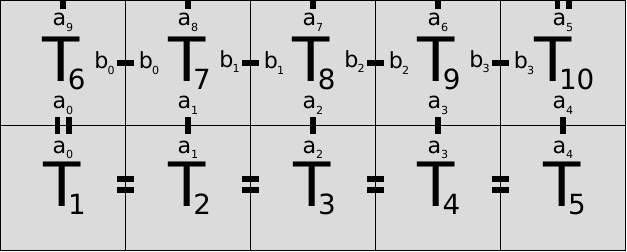}\label{fig:zig-zag-row}}%
		\qquad\qquad\qquad\qquad
		\subfloat[][]{\includegraphics[width=1.5in]{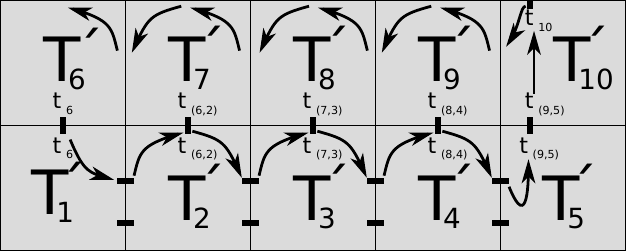}\label{fig:zig-zag-temp1}}
	\caption{ (a) depicts zig-zag growth for $\tau=2$. With tiles $T_1$ through $T_5$ in place, tiles $T_6$ through $T_{10}$ must be placed in that order. Moreover, tile $T_6$ depends only on $T_1$, while tiles $T_i$ for $i \in \{7,8,9,10\}$ depend on $T_{i-1}$ and $T_{i-5}$.
			  (b) depicts zig-zag growth for $\tau=1$ using active tiles. Again, with tiles $T_1^\prime$ through $T_5^\prime$ in place, tiles $T_6^\prime$ through $T_{10}^\prime$ must be placed in that order. Moreover, tile $T_6^\prime$ depends only on $T_1^\prime$, while tiles $T_i^\prime$ for $i \in \{7,8,9,10\}$ depend on $T_{i-1}^\prime$ and $T_{i-5}^\prime$. Also, tiles in each consecutive row contain signals that allow for another row to grow in a zig-zag pattern. Note that only the signals which activate glues which are eventually bound to are shown, while the tiles would actually need to accommodate glues and signals which could represent all possible pairs of input glues for the next growth location.}
\label{fig:zig-zag-growth}%
\end{figure}

\subsubsection{Macrotile formation}

Now, recall that the growth of the macrotiles described in Section~\ref{sec:MacroCreation} that occurs in the $z=0$ plane follows a zig-zag pattern. Therefore, we can use active tiles such that similar growth can occur in 2D STAM$^+$ systems with $\tau = 1$. Figure~\ref{fig:zig-zag-macrotile} shows a scheme for how this zig-zag growth works. Starting from a seed, a tile type $t \in T$ is selected. This seed and selection process is described in Section~\ref{sec:stam+seed} that follows. Then, following a zig-zag growth pattern, each side of the macrotile forms. As the sides of the macrotiles form, special care is taken in presenting the glues exposed by each edge of the macrotile. The glues that will be exposed on the exterior of a macrotile are $\texttt{latent}$ initially. This ensures proper growth of the macrotile
by postponing macrotile binding until the macrotile has formed. (See figure~\ref{fig:improperGrowth}.) As zig-zag growth forming the west edge of a macrotile completes, a glue is exposed on the west edge of the northwest most tile, allowing for the attachment of a \emph{corner gadget} at the northwest corner. When the northwest corner gadget binds, it initiates the successive firing of signals that eventually expose a glue that allows for the growth of the north side of the macrotile and then a second corner gadget to attach at the northeast corner. Similarly, when the northeast corner gadget binds, it initiates the successive firing of signals that eventually expose a glue that allows for a third corner gadget to attach at the southeast corner and in turn fires a signal that eventually allows for the final corner gadget to be placed at the southwest corner. Just as in the construction of Section~\ref{sec:MacroCreation}, the corner gadgets give the macrotile bumps which ensure proper alignment when two macrotiles bind. Finally, the binding of the final corner gadget initiates a circuit of active glues that trigger the simulated glues exposed by the macrotile to turn $\texttt{on}$. Attaching corner gadget in this manner guarantees us that the bumps of the macrotile are in place prior to the exposure of any $\texttt{on}$ glues on the exterior of the macrotile.

\begin{figure}[htp]
\begin{center}
\includegraphics[width=4in]{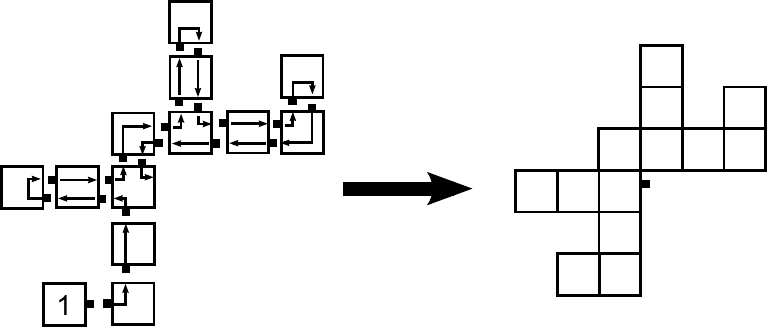}
\caption{A corner gadget for macrotiles in $\tau = 1$ systems. Note that the formation of the entire corner gadget must take place prior to the exposure of the glue that allows the corner gadget to bind to a macrotile.}
\label{fig:corner-gadget}
\end{center}
\end{figure}

\begin{figure}[htp]
\begin{center}
\includegraphics[width=4in]{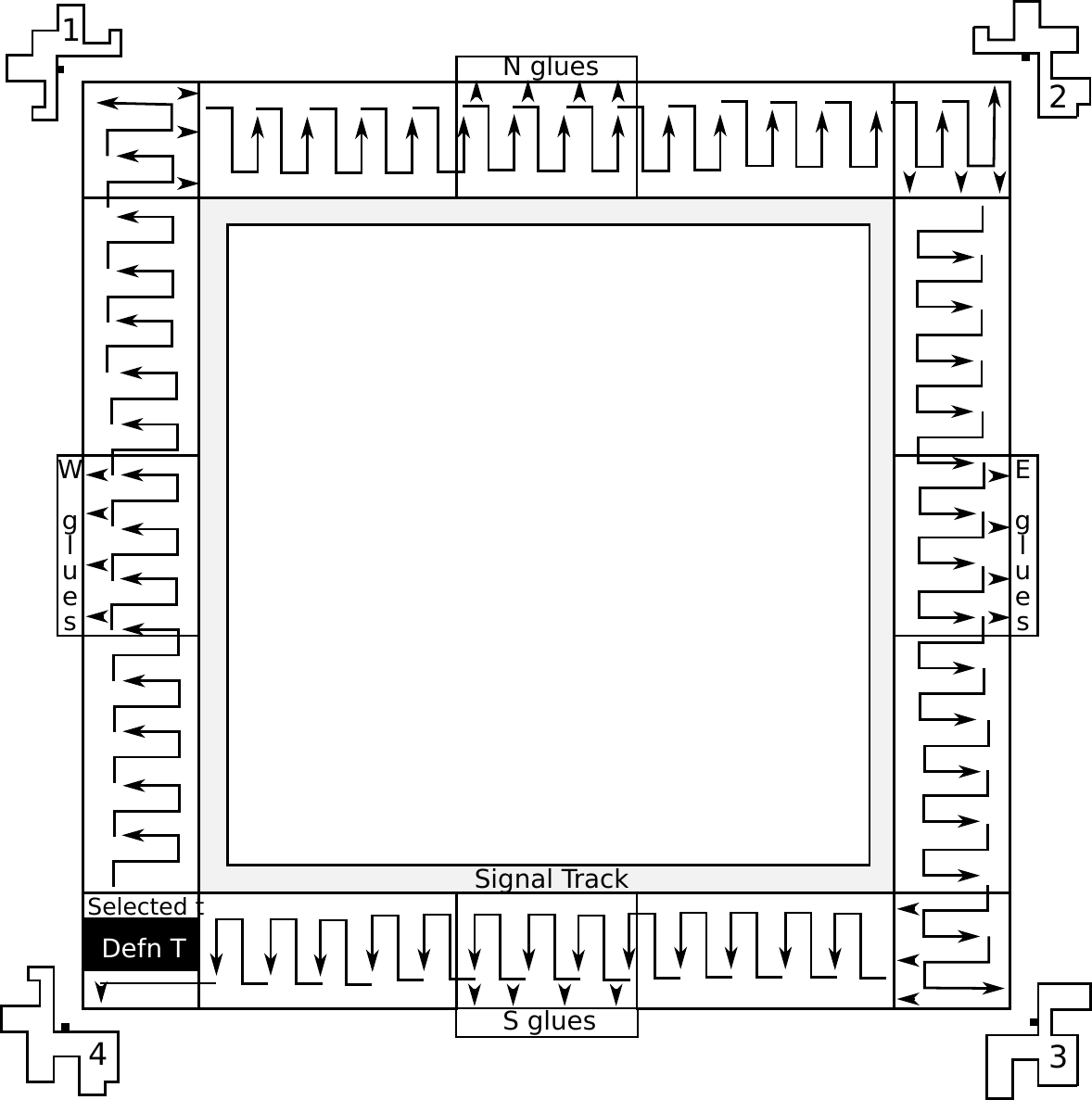}
\caption{A schematic figure for a macrotile for systems with $\tau = 1$. Corner gadgets are labeled in the order that they must attach.}
\label{fig:zig-zag-macrotile}
\end{center}
\end{figure}

\subsubsection{Handling signals}

To permit the firing of signals by glues on the exterior of a macrotile, we proceed as in the $\tau > 1$ case. Except here, the formation of the signal track takes place during the zig-zag formation of a macrotile $\alpha$.  Again, note that for any macrotile, we need only fire at most $2$ signals. We define a {\em wire} to be a path of active glues through tiles of a macrotile such that the signals along this path propagate
through adjacent tiles starting at a preset initial tile until the signal reaches a preset ending tile. For the $\tau=1$ construction we will turn glues on macrotile edges $\texttt{on}$ using wires. A binding event involving a glue $g_1$ can initiate the propagation of signals that turn $\texttt{on}$ a glue $g_2$. In this case, we say that there is a \emph{wire} from $g_1$ to $g_2$.
Now, in the case where $\tau = 1$, glues that allow for two macrotiles to bind are $\texttt{latent}$ initially.
Let $\alpha$ be a macrotile producible in $\mathcal{U}_{\mathcal{T}}$, let $a$ be a glue of the macrotile that fires a signal and let $b$ be a glue that turns $\texttt{on}$ as a result of the binding of $a$. As alpha grows, wires can be placed on tiles using the information encoded in the seed by simply counting to the correct glue locations. These wires start from the tile with the $a$ exposed and end at the tile with a glue $b$ glue exposed. The placement of tiles with the signals making up the wire from $a$ to $b$ is as follows. First, if $a$ is exposed on a tile, $t_a$, belonging to the east or west edge (resp. the north or south edge) of $\alpha$, tiles containing signals making up the wire from $a$ to $b$ are place horizontally (resp. vertically) as a row (resp. column) of the zig-zag pattern from $t_a$ to a tile $t_s$ lying on the signal track. The signal track is depicted in figure~\ref{fig:zig-zag-macrotile}.
Similarly, if $b$ is a glue on a tile, $t_b$, belonging to the east or west edge (resp. the north or south edge) of $\alpha$, tiles containing signals making up the wire from $a$ to $b$ are place horizontally (resp. vertically) as a row (resp. column) of the zig-zag pattern from $T_b$ to a tile $t_e$ lying on the signal track.
The final tiles containing signals making up the wire from $a$ to $b$ are the tiles along the signal track that run clockwise from $t_s$ to a tile $t_e$.
Notice that we can have two such wires from any two glues exposed on the exterior of $\alpha$.
Now, when $\alpha$ binds to another macrotile $\beta$ producible in $\mathcal{U}_{\mathcal{T}}$ using $a$, the binding event for $a$ initiates the propagation of signals that eventually expose a $b$ via a wire from $a$ to $b$.

\subsubsection{The initial configuration for $\tau = 1$ }\label{sec:stam+seed}

In this section, we show how to modify the seed given in Section~\ref{sec:stam+seed_temp2} so that it can be used in systems where $\tau=1$. The main idea here is to use signals to mimic the behavior of cooperation that can occur in systems with temperature greater than $1$. Figure~\ref{fig:stam+seedTemp1} gives an example of such a seed supertile. The rows of the supertile that encode tiles of $T$ and the selection process remain unchanged, however the nondeterministic selection of a row changes slightly. The binding of successive selector tiles (tiles labeled $N$, $Y$ and $D$) use signals to ensure that they are placed from bottom to top. Also, a stopper tile, labeled $S$ in the figure, prevents the placement of an extra selector tile. Finally, as in the case for $\tau>1$, we must ensure that some row is selected. To do this we place a glue $d$ on the south edge of tile $S$ and a $d$ glue on the north edge of each tile labeled $N$ so that if a tile labeled $N$ is placed below $S$, binding of $d$ with fire a signal to turn $\texttt{on}$ $g_s$.

\begin{figure}[htp]
\begin{center}
\includegraphics[width=4in]{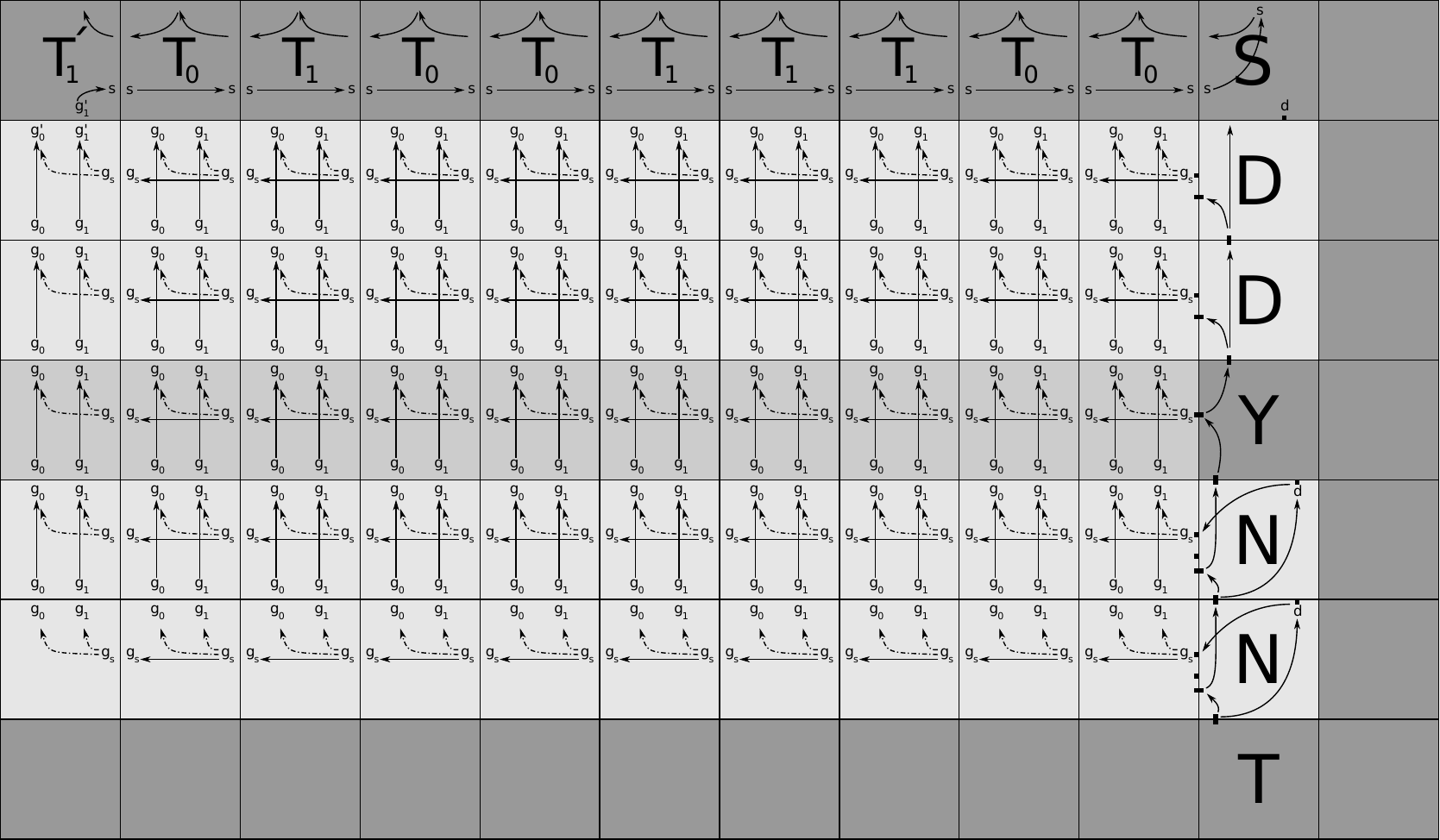}
\caption{A schematic figure for a seed for $\tau = 1$. For each tile, depending on the encoding of the tile set to be simulated, only one of the signals depicted with dashed lines is present. Tiles labeled $T_0$, $T_1$, $T_1^\prime$, $N$, $Y$ and $D$ are not part of the seed assembly and are only shown to illustrate the selection of a row.}
\label{fig:stam+seedTemp1}
\end{center}
\end{figure}

Once a row is selected, glues that have been turned $\texttt{on}$ during the selection process allow for tiles, labeled $T_0$, $T_1$ and $T_1^\prime$ in the figure, to be placed. A tile $T_0^\prime$ analogous to the tile $T_1^\prime$ is place instead of $T_1^\prime$ if the left most bit of the selected row is $0$. $T_0^\prime$ is not shown in the figure. When $T_1^\prime$ (or $T_0^\prime$) binds, signal propagated by glue pair bindings, corresponding to glues labeled $s$ in the figure, are used to ensure that each $T_0$, $T_1$, $T_0^\prime$ or $T_1^\prime$ is in place. Once each of these tiles is in place, a glue is presented on the north edge of the rectangular supertile grown from the seed that allows for zig-zag growth of the macrotile to start.

Finally, we make a general observation about the seed supertile formation for systems with $\tau = 1$. Consider the glues that hold the initial configuration together. Such glues cannot be exposed initially, for if they were, tiles that make up the initial configuration could spontaneously bind and possibly create arbitrary seeds containing encodings of tile types that should not be simulated. ``Junk'' supertiles that do not take part in any simulation could also be created. Therefore, for systems with $\tau=1$ we form the seed from tiles which initially have all glues in the $\texttt{latent}$ state, and first manually turn $\texttt{on}$ the necessary set of glues on each tile used for the seed, and then use these glues to attach those tiles together.

\subsection{Complexity analysis}

First, the tile complexity is $O(1)$ as there is a single universal tile set $U_{\tau}$ which simulates any $T'$. Let $k$ denote the tile complexity in the case where $\tau>1$.
Now, since macrotiles form in the same way as the $z=0$ plane growth of the 3D macrotiles given in Section~\ref{sec:3D-construction-details}, the scale factor of the macrotiles in the STAM$^+$ construction the same. In particular, the scale factor is $O(|T|\log|T|)$, while using a constant sized tile set and the scale factor for $\mathcal{U}_{\mathcal{T}}$ simulating $\mathcal{T}'$ is $O(|T'|^4 \log(|T'|^2))$. The techniques to reduce signal complexity given in Section~\ref{sec:transform-to-simple-stam} can be used to reduce the signal complexity of tiles in our macrotiles. We claim that this only contributes a constant size increase in scale-factor. First, note that the most signal complexity for any one tile of the construction at $\tau > 1$ is $4$, and the most glue complexity of any one tile is bounded by a constant that depends only on $k$. In particular, the glue complexity is independent of $|T'|$. Hence, the tile set $U_{\tau}$ can be $1$-simplified with only a constant sized increase in scale-factor. Therefore, the scale factor for $\mathcal{U}_{\mathcal{T}}$ simulating $\mathcal{T}'$ is $O(|T'|^4 \log(|T'|^2))$ even for a $1$ simplified tile set.
For $\tau=1$, the most signal complexity for any one tile of the construction at $\tau = 1$ is also a constant that is independent of the tile set $T$.  In fact, to mimic zig-zag growth for $\tau=1$ each tile contains a number of signals that depends on $k$ plus some signals that are used for the firing of signals once macrotiles have formed.  Overall, this will be a constant amount of signal complexity that is independent of $|T|$.  The glue complexity of each tile is also bounded by a similar constant that is independent of $|T|$. Therefore, when we apply the simplification techniques given in Section~\ref{sec:transform-to-simple-stam}, the scale-factor increase will be a constant independent of $|T|$. Hence, the tile set $U_{\tau}$ can be $2$-simplified with only a constant sized increase in scale-factor, and the scale factor for $\mathcal{U}_{\mathcal{T}}$ simulating $\mathcal{T}'$ remains to be $O(|T'|^4 \log(|T'|^2))$ for $\tau=1$.

\subsection{Representation function and correctness of simulation}
The proof of this is similar to the proof in Section~\ref{sec:3D2HAMIUProof}.

\section{Conclusion}

We have shown how to transform STAM$^+$ systems (at temperature $1$ or $>1$) of arbitrary signal complexity into STAM$^+$ systems which simulate them while having signal complexity no greater than $2$ and $1$, respectively.  However, if the original tile set being simulated is $T$, the scale factor and tile complexity of the simulating system are approximately $O(|T|^2)$.  It seems that these factors cannot be reduced in the worst case, i.e. when a tile of $T$ has a copy of every glue of the tile set on each side, and each copy of each glue on the tile activates every other, yielding a signal complexity of $O(|T|^2)$.  However, whether or not this is a true lower bound remains open, as well as what factors can be achieved for more ``typical'' systems with much lower signal complexity.

A significant open problem which remains is that of generalizing both constructions (the signal reduction and the 3D 2HAM simulation) to the unrestricted STAM.  Essentially, this means correctly handling glue deactivation and possible subassembly dissociation.  While this can't be handled within the standard 3D 2HAM where glue bonds never change or break, it could perhaps be possible if negative strength (i.e. repulsive) glues are allowed (see \cite{DotKarMasNegativeJournal} for a discussion of various formulations of models with negative strength glues).  However, it appears that since both constructions use scaled up macrotiles to represent individual tiles of the systems being simulated, there is a fundamental barrier.  The STAM assumes that whenever two tiles are adjacent, all pairs of matching glues across the adjacent edge which are both currently \texttt{on} will immediately bind (which is in contrast to other aspects of the model, which are asynchronous).  Since both constructions trade the ability of individual tile edges in the STAM to have multiple glues with scaled up macrotiles which distribute those glues across individual tiles of the macrotile edges, it appears to be difficult if not impossible to maintain the correct simulation dynamics.  Basically, a partially formed side of a macrotile could have only a subset of its initially \texttt{on} glues in place, but enough to allow it to bind to another macrotile.  At that point, if glue deactivations are initiated which result in the dissociation of the macrotile before the remaining glues of the incomplete macrotile side assemble, then in the simulating system, those additional glues won't ever bind.  However, in the simulated system they would have.  This results in a situation where, after the dissociation, the simulated system would potentially have additional pending glue actions (initiated by the bindings of the additional glues) which the simulating system would not, breaking the simulation.

Overall, laboratory experiments continue to show the plausibility of physically implementing signalling tiles \cite{Jennifer}, while previous theoretical work \cite{Signals} shows some of their potential, and the results in this paper demonstrate how to obtain much of that power with simplified tiles.  We feel that research into self-assembly with active components has a huge amount of potential for future development, and continued studies into the various tradeoffs (i.e. complexity of components, number of unique component types, scale factor, etc.) between related models provide important context for such research.  We hope that our results help to contribute to continued advances in both theoretical and experimental work along these lines.

\ifabstract
\later{
}
\fi
\iffull
\fi
\bibliographystyle{abbrv} %
\bibliography{tam,experimental_refs,ca}

\newpage

\ifabstract
\else
\appendix
\magicappendix
\fi

\end{document}
